\newcommand\givenbase[1][]{\:#1\lvert\:}
\let\given\givenbase
\newcommand\sgiven{\givenbase[\delimsize]}
\DeclarePairedDelimiterX{\Basics}[1]{[}{]}{\let\given\sgiven #1}
\DeclarePairedDelimiterX{\bkt}[1]{(}{)}{\let\given\sgiven #1}
\DeclarePairedDelimiterX{\sbkt}[1]{[}{]}{\let\given\sgiven #1}
\newcommand{\E}{\operatorname{\mathbb{E}} \Basics}
\DeclarePairedDelimiterX{\expectarg}[1]{\left[}{\right]}{%
  \ifnum\currentgrouptype=16 \else\begingroup\fi
  \activatebar#1
  \ifnum\currentgrouptype=16 \else\endgroup\fi
}
\newcommand{\innermid}{\nonscript\;\delimsize\vert\nonscript\;}
\newcommand{\activatebar}{%
  \begingroup\lccode`\~=`\|
  \lowercase{\endgroup\let~}\innermid 
  \mathcode`|=\string"8000
}
\newcommand{\bigO}{\ensuremath{\Omega}}
\DeclareMathOperator{\Tr}{Tr}
\DeclareMathOperator{\T}{\mathbb{T}}
\newcommand{\RN}[1]{%
  \textup{\uppercase\expandafter{\romannumeral#1}}%
}
\newcommand{\bs}{\mathbf}
\begin{document}

\title{Delay-Optimal Scheduling for Queueing Systems with Switching Overhead}
\author{Ping-Chun Hsieh \and I-Hong Hou \and Xi Liu}

\institute{Ping-Chun Hsieh \and I-Hong Hou \and Xi Liu \at
Department of ECE, Texas A\&M University, College Station, Texas 77843-3128, USA. \\
\email: \{pingchun.hsieh, ihou, xiliu\}@tamu.edu
}
\date{Received: date / Accepted: date}
\maketitle

\begin{abstract}
We study the scheduling polices for asymptotically optimal delay in queueing systems with switching overhead. Such systems consist of a single server that serves multiple queues, and some capacity is lost whenever the server switches to serve a different set of queues. The capacity loss due to this switching overhead can be significant in many emerging applications, and needs to be explicitly addressed in the design of scheduling policies. For example, in 60GHz wireless networks with directional antennas, base stations need to train and reconfigure their beam patterns whenever they switch from one client to another. Considerable switching overhead can also be observed in many other queueing systems such as transportation networks and manufacturing systems.
While the celebrated Max-Weight policy achieves asymptotically optimal average delay for systems without switching overhead, it fails to preserve throughput-optimality, let alone delay-optimality, when switching overhead is taken into account.
We propose a class of Biased Max-Weight scheduling policies that explicitly takes switching overhead into account. The Biased Max-Weight policy can use either queue length or head-of-line waiting time as an indicator of the system status. We prove that our policies not only are throughput-optimal, but also can be made arbitrarily close to the asymptotic lower bound on average delay.
To validate the performance of the proposed policies, we provide extensive simulation with various system topologies and different traffic patterns. 
We show that the proposed policies indeed achieve much better delay performance than that of the state-of-the-art policy.
\keywords{Delay-optimality \and Scheduling \and Switching overhead \and Max-Weight \and 
Throughput-optimality \and Stability}
\subclass{60K25 \and 68M20 \and 90B22}
\end{abstract}

\section{Introduction}
\label{section: introduction}
Design of scheduling policies is one of the most critical parts in achieving good performance for queueing systems. Based on the system information, such as queue backlog or instantaneous service rate, the server dynamically switches between different scheduling decisions. The delay required for each switch is traditionally assumed to be minimal compared to the service time of each job, and therefore can be neglected. However, for applications that require dynamic tuning or strict safety guarantees, this switching overhead has to be explicitly addressed with caution. Wireless networking on the 60 GHz band \cite{Fan2008,Singh2011,Niu2015}, for example, is inherently featured by the switching overhead. Signal propagation on the 60 GHz band suffers from much more serious attenuation than on the widely-used 2.4 GHz or 5 GHz spectrum. To cope with this increased attenuation, beamforming techniques based on directional antennas have been widely applied to mitigate this problem \cite{Nitsche2014}. However, as stated in \cite{Navda2007}, to reconfigure the beam direction, the beam switching procedure can take up to several hundred microseconds, which is about the transmission time of a packet. While this switching overhead is usually overlooked in most wireless communication networks, this beam-switching latency needs to be explicitly addressed in directional-antenna applications. 

Another example is the traffic signal control for signalized intersections. During each signal phase transition, the traffic signal controller has to undergo a yellow-to-red period and an all-red period to guarantee safety. Without proper scheduling, the switching overhead can greatly reduce the intersection capacity \cite{Ghavami2012}. Existing studies have shown that the conventional fixed time scheduling policy can result in a significant amount of capacity loss. \cite{Allsop1972,Varaiya2013}. Moreover, this switching overhead is expected to be even larger in mixed transportation networks with both human-driven and autonomous vehicles \cite{LeVine2015,Liu2015}. Recently, there has been growing interest in designing scheduling algorithms to achieve maximum traffic throughput in transportation networks \cite{Varaiya2013,Wongpiromsarn2012}. Meanwhile, more traffic-responsive scheduling policies are now applicable due to the recent progress in both vehicle-to-infrastructure and vehicle-to-vehicle communication. Indeed, to approach maximum throughput in real traffic scenarios, the influence of switching overhead on network capacity has to be incorporated and overcome in scheduling design.

The effect of switching overhead is also critical in many other applications, such as multi-thread processors \cite{David2007}, passive optical networks \cite{Mcgarry2008}, and manufacturing systems \cite{Sharifnia1991}. As a classic example, polling system is one widely studied queueing model that incorporates switching overhead. In such systems, the server serve the queues in cyclic order, and the server requires finite time to go from one queue to the next queue. The polling system model has been applied in a variety of applications, such as communication networks \cite{Hanbali2008} and manufacturing systems \cite{Sharifnia1991}. The survey of the polling model and summary of the theoretical results can be found in \cite{Takagi1988,Levy1990,Takagi1997,Vishnevskii2006}. Despite the progress in the study of the polling model, there are still a vast variety of applications that require multiple queues to be served simultaneously while the polling model assumes only one queue is served at each time.

For queueing systems that allow simultaneous service of multiple queues, a class of scheduling policies named the Max-Weight policies, introduced by Tassiulas and Ephremides in \cite{Tassiulas1992}, has been shown to achieve optimal throughput for the multi-hop queueing systems. Later on, there are several extended works of the Max-Weight policies, such as \cite{Eryilmaz2005,Tassiulas1993}. In addition to throughput-optimality, the Max-Weight policy has also been shown to achieve good delay performance in different queueing systems \cite{Eryilmaz2012,Neely2009,Kar2012,Le2009,Gupta2010}. To obtain delay bound for the Max-Weight scheduling, one common technique is to set the drift of a Lyapunov function to zero, as in \cite{Neely2009,Kar2012,Le2009,Gupta2010,Eryilmaz2012}. Besides, Eryilmaz and Srikant \cite{Eryilmaz2012} derives an asymptotically tight upper bound on queue length for the Max-Weight policy. However, none of the above literature considers switching overhead incurred by the transition between different schedules. In fact, Max-Weight policy fails to preserve throughput-optimality when the switching overhead is considered. The reason is that Max-Weight policy may suffer from excessive switching and therefore waste a significant portion of time on switching.


To remedy the instability issue of Max-Weight policy, Armony and Bambos \cite{Armony2003} propose cone policies and batch policies and prove that the two families of policies are throughput-optimal with non-zero switching overhead.
Later on, Hung and Chang \cite{Hung2008} propose an extended version of Dynamic Cone policy to reduce the complexity and improve delay performance of the conventional cone policies.
Chan \emph{et al.} \cite{Chan2016} proposes the Maximum-Weight-Matching with hysteresis (MWM-H) policy to achieve optimal throughput with deterministic service rates as well as to properly distribute the queue backlog when the queues become overloaded. Besides, Celik \emph{et al.} \cite{Celik2016} propose the Variable Frame-Based Max-Weight (VFMW) policy which incorporate frame structure to avoid frequent switching, and show that the policy preserves throughput-optimality with nonzero switching overhead for interference networks. 
However, all of the above studies focus only on throughput-optimality and do not provide any theoretical bound on delay performance. 
In this paper, we regard the VFMW policy in \cite{Celik2016} as the state-of-the-art policy and the VFMW policy serves as a reference for the comparison of delay performance. 
For the delay analysis of queueing systems with switching overhead, Perkins and Kumar \cite{Perkins1989} propose a class of exhaustive policies to achieve asymptotically optimal delay for single-server queueing systems with deterministic arrival and service processes. However, it is not even clear whether the exhaustive-type policies can also achieve optimal throughput in systems where multiple queues can be served simultaneously. 

In this paper, we propose a class of scheduling policies called \emph{Biased Max-Weight} (BMW) policies to achieve both throughput-optimality and delay-optimality in the presence of switching overhead. The weight of each schedule can be defined by either the queue backlog or the waiting time of the head-of-line (HOL) job. Like the Max-Weight policy, the BMW policy still makes scheduling decisions based on the weight of each schedule, but it has a bias towards the current schedule. In other words, the server makes a switch only when the new schedule is better than the previous schedule by an enough margin. The BMW policies share similar design philosophy with the VFMW policy and the MWM-H policy to avoid excessive switching according to the current queue status. Different from those prior works, we are able to characterize the queue length statistics and therefore the average delay. To achieve this goal, we first prove that the BMW scheduling policy is throughput-optimal by showing that the underlying Markov chain is positive recurrent. We further prove that the BMW policy achieves the asymptotically tight queue length bound by choosing parameter used by BMW arbitrarily close to zero. Through extensive simulation, we show that the BMW policy still provides good delay performance when the server may serve multiple queues simultaneously, subject to some conflicting constraints imposed by the system.

In addition to average delay, the per-queue delay is also crucial for many applications, such as transportation networks. We further compare the queue-length-based BMW (Q-BMW) and the waiting-time-based BMW (W-BMW) in terms of per-queue average delay. Under the Q-BMW policy, each queue has about the same average queue length. By Little's law, the queue with a smaller mean arrival rate is expected to experience a larger average delay. On the other hand, the W-BMW policy follows the same switching scheme as the Q-BMW policy, but with the HOL waiting times as the state variables. Since the HOL waiting time serves as a good indicator of per-queue average delay, the W-BMW can avoid holding up the jobs for too long in the queues with lighter incoming traffic. Simulation results show that the W-BMW policy indeed achieves better fairness than the Q-BMW policy without sacrificing system-wide average delay.

The rest of the paper is organized as follows. Section \ref{section: model} describes the system model as well as the notations in use. Section \ref{section: prelim} provides mathematical preliminaries of queue stability and delay-optimality and also describes the fundamental dilemma in achieving delay-optimality. Section \ref{section: throughput-optimal Q-BMW} describes the throughput-optimality of the Q-BMW policy. Then, Section \ref{section: delay-optimal Q-BMW} provides an asymptotic upper bound on average delay under Q-BMW policy. For the W-BMW policy, Section \ref{section: W-BMW} shows that it achieves throughput-optimality and the same delay upper bound as the Q-BMW policy. Section \ref{section: simulation} provides extensive simulation results of queueing networks with different constraints. Finally, Section \ref{section: conclusion} concludes the paper.  



\section{Notation and System Models}
\label{section: model}
\subsection{Network Model}
We consider a time-slotted switched queueing system with one centralized server and $N\in \mathbb{N}_{0}$ ($\mathbb{N}_{0}$ is a shorthand for $\mathbb{N}\cup \{0\}$) parallel queues, which are indexed by $\mathcal{N}=\{1,2,...,N\}$. Time slots are indexed by $t\in \mathbb{N}_{0}$. Each queue is associated with an exogenous traffic stream. Arriving jobs are first buffered in the queue and leave the system right after the service is completed. The server may be able to serve multiple queues simultaneously. A set of queues that can be served simultaneously is called a \emph{feasible schedule}.
We represent each feasible schedule by an $N$-dimensional binary vector $\bs{I}=(I_1, I_2, \cdots, I_N)$, where $I_i$ is the indicator function of whether queue $i$ is included in the schedule. Throughout the paper, we focus on \emph{maximal} feasible schedules: a feasible schedule is \emph{maximal} if no additional queue can be added to the schedule. With a little abuse of notation, we use $\lvert \bs{I}\rvert$ to denote the number of queues included in the schedule $\bs{I}$. Suppose there are $J$ maximal feasible schedules denoted by $ \mathcal{I}=\{\bs{I}^{(1)},\cdots, \bs{I}^{(J)}\}$. 
In each time slot $t$, the server selects a maximal feasible schedule $\bs{I}(t)\in \mathcal{I}$ according to its scheduling policy $\eta$. 

When the server switches from one schedule to serving another schedule, it needs to spend $T_s\in \mathbb{N}$ slots on preparing for the transition before working on the new schedule. The delay $T_s$ reflects the switching overhead which is usually overlooked in the queueing systems but needs to be explicitly addressed in many applications, such as directional-antenna systems as well as transportation systems described in Section \ref{section: introduction}. Therefore, there are two operation modes for the server: {ACTIVE} mode and {SWITCH} mode. We let $M(t)$ be the indicator function of whether the server is in ACTIVE mode at time $t$. We use $t_k$ to denote the time when the server makes a switch for the $k$-th time, and set $t_0=0$. The time between two consecutive switches is called an \emph{interval}. Let $T_k:=t_{k+1}-t_{k}$ denote the length of the $k$-th interval, for all $k\in \mathbb{N}_0$. Note that $T_k$ reflects how frequently the server is switching between different schedules.

Throughout the paper, we use $(\mathcal{N}, \mathcal{I}, T_s)$ to denote a queueing system described in the above.

\subsection{Traffic Model}


We model the arrival process $\{A_i(t)\}_{t}$ of each queue $i$ by a sequence of independent and identically distributed (i.i.d.) non-negative random variables $A_i(t)\in \mathbb{N}_0$ with $\E{A_i(t)}=\lambda_i$ for all $t\geq 0$.
We further assume that $A_i(t)$ is upper bounded, i.e. there exists a finite constant $A_{\max}>0$ such that $A_i(t)\leq A_{\max}$ for every $t\geq 0$. Similarly, the service process $\{S_i(t)\}_{t}$ of each queue $i$ is modeled by a sequence of i.i.d. non-negative random variable $S_i(t)\in \mathbb{N}_0$ with $\E{S_i(t)}=\mu_i$ and $S_i(t)\leq S_{\max}$, for all $t\geq 0$. We also assume that the server does not collect any information about the instantaneous service rates. The reason is that with non-zero switching overhead the server is not able to exploit the time-varying service when the service rates are independent across time. Moreover, let $\rho_i:=\lambda_i/\mu_i$ be the \emph{normalized traffic load} of queue $i$. For every queue $i$, each arriving job is labeled with a unique sequence number. For each queue $i$, the sequence numbers start from 1 and would indicate the arrival order of the jobs. At each time $t$, we use $\varphi_{i}(t)$ to denote the sequence number of the latest completed job of queue $i$ before time $t$. Let $\{V_{i}(m)\}_{m\geq 0}$ be the inter-arrival time process of queue $i$, where $V_{i}(m)$ denote the inter-arrival time between the two jobs with sequence numbers $m$ and $m+1$ in queue $i$. 

To simplify notations in later sections, we use boldface letters $\bs{A}, \bs{S}, \bm{\lambda}, \bm{\mu}$ and $\bm{\rho}$ to denote the $N$-dimensional vectors of the arrivals, services, mean arrival rates, mean service rates, and normalized traffic loads, respectively. We also use $\lambda_{\max}$, $\lambda_{\min}$, $\mu_{\max}$ and $\mu_{\min}$ as the shorthands of the maximum mean arrival rate, minimum mean arrival rate, maximum mean service rate, and minimum mean service rate among all the queues, respectively.

\subsection{Queue Dynamics}
Let $Q_i(t)\in \mathbb{N}_{0}$ be the number of jobs buffered in queue $i$ at time $t$. We assume $Q_i(0)=0$ for every queue $i$. Define $\hat{S}_i(t):=\min\{Q_i(t), M(t)I_i(t)S_i(t)\} $ to be the amount of service actually used by each queue $i$ at time $t$.
Throughout this paper, we consider the store-and-forward queueing model, i.e., for each queue $i$,
\begin{equation}
Q_i(t+1) = Q_i(t) - \hat{S}_i(t) + A_i(t), \hspace{6pt} \forall t\geq 0.
\end{equation}
For simplicity, we use $\bs{Q}(t)=(Q_1(t), ..., Q_N(t))\in \mathbb{N}_{0}^{N}$ to denote the queue length status of the system at time $t$. Moreover, in our model the queue length process $\{\bs{Q}(t)\}_{t\geq 0}$ form a discrete-time Markov chain on a countable state space $\mathbb{N}_{0}^{N}$. The total queue length of the system can be written as $\bs{1}^{T}\bs{Q}(t)$, where $\bs{1}^T$ denotes the $N$-dimensional all-ones row vector.

Let $W_i(t)\in \mathbb{N}_{0}$ be the waiting time of the head-of-line (HOL) job of queue $i$ and $\bs{W}(t)=(W_1(t),...,W_N(t))\in \mathbb{N}_{0}^{N}$ be the corresponding HOL waiting time vector. 
Then, the HOL waiting time can be updated as follows:
\begin{equation}
W_i(t+1)= \max\left\{0, \bkt[\bigg]{W_i(t)-\sum_{j=1}^{\hat{S}_i(t)} V_{i}(\varphi(t)+j)}\right\}.
\end{equation}
Similar to the queue length process, the HOL waiting time process $\{\bs{W}(t)\}_{t\geq 0}$ also form a discrete-time Markov chain on a countable state space $\mathbb{N}_{0}^{N}$. 

\section{Preliminaries}
\label{section: prelim}
\subsection{Capacity Region and a Lower-Bound for Delay}
In preparation for the discussion of delay performance, we first introduce the fundamental concepts of queue stability, throughput-optimality, and delay-optimality. 
First, we formally introduce one commonly used definition of queue stability.
\begin{definition}(\emph{Strong stability})
The queueing system is \emph{strongly stable} under a scheduling policy if
\begin{equation}
\limsup_{t\rightarrow \infty}\frac{\sum_{\tau=0}^{t-1}\E[\big]{\bs{1}^T \bs{Q}(\tau)}}{t}< \infty. \hspace{3pt}\qed
\end{equation}
\end{definition}

Based on the above definitions, we can classify the arrival rate vectors by queue stability and define the capacity region.
\begin{definition}(\emph{Admissible arrival rates})
An arrival rate vector $\bs{\lambda}=(\lambda_1,..., \lambda_N)$ is said to be \emph{admissible} if there exists a scheduling policy under which the queueing system is strongly stable.  $\qed$
\end{definition}

\begin{definition}(\emph{Capacity region})
The \emph{capacity region} $\mathrm{\Lambda}\subset \mathbb{R}_{+}^{N}$ of the system is defined as the closure of the set that consists of all the admissible arrival rate vectors.  $\qed$
\end{definition}

The following lemma shows that the capacity region can be fully characterized by the normalized traffic load vector and the maximal feasible schedules.
\begin{lemma}(Characterization of capacity region)
For any queueing system described in Section \ref{section: model}, given the mean service rate vector $\bm{\mu}$, the capacity region can be characterized as
\begin{equation}
\mathrm{\Lambda} = \left\{\bm{\lambda} \hspace{3pt} \middle\lvert \hspace{3pt}\exists \bm{\beta}\geq 0 \hspace{3pt}\textrm{with}\hspace{3pt} \sum_{j=1}^{J}\beta_{j}\leq 1 \hspace{3pt}\textrm{such that}\hspace{3pt} \bm{\rho}\leq \sum_{j=1}^{J}\beta_{j}\bs{I}^{(j)}\right\}. \hspace{3pt} \qed
\end{equation}
\end{lemma}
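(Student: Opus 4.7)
The plan is to establish both inclusions of the claimed set equality; write $\mathcal{R}$ for the right-hand set. Since $\mathcal{R}$ is closed (it is the image of the compact simplex $\{\bm{\beta}\ge 0:\sum_j\beta_j\le 1\}$ under a continuous map, intersected with a closed coordinate-wise inequality) and $\Lambda$ is by definition a closure, it suffices to verify (i) every admissible $\bm{\lambda}$ lies in $\mathcal{R}$, and (ii) every $\bm{\lambda}$ satisfying the RHS inequalities strictly is admissible. Passing to closures then yields the lemma.

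For the necessity direction, I would begin from an admissible $\bm{\lambda}$, fix any policy $\eta$ under which the system is strongly stable, and for each schedule $j$ introduce $\phi_j(t)\in\{0,1\}$ indicating that at slot $t$ the server is in ACTIVE mode on $\bs{I}^{(j)}$. Because $\sum_j\phi_j(t)\le 1$ at every $t$, the limits $\beta_j:=\liminf_{T\to\infty}\tfrac{1}{T}\sum_{\tau<T}\E{\phi_j(\tau)}$ satisfy $\bm{\beta}\ge 0$ and $\sum_j\beta_j\le 1$. Telescoping the queue dynamics and invoking strong stability ($\E{Q_i(T)}/T\to 0$) shows that $\lambda_i$ equals the long-run average per-slot service received by queue $i$. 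The bound $\E{\hat{S}_i(\tau)}\le \mu_i\sum_j \E{\phi_j(\tau)}\,I_i^{(j)}$, which uses that $S_i(t)$ is i.i.d.\ with mean $\mu_i$ and independent of the schedule decision at time $t$ (the paper assumes the server sees no instantaneous service information), then gives $\lambda_i\le\mu_i\sum_j\beta_j I_i^{(j)}$, i.e., $\bm{\rho}\le\sum_j\beta_j\bs{I}^{(j)}$, so $\bm{\lambda}\in\mathcal{R}$.

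For sufficiency, I would fix $\bm{\lambda}$ with $\bm{\rho}<\sum_j\beta_j\bs{I}^{(j)}$ and $\sum_j\beta_j<1$ for some $\bm{\beta}\ge 0$, choose a frame length $F$ large enough that $JT_s\le F(1-\sum_j\beta_j)$, and construct the periodic policy that, in every frame of $F$ slots, allocates $\lfloor\beta_j F\rfloor$ ACTIVE slots to each $\bs{I}^{(j)}$, inserts $J$ SWITCH blocks of $T_s$ slots between consecutive schedules, and leaves the remaining slots idle. The long-run fraction of slots spent actively on $\bs{I}^{(j)}$ is $\lfloor\beta_j F\rfloor/F\to\beta_j$ as $F\to\infty$, so each queue $i$ receives service at a long-run average rate exceeding $\lambda_i$ by a strictly positive margin $\epsilon_i$. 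A Foster--Lyapunov argument with $L(\bs{Q})=\sum_i Q_i^2$, whose drift is computed over a full frame so that SWITCH and ACTIVE slots are accounted for together, then yields negative drift outside a bounded set and hence strong stability of the joint Markov chain.

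The hard part is the sufficiency direction: because SWITCH slots provide no service, the per-slot drift is not negative and one must estimate an $F$-step drift. The parameter $F$ must simultaneously be large enough that the per-frame switching loss $JT_s/F$ is absorbed by the slack $1-\sum_j\beta_j$, and that the floor-induced per-frame loss of at most one active slot per schedule is dominated by the strict slack in $\sum_j\beta_j\bs{I}^{(j)}-\bm{\rho}$. Once these bookkeeping items are balanced, the remaining Lyapunov computation is standard, and the necessity direction is routine given the time-fractions $\beta_j$.
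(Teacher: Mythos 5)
Your proposal is correct in outline, but note that the paper does not actually prove this lemma at all: its ``proof'' is a one-line citation to Theorem~1 of Celik--Modiano~\cite{Celik2012}. What you have written is essentially a self-contained reconstruction of that cited result, and it follows the standard template (time-average accounting for necessity; a frame-based static-split policy plus a multi-step Foster--Lyapunov drift for sufficiency, with the frame length chosen so that the $JT_s$ switching loss and the floor-rounding loss are both absorbed by the slack $1-\sum_j\beta_j$ and $\sum_j\beta_j\bs{I}^{(j)}-\bm{\rho}$). Two small points deserve care if you write this out in full. First, in the necessity direction, strong stability only bounds the running average $\frac{1}{T}\sum_{\tau<T}\E{\bs{1}^T\bs{Q}(\tau)}$; it does not give $\E{Q_i(T)}/T\to 0$ for the full sequence, so you should extract a subsequence along which $\E{Q_i(T)}/T\to 0$ and, by compactness of the simplex, along which the empirical time-fractions of the schedules converge simultaneously, and define $\bm{\beta}$ as that subsequential limit (the inequality $\sum_j\beta_j\le 1$ survives since the per-slot fractions sum to at most one). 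Second, your closure step for sufficiency implicitly needs the points satisfying the inequalities strictly to be dense in $\mathcal{R}$; this requires a brief perturbation argument (scale $\bm{\lambda}$ down and, for any queue $i$ with $\rho_i=0$, shift a little mass onto a schedule containing $i$), which is routine but should be stated. With those repairs your argument is complete and, unlike the paper, does not outsource the result.
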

\begin{proof}
This is a direct result of Theorem 1 in \cite{Celik2012}. $\qed$
\end{proof}

Here, we introduce the notion of throughput-optimality:
\begin{definition}(\emph{Throughput-optimality})
A scheduling policy $\eta$ is said to be \emph{throughput-optimal} if for any interior point $\bm{\lambda}$ of $\mathrm{\Lambda}$, the system is strongly stable under $\eta$.  $\qed$
\end{definition}

Given $\bm{\lambda}$, $\bm{\mu}$, and $\mathcal{I}$, we can further describe the traffic load of the whole system. 
\begin{definition}(\emph{Utilization factor})
Given the queueing system described in Section \ref{section: model}, we define the \emph{utilization factor} as
\begin{equation}
\beta^*:=\min_{\bm{\beta}:\sum_{j=1}^{J}\beta_j \bs{I}^{j}\geq \bm{\rho}} { \bs{1}^T\bm{\beta}}. \label{equation:epsilon star}
\end{equation}
For convenience, we also define $\epsilon^*:=1-\beta^*$, which reflects the "distance" from the boundary of the capacity region. $\qed$
\end{definition}

From the study in \cite{Eryilmaz2012}, the average delay of a queueing system is closely related to $\epsilon^{*}$. A useful lower bound for the average delay is provided here as an easy reference.
\begin{lemma}(Lower bound on queue length without switching overhead)
\label{lemma: general delay lower bound without Ts}
Given a stable queueing system $\mathcal{Q}=(\mathcal{N}, \mathcal{I}, T_s)$ described in Section \ref{section: model} with queue length process $\{\bs{Q}(t)\}_{t}$ and $T_s = 0$, under any scheduling policy $\eta$, the expected total queue length in steady state scales as 
\begin{equation}
\label{equation: general delay lower bound}
\E{\bs{1}^{T}\bs{Q}(t)}=\mathrm{\Omega}(1/\epsilon^{*}). \qed
\end{equation} 
\end{lemma}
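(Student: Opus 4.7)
The plan is to combine LP duality with a quadratic Lyapunov drift argument, which is the canonical route for lower bounds of the $\Omega(1/\epsilon^*)$ flavor. First I would apply strong LP duality to the linear program that defines $\beta^*$: its dual maximizes $\bs{\rho}^T \bs{y}$ subject to $\bs{y}\geq 0$ and $\bs{I}^{(j)T}\bs{y}\leq 1$ for every schedule $\bs{I}^{(j)}\in\mathcal{I}$. Because the primal is feasible for $\bs{\lambda}$ strictly interior to $\Lambda$, strong duality produces a nonnegative vector $\bs{c}$ with $\bs{c}^T\bs{I}^{(j)}\leq 1$ for all $j$ and $\bs{c}^T\bs{\rho}=\beta^*=1-\epsilon^*$.

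Next I would define the weighted workload $Z(t):=\sum_{i}(c_i/\mu_i)\,Q_i(t)$ and analyze its one-step increment $X(t):=Z(t+1)-Z(t)=\sum_i(c_i/\mu_i)\bigl(A_i(t)-\hat S_i(t)\bigr)$. By construction, the unconditional arrival rate into $Z$ is $\bs{c}^T\bs{\rho}=1-\epsilon^*$. On the service side, $T_s=0$ forces $M(t)\equiv 1$, so $\hat S_i(t)\leq I_i(t)S_i(t)$ and independence of $\bs{S}(t)$ from $\bs{Q}(t)$ yields
\begin{equation*}
\E{\textstyle\sum_i(c_i/\mu_i)\hat S_i(t)\sgiven\bs{Q}(t)}\leq\textstyle\sum_i(c_i/\mu_i)\,I_i(t)\mu_i=\bs{c}^T\bs{I}(t)\leq 1.
\end{equation*}
Combining gives the conditional drift bound $\E{X(t)\sgiven\bs{Q}(t)}\geq-\epsilon^*$.

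Then I would invoke the quadratic drift identity in steady state. Since $Z(t)$ has a stationary distribution (inherited from the assumed strong stability and bounded per-slot changes), $\E{Z(t+1)^2}=\E{Z(t)^2}$, so
\begin{equation*}
2\E{Z(t)X(t)}+\E{X(t)^2}=0.
\end{equation*}
Using $Z(t)\geq 0$ together with $\E{X(t)\sgiven\bs{Q}(t)}\geq-\epsilon^*$ gives $\E{Z(t)X(t)}\geq-\epsilon^*\E{Z(t)}$, and hence $\epsilon^*\E{Z(t)}\geq\tfrac12\E{X(t)^2}$. In stationarity $\E{X(t)}=0$, so $\E{X(t)^2}=\mathrm{Var}(X(t))$, which is bounded below by a positive constant (coming from the variance of $\sum_i(c_i/\mu_i)A_i(t)$, which is independent of the scheduling policy and of $\epsilon^*$). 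This yields $\E{Z(t)}=\Omega(1/\epsilon^*)$, and finally $\bs{1}^T\bs{Q}(t)\geq Z(t)/\max_i(c_i/\mu_i)$ lifts the bound to the total queue length.

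The principal obstacles are two. First, legitimately writing $\E{Z(t+1)^2}=\E{Z(t)^2}$ requires a finite stationary second moment of $Z(t)$; this has to be justified from the strong-stability hypothesis together with the bounded-increment assumptions on $\bs{A}$ and $\bs{S}$ (typically via a Foster–Lyapunov moment bound). Second, one must ensure that the positive lower bound on $\E{X(t)^2}$ does not itself vanish as $\epsilon^*\to 0$; the clean way is to isolate the arrival variance, since arrivals are i.i.d.\ and independent of the scheduler, so that $\mathrm{Var}(X(t))\geq\mathrm{Var}(\sum_i(c_i/\mu_i)A_i(t))$ is a fixed positive constant as $\bs{\lambda}$ moves toward the boundary of $\Lambda$.
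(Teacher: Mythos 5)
Your proposal is correct in substance but takes a genuinely different route from the paper, whose entire proof of this lemma is a citation of Lemma 6 in Eryilmaz and Srikant: there the bound is obtained by a stochastic-dominance reduction to a hypothetical single-queue (bottleneck) system followed by a known single-queue lower bound, whereas you argue directly in the multi-queue system. Your two key moves are sound: extracting dual-optimal weights $\bs{c}\ge 0$ with $\bs{c}^T\bs{I}^{(j)}\le 1$ for all $j$ and $\bs{c}^T\bm{\rho}=1-\epsilon^*$, and then setting the stationary second-moment drift of the weighted workload $Z(t)$ to zero to obtain $2\epsilon^*\E{Z(t)}\ge \E{X(t)^2}$. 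Moreover, since $A(t)$ is fresh and independent of $(\bs{Q}(t),\bs{I}(t),\bs{S}(t))$, the arrival and service parts of $X(t)$ are independent, so $\mathrm{Var}(X(t))\ge \mathrm{Var}\bigl(\sum_i (c_i/\mu_i)A_i(t)\bigr)$ as you claim. What your approach buys is a self-contained proof that makes the mechanism transparent: the lower bound is the stationary fluctuation of the dual-weighted workload divided by the drift slack $\epsilon^*$. What it costs is that you must supply two hypotheses yourself rather than inherit them from the cited lemma: (i) finiteness of the stationary second moment of $Z(t)$, which you correctly flag and which follows from strong stability plus bounded increments via a standard moment argument; and (ii) strict positivity of the arrival variance seen by $\bs{c}$, uniformly as $\bm{\lambda}$ approaches the boundary. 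Point (ii) deserves one more sentence than you give it: $\bs{c}$ depends on $\bm{\lambda}$, so you should fix the boundary direction, observe that $c_i\le 1$ for every $i$ (each queue belongs to some maximal schedule) so the weights stay bounded, and argue that the dual optimum has a nonzero limit supported on a queue with non-degenerate arrivals. This variance condition is not stated in the lemma but is genuinely necessary (the bound fails for fully deterministic traffic) and is likewise an assumption of the result the paper cites, so your proof is no weaker than the paper's.
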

\begin{proof}
This is a direct result of Lemma 6 in \cite{Eryilmaz2012}. $\qed$
\end{proof}

\begin{remark}
By Little's law, (\ref{equation: general delay lower bound}) implies that the total average delay also scales as $\mathrm{\Omega}(1/\epsilon^{*})$ for systems without switching overhead. 
\end{remark}
\begin{remark}
In \cite{Eryilmaz2012}, the lower bound (\ref{equation: general delay lower bound}) is obtained by constructing a hypothetical single-queue system from the original multi-queue system, whose total queue length is larger in stochastic ordering than that of the constructed single-queue system. Following this argument, we can also derive a similar lower bound on expected queue length for the queueing systems with switching overhead.
\end{remark}

\begin{corollary}(Lower bound on queue length with switching overhead)
\label{corollary: delay lower bound with overhead}
Given a stable queueing system $\mathcal{Q}=(\mathcal{N}, \mathcal{I}, T_s)$ described in Section \ref{section: model} with queue length process $\{\bs{Q}(t)\}_{t}$ and $T_s > 0$, under any scheduling policy $\eta$, the expected total queue length in steady state is 
\begin{equation}
\E{\bs{1}^{T}\bs{Q}(t)}=\mathrm{\Omega}(1/\epsilon^{*}). \hspace{3pt} \qed
\end{equation} 
\end{corollary}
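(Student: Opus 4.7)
The plan is to reduce the switching-overhead case to the overhead-free setting of the preceding lemma via a sample-path coupling, exploiting the intuition that switching overhead can only enlarge queues. Fix an arbitrary scheduling policy $\eta$ under which the $(\mathcal{N},\mathcal{I},T_s)$ system is strongly stable, with queue-length process $\{\bs{Q}^{T_s>0}(t)\}$. I would construct a companion policy $\eta'$ on the same network but with $T_s=0$ that mimics the schedule choices of $\eta$; because $\eta'$ pays no switching cost, it would start using each newly chosen schedule immediately in what would otherwise have been a SWITCH slot. I then couple the two systems by feeding them identical sample paths for the arrival and service sequences $\{A_i(t)\}$ and $\{S_i(t)\}$, producing coupled queue-length processes $\bs{Q}^{T_s=0}(t)$ and $\bs{Q}^{T_s>0}(t)$.

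Next, I would show by induction on $t$ that $Q_i^{T_s>0}(t)\geq Q_i^{T_s=0}(t)$ componentwise. The base case holds since both systems start empty. For the inductive step, both systems experience the same arrival vector $\bs{A}(t)$; in an ACTIVE slot of the $T_s>0$ system the served amounts match those of the companion; in a SWITCH slot the $T_s>0$ system serves zero across all queues while $\eta'$ may already serve the new schedule, so $\hat{S}_i^{T_s=0}(t)\geq \hat{S}_i^{T_s>0}(t)$. The componentwise inequality therefore propagates. Summing over $i$, taking expectations, and passing to the steady-state limit gives $\E{\bs{1}^T\bs{Q}^{T_s>0}(t)}\geq \E{\bs{1}^T\bs{Q}^{T_s=0}(t)}$. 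Since $\epsilon^*$ depends only on $\bm{\lambda}$, $\bm{\mu}$, and $\mathcal{I}$ (not on $T_s$), invoking the overhead-free Lemma \ref{lemma: general delay lower bound without Ts} on the coupled $T_s=0$ system yields $\E{\bs{1}^T\bs{Q}^{T_s=0}(t)}=\mathrm{\Omega}(1/\epsilon^*)$, which transfers to $\bs{Q}^{T_s>0}(t)$ and establishes the corollary.

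The main subtlety demanding care is ensuring that the overhead-free lemma is actually applicable to the coupled $T_s=0$ system, which in turn requires that the companion $\eta'$ strongly stabilizes it. This is automatic: strong stability of $(\mathcal{N},\mathcal{I},T_s)$ under $\eta$, combined with the componentwise coupling $\bs{Q}^{T_s=0}(t)\leq \bs{Q}^{T_s>0}(t)$ established above, implies strong stability under $\eta'$ a fortiori, so the steady-state expectation is finite and the $\mathrm{\Omega}(1/\epsilon^*)$ bound is meaningful. I do not anticipate any real technical obstacle beyond this bookkeeping and the verification that $\eta'$ only ever uses schedules from $\mathcal{I}$, which is inherited directly from $\eta$'s choices; the argument parallels the construction hinted at in the remark preceding the corollary.
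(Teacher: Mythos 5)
Your proof is correct, but it takes a different route from the paper's. The paper follows the remark preceding the corollary literally: it first collapses the multi-queue system with overhead into a hypothetical \emph{single-queue} system $\mathcal{Q}'$ that retains the switching overhead (re-running the construction of Lemma 6 in \cite{Eryilmaz2012}), then strips the overhead from that single queue to get $\mathcal{Q}''$, and finally lower-bounds $\mathcal{Q}''$; the two stochastic orderings are asserted rather than constructed. You instead perform the reduction in the opposite order and at the level of the full multi-queue network: a pathwise coupling with a mimicking companion policy $\eta'$ on the same network with $T_s=0$, a componentwise induction using the monotonicity of $q\mapsto\max\{q-s,0\}$ in $q$ and its antitonicity in the offered service $s$ (with $s=0$ in SWITCH slots), and then Lemma~\ref{lemma: general delay lower bound without Ts} applied as a black box. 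Your version is more self-contained and makes the dominance explicit, and your observation that $\epsilon^*$ is independent of $T_s$ and that stability of $\eta'$ is inherited from $\eta$ closes the loop correctly. The one point worth a sentence in a polished write-up: $\eta'$ selects schedules as a function of the coupled $T_s>0$ trajectory rather than of the $T_s=0$ system's own state, so you should note that the lower bound of Lemma~\ref{lemma: general delay lower bound without Ts} holds pathwise for \emph{any} sequence of schedules drawn from $\mathcal{I}$ (as the Eryilmaz--Srikant single-queue construction does), so this broader notion of ``policy'' causes no difficulty. This is a technicality, not a gap.
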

\begin{proof}
Given a queueing system $\mathcal{Q}$, by following the same procedure as in Lemma 6 of \cite{Eryilmaz2012}, we can construct a single-queue system $\mathcal{Q}^{'}$ with switching overhead $T_s$. Then we know the queue length process of $\mathcal{Q}$ is larger in stochastic ordering than the queue length process of $\mathcal{Q}^{'}$. Next, we can construct another single-queue system $\mathcal{Q}^{''}$ from $\mathcal{Q}^{'}$ but with $T_s=0$. Then we know the queue length process of $\mathcal{Q}^{'}$ is larger in stochastic ordering than the queue length process of $\mathcal{Q}^{''}$.
By Lemma \ref{lemma: general delay lower bound without Ts}, we have $\E{\bs{1}^{T}\bs{Q}(t)}=\mathrm{\Omega}(1/\epsilon^{*})$. $\qed$
\end{proof}

Based on Corollary \ref{corollary: delay lower bound with overhead} , we can define asymptotic delay-optimality as follows.
\begin{definition}(\emph{Delay-optimality})
A scheduling policy is \emph{delay-optimal} if in steady state, the total average queue length satisfies that
\begin{equation}
\E{\bs{1}^{T}\bs{Q}(t)}=\textrm{O}(1/\epsilon^{*}).
\end{equation}
In other words, $\textrm{O}(1/\epsilon^{*})$ is an asymptotically tight delay upper bound. $ \qed$
\end{definition}


\subsection{The Variable-Frame Max-Weight Scheduling Policy}
\label{subsection: VFMW}
As discussed in Section \ref{section: introduction}, despite the progress in throughput-optimal scheduling for systems with switching overhead, it is still not clear how to achieve optimal delay performance for such queueing systems with stochastic arrival and service processes. In the prior work \cite{Celik2016}, the Variable-Frame Max-Weight (VFMW) policy has been proposed  to achieve throughput optimality queueing systems with switching overhead. Under the VFMW policy, we need to determine a sublinear function for calculating frame size such that the server stays with the same schedule till the end of the frame. While the frame size function has no effect on throughput-optimality (as long as it is sublinear), it can indeed greatly affect the delay performance. In the example discussed in \cite{Celik2016}, the frame function is chosen to be $\bkt[\big]{\sum_{i}Q_i(t)}^{\alpha}$, where $\alpha$ is between 0 and 1. Besides, \cite{Celik2016} also suggests that $\alpha$ should be chosen as close to 1 as possible based on their simulation results. However, the $\alpha$ value with the smallest delay can actually differ in different scenarios. This phenomenon can be easily observed through simulation as follows.

For example, we consider a single-sever system of 4 queues with Bernoulli arrival and service processes. The switching overhead $T_s=1$. We provide simulation results of two scenarios with different mean arrival rates and mean service rates in Figure \ref{figure:VFMW power topology 1} and \ref{figure:VFMW power topology 2}. 
\begin{itemize}
\item Scenario \RN{1}: $\bm{\lambda}=(0.119, 0.119, 0.119, 0.119)$, $\bm{\mu}=(0.5, 0.5, 0.5, 0.5)$
\item Scenario \RN{2}: $\bm{\lambda}=(0.08, 0.25, 0.09, 0.01)$, $\bm{\mu}=(0.8, 0.5, 0.3, 0.2)$
\end{itemize}
Note that the utilization factor of both scenarios is 0.95. In Scenario \RN{1}, the smallest total delay is achieved when $\alpha$ is close to 1. However, the optimal $\alpha$ is about 0.6 in Scenario \RN{2}. This example demonstrates that there does not exist a fixed value of $\alpha$ that can achieve the optimal delay performance for the VFMW policy. We study the trace files of our simulations and find that the VFMW policy can suffer from poor delay performance from two conflicting factors:
\begin{itemize}
\item If $\alpha$ is large, say close to 1, then the frame size grows fast. The VFMW policy may stick to an inefficient feasible schedule for too long and thereby suffers from large delay. 
\item If $\alpha$ is small, say close to 0, then the frame size is very small for most of the time and becomes very insensitive to the change in queue status. Consequently, the VFMW policy may switch too frequently and gets severely impacted by the switching overhead. 
\end{itemize}

The above arguments highlight the fundamental difficulty in achieving delay-optimality for queueing systems with switching overhead: If a policy switches too frequently, it suffers from too much capacity loss due to switching overhead. On the other hand, if a policy does not switch often enough, it may stay with a schedule that is no longer efficient for too long. In the next section, we propose our online scheduling policy that solves such dilemma.

\begin{figure}[!htbp]
\centering
\subfigure[Scenario I.]{
\includegraphics[scale=0.4]{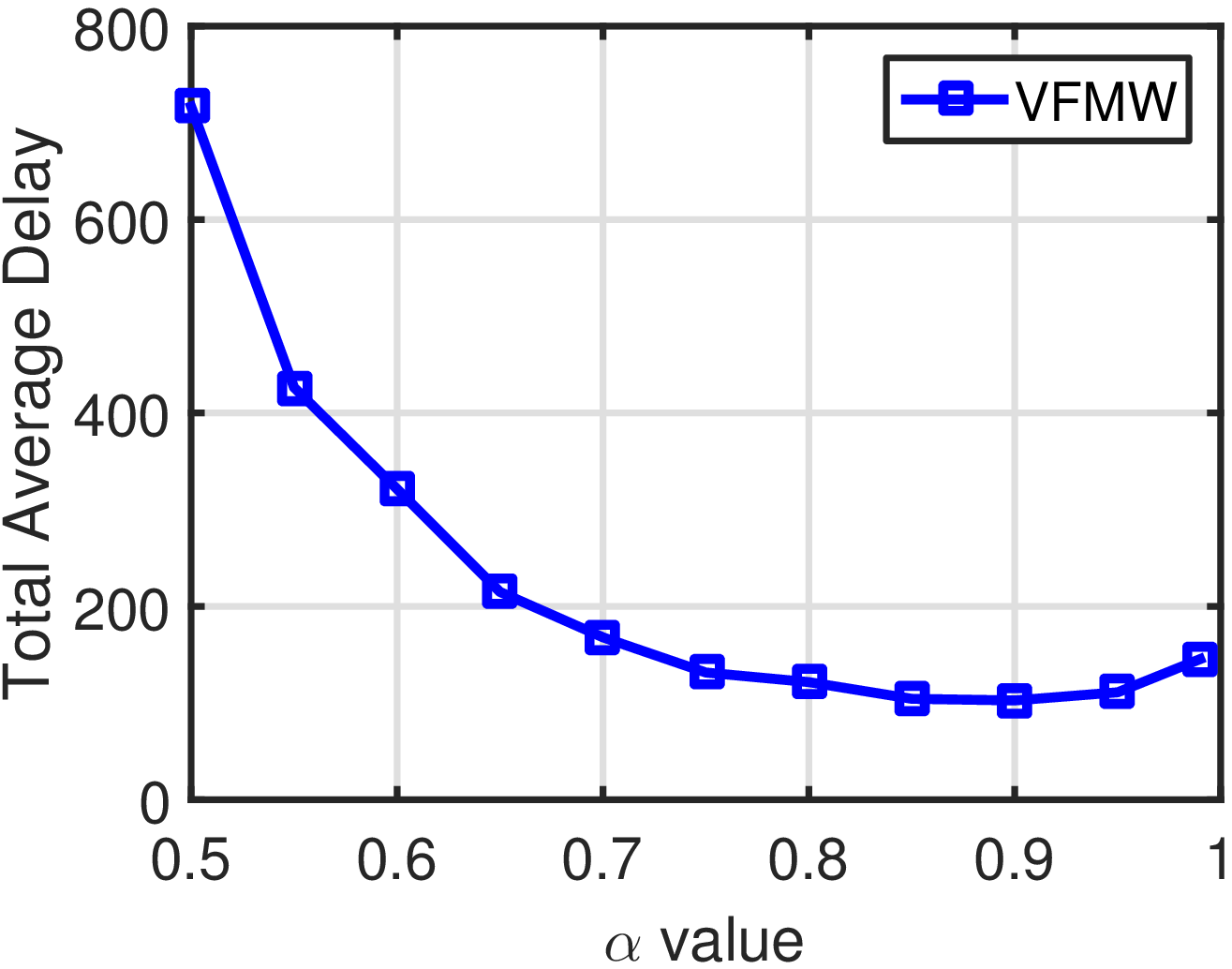}
\label{figure:VFMW power topology 1}}
\hspace{5mm}
\subfigure[Scenario II.]{
\includegraphics[scale=0.4]{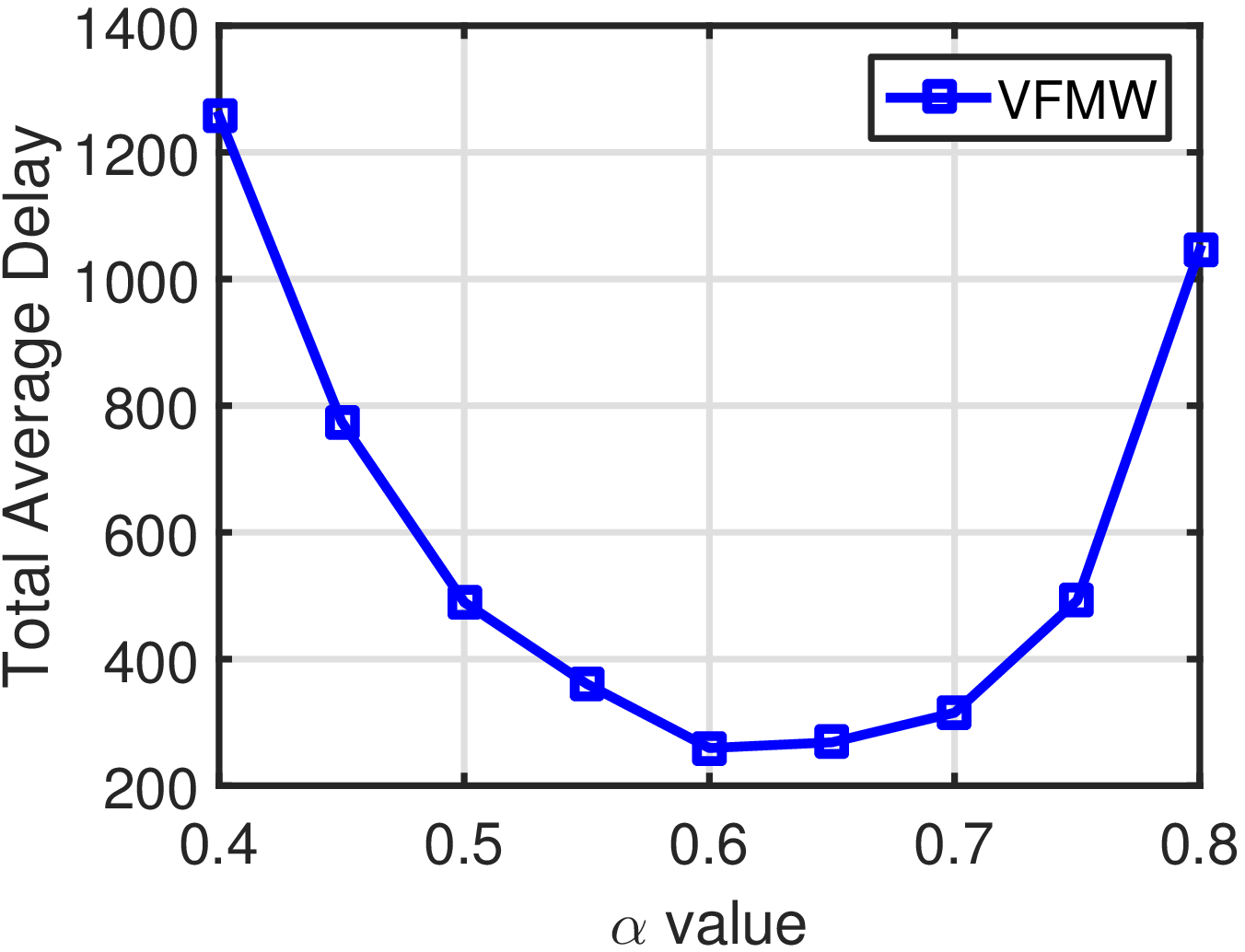}
\label{figure:VFMW power topology 2}}
\caption{Average total delay versus different $\alpha$ value under VFMW policy with frame function $\bkt[\big]{\sum_{i}Q_i(t)}^{\alpha}$.}
\end{figure}

\section{Throughput-Optimal Scheduling for Systems With Switching Overhead}
\label{section: throughput-optimal Q-BMW}
\subsection{Scheduling Policy and Intuitive Description}
We propose two types of \emph{Biased Max-Weight scheduling policy} to achieve both throughput-optimality and the asymptotically tight delay bound. We first introduce the queue-length-based Biased Max-Weight (Q-BMW) policy as follows. \newline

\noindent {\bf Q-BMW policy:} Let $F(\cdot): \mathbb{R}_{+}^N\rightarrow [1,\infty)$ be a function chosen by the server. At each time $t$ in the $k$-th interval, if the system satisfies that
\begin{equation}
\label{equation:policy}
\bkt[\bigg]{1+\frac{T_s}{F(\bs{Q}(t_k))}}\bkt[\Big]{\bs{I}(t_k)^T \bs{Q}(t)} \leq \bkt[\bigg]{\max_{j: 1\leq j \leq J}(\bs{I}^{(j)})^T \bs{Q}(t)},
\end{equation}
then the server makes a switch to serve the schedule with the largest sum of queue lengths at time $t$ (ties are broken arbitrarily).
Otherwise, the server stays with the current schedule. $\qed$\newline

Note that the Q-BMW policy does not rely on the frame structure adopted by the VFMW policy. Instead, the Q-BMW policy overcomes the switching overhead by giving an intentional bias to the current schedule. Moreover, the Q-BMW keeps checking the condition of (\ref{equation:policy}) in each time slot. Intuitively, the Q-BMW policy avoids the dilemma highlighted in the previous section because:
\begin{itemize}
\item Since the Q-BMW policy checks (\ref{equation:policy}) in each time slot, it cannot stick to an inefficient schedule for too long, regardless of the choice of the function $F(\cdot)$.
\item Since the Q-BMW favors the current schedule, it can avoid switching too frequently as long as the bias to the current schedule is not too small. This suggests that one should choose a function $F(\cdot)$ that increases very slowly with $\bs{Q}(t)$.
\end{itemize}

\subsection{Throughput-Optimality of the Q-BMW Scheduling}
To show that the Q-BMW is throughput-optimal, we first introduce a lower bound on $T_k$ in the following lemma.
\begin{lemma}
\label{lemma:lower bound for Tk}
Suppose the server can serve at most $K$ queues at a time. Under the Q-BMW policy, for any $k\geq 0$ and for every sample path, in the $k$-th interval we have
\begin{equation}
T_k\geq {C_0(\bs{1}^T\bs{Q}(t_k))}/{F(\bs{Q}(t_k))},
\end{equation}
where $C_0=T_s/\bkt[\big]{NK\bkt[\big]{A_{\max}+(1+T_s)S_{\max}}}$.   $\qed$
\end{lemma}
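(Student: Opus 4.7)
The plan is to exploit two facts: at time $t_k$ the newly chosen schedule $\bs{I}(t_k)$ is, by the switching rule, a maximum-weight schedule, and during the $k$-th interval no schedule's weight can drift too fast since the per-slot arrivals and services are uniformly bounded. Write $W_k(t):=\bs{I}(t_k)^{T}\bs{Q}(t)$, $W^*(t):=\max_{j}(\bs{I}^{(j)})^{T}\bs{Q}(t)$, $f_k:=F(\bs{Q}(t_k))$, and $Q_k:=\bs{1}^{T}\bs{Q}(t_k)$. Because a switch just occurred at $t_k$, the Q-BMW policy picks $\bs{I}(t_k)$ as a maximum-weight schedule, so $W_k(t_k)=W^*(t_k)$. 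My goal is to show that the non-switching condition $(1+T_s/f_k)\,W_k(t)>W^*(t)$ remains in force for every $t\in[t_k,\,t_k+C_0 Q_k/f_k)$, which by the policy definition forces $T_k\geq C_0 Q_k/f_k$.

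I would first establish the sample-path lower bound $W_k(t_k)\geq Q_k/N$. By the pigeonhole principle, some queue $i^\ast$ satisfies $Q_{i^\ast}(t_k)\geq Q_k/N$; since each queue lies in at least one maximal feasible schedule, there is a schedule $\bs{I}^{(j)}$ containing $i^\ast$ whose weight is at least $Q_{i^\ast}(t_k)$, and therefore $W^*(t_k)=W_k(t_k)\geq Q_k/N$.

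Next I would track the drift of $W_k$ and $W^*$ over $\tau$ slots starting at $t_k$. Since at most $K$ queues are ever served simultaneously, each served queue loses at most $S_{\max}$ per slot, and the $T_s$ SWITCH slots subtract nothing at all, giving $W_k(t_k+\tau)\geq W_k(t_k)-K S_{\max}\tau$. For the upper bound, the weight of any fixed schedule grows by at most $K A_{\max}$ per slot because only arrivals can increase it; hence $(\bs{I}^{(j)})^{T}\bs{Q}(t_k+\tau)\leq(\bs{I}^{(j)})^{T}\bs{Q}(t_k)+KA_{\max}\tau$ for every $j$, and taking the maximum over $j$ yields $W^*(t_k+\tau)\leq W^*(t_k)+KA_{\max}\tau$.

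Combining the two drift bounds with $W_k(t_k)=W^*(t_k)$, the non-switching condition at $t=t_k+\tau$ is implied by
\[
\frac{T_s}{f_k}\,W_k(t_k)\;>\;K\tau\Bigl(A_{\max}+\Bigl(1+\frac{T_s}{f_k}\Bigr)S_{\max}\Bigr).
\]
Since $f_k\geq 1$, the parenthesized factor is at most $A_{\max}+(1+T_s)S_{\max}$, so a sufficient condition is $\tau<T_s W_k(t_k)/\bigl[f_k K(A_{\max}+(1+T_s)S_{\max})\bigr]$, and substituting $W_k(t_k)\geq Q_k/N$ gives exactly $\tau<C_0 Q_k/f_k$, finishing the argument. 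The step that requires the most care is the lower bound $W_k(t_k)\geq Q_k/N$: it implicitly uses the mild structural assumption that each queue sits in some maximal feasible schedule (without which that queue could never be served); the remaining drift bookkeeping is routine and yields a deterministic (sample-path) bound as required by the statement.
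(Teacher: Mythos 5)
Your proof is correct and follows essentially the same route as the paper's: both arguments combine the maximum-weight property of $\bs{I}(t_k)$ at the switching instant (giving $\bs{I}(t_k)^{T}\bs{Q}(t_k)\geq(\bs{I}^{(m)})^{T}\bs{Q}(t_k)$ and, via pigeonhole, $\geq\bs{1}^{T}\bs{Q}(t_k)/N$) with the per-slot drift bounds $KA_{\max}$ and $KS_{\max}$ on schedule weights and the bound $1+T_s/F\leq 1+T_s$ from $F\geq 1$. The only cosmetic difference is that you argue the contrapositive (the non-switching condition persists for all $\tau<C_0\bs{1}^{T}\bs{Q}(t_k)/F(\bs{Q}(t_k))$, and you bound $W^{*}$ uniformly over all schedules) whereas the paper assumes a switch fires at $t_k+\tau$ and derives the same lower bound on $\tau$ using only the triggering schedule $\bs{I}^{(m)}$.
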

\begin{proof}
Suppose at time $t=t_k+\tau$, $\tau>0$ in the $k$-th interval, the server enters SWITCH mode and starts switching. Then, there exists some schedule $\bs{I}^{(m)}\neq \bs{I}(t_k)$ such that
\begin{equation}
\bkt[\bigg]{1+\frac{T_s}{F(\bs{Q}(t_k))}}\bkt[\Big]{\bs{I}(t_k)^T \bs{Q}(t_k+\tau)} \leq \bkt[\Big]{(\bs{I}^{(m)})^T \bs{Q}(t_k+\tau)}.\label{equation:Tk lower bound 1}
\end{equation}
Moreover, by the boundedness of the arrival processes, we know
\begin{equation}
\bkt[\big]{\bs{I}^{(m)}}^T \bkt[\Big]{\bs{Q}(t_k)+\tau A_{\max} \bs{1}}\geq \bkt[\big]{\bs{I}^{(m)}}^T\bs{Q}(t_k+\tau).\label{equation:Tk lower bound 2} 
\end{equation}
From (\ref{equation:Tk lower bound 1}) and (\ref{equation:Tk lower bound 2}), we have
\begin{align}
\bkt[\big]{\bs{I}^{(m)}}^T \bkt[\Big]{\bs{Q}(t_k)+\tau A_{\max} \bs{1}}&\geq \bkt[\Big]{\bs{I}(t_k)^T \bs{Q}(t_k+\tau)} \bkt[\bigg]{1+\frac{T_s}{F(\bs{Q}(t_k))}}\\
& \geq \bkt[\Big]{\bs{I}(t_k)^T \bkt[\big]{\bs{Q}(t_k)-\tau S_{\max}\bs{1}}} \bkt[\bigg]{1+\frac{T_s}{F(\bs{Q}(t_k))}}
\end{align}
Next, we rearrange the above equations as
\begin{align}
K\bkt[\Big]{A_{\max}+(1+T_s) S_{\max}}\tau&\geq \bs{I}(t_k)^T\bs{Q}(t_k)-\bkt[\big]{\bs{I}^{(m)}}^T\bs{Q}(t_k)+\frac{T_s{\bs{I}(t_k)}^T\bs{Q}(t_k)}{F(\bs{Q}(t_k))}\\
&\geq \frac{T_s{\bs{I}(t_k)}^T\bs{Q}(t_k)}{F(\bs{Q}(t_k))}\\
&\geq \frac{T_s{\bs{1}}^T\bs{Q}(t_k)}{N\cdot F(\bs{Q}(t_k))}
\end{align}
Hence, we can get the lower bound on $T_k$ as
\begin{equation}
\label{equation: Q-BMW Tk lower bound}
T_k\geq \frac{T_s\cdot{\bs{1}}^T\bs{Q}(t_k)}{NK \bkt[\big]{A_{\max}+(1+T_s)S_{\max}}\cdot F(\bs{Q}(t_k))}.
\end{equation}
$\qed$
\end{proof}

Lemma \ref{lemma:lower bound for Tk} provides important insight in choosing a proper $F(\bs{Q}(t_k))$. We now state the main theorem of throughput-optimality as follows.

\begin{theorem}
\label{theorem: Q-BMW throughput-optimal}
If we choose $F(\bs{Q}(t_k))=\max\{1, (\bs{1}^T \bs{Q}(t_k))^{\alpha}\}$ with $\alpha \in(0,1)$, then the Q-BMW policy is throughput-optimal. Moreover, the underlying Markov chain induced by the queue length process $\{\bs{Q}(t)\}_{t\geq 0}$ is positive recurrent. $\qed$
\end{theorem}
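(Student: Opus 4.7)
The plan is to establish positive recurrence via a Foster--Lyapunov argument carried out at the \emph{interval} scale rather than the slot scale: the single-slot drift cannot be signed because no service occurs during SWITCH slots, so the negative drift must be amortized over a full inter-switch interval $[t_k, t_{k+1})$. Throughput-optimality then follows from positive recurrence together with the bound on $\E{\bs{1}^T \bs{Q}(t)}$ extracted from the same drift inequality. Because the scheduler has memory, I would work with the Markov chain augmented by the current schedule $\bs{I}(t)$ and the elapsed time in the current mode, and sample it at the switching epochs $\{t_k\}$ to form an embedded chain.

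The Lyapunov function I would use is the service-rate-weighted quadratic $V(\bs{q}) = \sum_{i=1}^{N} q_i^2/\mu_i$, whose routine single-slot drift expansion gives
\begin{equation*}
\E{V(\bs{Q}(t+1)) - V(\bs{Q}(t)) \given \mathcal{F}_t} \leq 2\bs{\rho}^T \bs{Q}(t) - 2 M(t)\, \bs{I}(t)^T \bs{Q}(t) + K_1,
\end{equation*}
for a constant $K_1$ depending only on $N$, $A_{\max}$, $S_{\max}$, and $\mu_{\min}$. The $1/\mu_i$ weighting is deliberate: the arrival and service parts collapse to $\bs{\rho}^T \bs{Q}$ and $\bs{I}(t)^T \bs{Q}$, exactly the inner products appearing in the Q-BMW switching rule~(\ref{equation:policy}). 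Telescoping this estimate over $[t_k, t_{k+1})$ and splitting into the $\tau_k := T_k - T_s$ ACTIVE slots (on which $\bs{I}(t) = \bs{I}(t_k)$, $M(t) = 1$) and the final $T_s$ SWITCH slots (on which $M(t) = 0$), I would invoke two structural ingredients. (i) Since the test in (\ref{equation:policy}) has not yet triggered during the ACTIVE portion, $\bs{I}(t_k)^T \bs{Q}(t) \geq (1 + \gamma_k)^{-1} \max_j (\bs{I}^{(j)})^T \bs{Q}(t)$ with $\gamma_k := T_s / F(\bs{Q}(t_k))$. (ii) Interiority of $\bs{\lambda}$ in $\mathrm{\Lambda}$ together with the capacity-region characterization yields $\epsilon > 0$ and $\{\beta_j\}$ with $\sum_j \beta_j \leq 1$ and $\sum_j \beta_j \bs{I}^{(j)} \geq \bs{\rho} + \epsilon \bs{1}$, so $\max_j (\bs{I}^{(j)})^T \bs{Q}(t) \geq \bs{\rho}^T \bs{Q}(t) + \epsilon \bs{1}^T \bs{Q}(t)$. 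Chaining these two gives, per ACTIVE slot, an instantaneous drift of essentially $-2\epsilon (1+\gamma_k)^{-1} \bs{1}^T \bs{Q}(t) + 2\gamma_k \bs{\rho}^T \bs{Q}(t) + K_1$.

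To convert this into an interval bound in terms of $\bs{Q}(t_k)$, I would replace $\bs{Q}(t)$ by $\bs{Q}(t_k)$ at a per-coordinate cost $\max(A_{\max}, S_{\max}) T_k$, and bound the $T_s$ SWITCH slots crudely. The resulting interval drift has the shape
\begin{equation*}
\E{V(\bs{Q}(t_{k+1})) - V(\bs{Q}(t_k)) \given \bs{Q}(t_k)} \leq -\frac{2\epsilon}{1 + \gamma_k}\,\tau_k\, \bs{1}^T \bs{Q}(t_k) + C_1 \gamma_k \tau_k\, \bs{1}^T \bs{Q}(t_k) + C_2 T_s\, \bs{1}^T \bs{Q}(t_k) + C_3 T_k^{2}.
\end{equation*}
With $F(\bs{q}) = \max\{1, (\bs{1}^T \bs{q})^{\alpha}\}$ and $\alpha \in (0,1)$: as $\bs{1}^T \bs{Q}(t_k) \to \infty$ the bias $\gamma_k = T_s / (\bs{1}^T \bs{Q}(t_k))^{\alpha}$ vanishes, while by Lemma~\ref{lemma:lower bound for Tk} we have $T_k \geq C_0 (\bs{1}^T \bs{Q}(t_k))^{1-\alpha}$. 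The leading negative term is then of order $(\bs{1}^T \bs{Q}(t_k))^{2-\alpha}$, which strictly dominates both the $T_s \bs{1}^T \bs{Q}(t_k)$ SWITCH term and the $T_k^2 = O((\bs{1}^T \bs{Q}(t_k))^{2-2\alpha})$ cross-error precisely because $\alpha \in (0,1)$. Hence the interval drift is bounded above by any prescribed negative constant outside a finite set of states, and Foster's criterion applied to the embedded chain, combined with a uniform bound on $\E{T_k \given \bs{Q}(t_k)}$, yields positive recurrence of the full augmented Markov chain and strong stability of $\{\bs{Q}(t)\}$ for every interior $\bs{\lambda}$.

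The main technical obstacle is the simultaneous balancing of the two opposing roles played by $\alpha$. Large $\alpha$ (close to $1$) lets $\gamma_k$ decay rapidly, preserving the Max-Weight negative drift, but forces $\tau_k$ to grow only like $(\bs{1}^T \bs{Q}(t_k))^{1-\alpha}$, so the SWITCH slots and the $T_k^2$ cross-error risk competing with the negative drift; small $\alpha$ (close to $0$) enlarges $\tau_k$ and amortizes switching well, but the bias $\gamma_k$ decays too slowly to recover a clean Max-Weight inequality. The open range $\alpha \in (0,1)$ is exactly the regime in which both error terms are strictly dominated by the $-\Theta((\bs{1}^T \bs{Q}(t_k))^{2-\alpha})$ leading term. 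Carrying out this asymptotic matching rigorously — and in particular verifying that Foster's criterion applies to the augmented chain (including the current schedule), not to $\{\bs{Q}(t)\}$ in isolation — is the most delicate step of the argument.
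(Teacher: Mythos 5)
Your overall architecture---the $\mu_i^{-1}$-weighted quadratic Lyapunov function, the interval-scale rather than slot-scale drift, the use of the not-yet-triggered test (\ref{equation:policy}) to relate $\bs{I}(t_k)^T\bs{Q}(t)$ to $\max_j(\bs{I}^{(j)})^T\bs{Q}(t)$, and the interiority of $\bm{\lambda}$---matches the paper's proof in Appendix~1. But there is a genuine gap at the step where you control the error incurred by freezing the queue state at $t_k$ over the whole interval. Your interval drift carries a cross-error $C_3 T_k^2$, which you dismiss by asserting $T_k^2=O\bigl((\bs{1}^T\bs{Q}(t_k))^{2-2\alpha}\bigr)$. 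Lemma~\ref{lemma:lower bound for Tk} only gives the \emph{lower} bound $T_k\geq C_0(\bs{1}^T\bs{Q}(t_k))^{1-\alpha}$; no matching upper bound is established anywhere, and none should be expected: the server keeps its schedule for as long as (\ref{equation:policy}) fails to trigger, which can last for a time comparable to $\bs{I}(t_k)^T\bs{Q}(t_k)$ divided by the drain rate, i.e.\ \emph{linear} in the queue length. When $T_k$ is that large, your negative term is of order $\epsilon\, T_k\, \bs{1}^T\bs{Q}(t_k)$ while the cross-error is of order $T_k^2$, so domination would require $T_k\lesssim (\epsilon/C_3)\,\bs{1}^T\bs{Q}(t_k)$ with a small implied constant---a bound that is neither proved nor derivable from the switching rule. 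The same objection undermines the ``uniform bound on $\mathbb{E}[T_k\mid\bs{Q}(t_k)]$'' you invoke in the final Foster step.

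The paper closes exactly this hole with a two-scale decomposition of the interval. It introduces a second exponent $\alpha_1$ with $0<\alpha_1<\alpha$ and $\alpha+\alpha_1<1$, and computes the multi-step drift only over the truncated window of length $\widetilde{T}_k=\min\{T_k,(\bs{1}^T\bs{Q}(t_k))^{\alpha_1}\}$. On that window the quadratic error is $O\bigl((\bs{1}^T\bs{Q}(t_k))^{2\alpha_1}\bigr)$, strictly dominated by the negative drift of order $\epsilon(\bs{1}^T\bs{Q}(t_k))^{1+\alpha_1}$ and by the one-time switching cost $T_s\,\bs{1}^T\bs{Q}(t_k)$; the lower bound of Lemma~\ref{lemma:lower bound for Tk} is used only to guarantee that, for large queues, $\widetilde{T}_k<T_k$ so the truncated window contains a single switch. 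The remaining slots $[t_k+\widetilde{T}_k,t_{k+1})$ are then handled by \emph{one-step} drifts, each individually negative for large $\bs{Q}(t)$ because the biased max-weight inequality holds at every slot of the interval; no quadratic error accumulates there because the queue state is never frozen over more than one slot. To repair your argument you should adopt this truncation (or prove an upper bound on $T_k$ in terms of $\bs{Q}(t_k)$, which would be a separate and nontrivial result). Your point about applying Foster's criterion to the chain augmented with the current schedule, rather than to $\{\bs{Q}(t)\}$ alone, is a legitimate technical refinement that the paper glosses over, but it does not affect the gap above.
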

\begin{proof}
The complete proof is provided in Appendix 1. In summary, for any system with mean arrival rates vector $\bm{\lambda}$ in the interior of the capacity region $\mathrm{\Lambda}$, we show that the queueing system is strongly stable by applying the Lyapunov drift framework.  That is, we utilize a quadratic Lyapunov function and show that the expected Lyapunov drift is negative. Different from the one-step drift which is often used in the queueing systems without switching overhead, we choose a hypothetical observation window and show that the multi-step Lyapunov drift across the window is negative. Based on the lower bound on $T_k$ given by (\ref{equation: Q-BMW Tk lower bound}), the effect of switching overhead is amortized over the whole observation window. $\qed$
\end{proof}

\section{Asymptotically Tight Queue Length Bound under Q-BMW Scheduling}
\label{section: delay-optimal Q-BMW}
In this section, we focus on systems where the server can serve at most one queue at a time, that is, $\lvert \bs{I}\rvert=1$, for all feasible schedule $\bs{I}$. We show that Q-BMW is nearly delay-optimal when $\alpha\downarrow 0$ by proving the following theorem:

\begin{theorem}
\label{theorem:single-channel delay-optimal stochastic}
For any queueing system $\mathcal{Q}=(\mathcal{N}, \mathcal{I}, T_s)$ described in Section \ref{section: model} where the server can serve at most one queue at a time, the Q-BMW scheduling policy provides the following queue length upper bound: there exists some constant $B<\infty$ such that
\begin{equation}
\lim_{\epsilon^{*}\downarrow 0}\epsilon^{*}\E[\Big]{(\bs{1}^{T}\bs{Q}(t))^{1-\alpha}}\leq B.
\end{equation}
Hence, $\E[\big]{(\bs{1}^{T}\bs{Q}(t))^{1-\alpha}}$ scales as $\textrm{O}(1/\epsilon^{*})$.
By choosing $\alpha$ arbitrarily close to 0, the Q-BMW policy achieves the asymptotically tight queue length bound. $\qed$
\end{theorem}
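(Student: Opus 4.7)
The plan is to carry out a Lyapunov drift analysis with a function chosen so that the $(1-\alpha)$-power appears naturally on the right-hand side. I would take $V(\bs{Q}):=\frac{1}{2-\alpha}\bkt[\big]{x(\bs{Q})}^{2-\alpha}$ where $x(\bs{Q}):=\sum_i Q_i/\mu_i$ is the unfinished workload, so that $V'(x)=x^{1-\alpha}$ matches exactly the quantity to be bounded. A first-order Taylor expansion, together with the uniform boundedness of $A_i(t)$ and $S_i(t)$, gives
\begin{equation*}
\E{V(\bs{Q}(t+1))-V(\bs{Q}(t))\given \bs{Q}(t),\bs{I}(t),M(t)}\leq x(\bs{Q}(t))^{1-\alpha}\bkt[\Big]{{\textstyle\sum_i}\rho_i-M(t)\lvert\bs{I}(t)\rvert}+B,
\end{equation*}
and the key simplification is that in the single-queue regime $\lvert\bs{I}(t)\rvert=1$, so the per-step expected change in $x$ equals $\sum_i\rho_i-M(t)$ \emph{regardless of which queue is currently being served}---a workload-conservation property. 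The bracket is therefore $\leq -\epsilon^*$ in ACTIVE mode and $\leq 1-\epsilon^*$ in SWITCH mode.

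A potential starvation case, $Q_{j(t_k)}(t)<S_{\max}$ during ACTIVE, would ordinarily degrade this bound; but the Q-BMW switching rule forces every $Q_i(t)\leq(1+T_s)S_{\max}$ in that situation (otherwise the inequality in the definition of Q-BMW would already have fired), so $x^{1-\alpha}$ is bounded by a constant absorbable into $B$. Invoking the positive recurrence supplied by Theorem~\ref{theorem: Q-BMW throughput-optimal}, the stationary expected drift vanishes, and rearranging yields
\begin{equation*}
\epsilon^*\,\E{x(\bs{Q})^{1-\alpha}}\leq \E{x(\bs{Q})^{1-\alpha}\mathbf{1}_{\{M=0\}}}+B.
\end{equation*}

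The hard part---and the main obstacle I anticipate---is bounding the SWITCH-indicator term on the right uniformly in $\epsilon^*$. For this I would apply a renewal-reward argument using the interval structure and Lemma~\ref{lemma:lower bound for Tk}: the lemma gives the deterministic lower bound $T_k\geq C_0(\bs{1}^T\bs{Q}(t_k))^{1-\alpha}\geq C_0'\,x(\bs{Q}(t_k))^{1-\alpha}$, while only $T_s$ slots per interval are in SWITCH mode and $x(\bs{Q}(t))$ varies by at most $O(T_s)$ across them. Together these give
\begin{equation*}
\E{x(\bs{Q})^{1-\alpha}\mathbf{1}_{\{M=0\}}}=\frac{\E{\sum_{t\in\mathrm{SWITCH}} x(\bs{Q}(t))^{1-\alpha}}}{\E{T_k}}\leq \frac{T_s\,\E{x(\bs{Q}(t_k))^{1-\alpha}}+O(1)}{T_s+C_0'\,\E{x(\bs{Q}(t_k))^{1-\alpha}}}\leq \frac{T_s}{C_0'}+O(1),
\end{equation*}
a constant independent of $\epsilon^*$. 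Substituting back gives $\E{x(\bs{Q})^{1-\alpha}}=O(1/\epsilon^*)$, and since $x(\bs{Q})\geq \bs{1}^T\bs{Q}/\mu_{\max}$ the theorem follows. The trickiest step will be justifying the last display rigorously: it requires verifying integrability of $x(\bs{Q}(t_k))^{1-\alpha}$ under the stationary law of the embedded chain at switching epochs, which I would handle by first running the drift argument with a truncated Lyapunov $V\wedge n$ and then passing to the limit $n\to\infty$.
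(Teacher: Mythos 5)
Your proposal is correct in outline and reaches the stated bound, but by a genuinely different route from the paper. The paper's proof does not form a Lyapunov drift for this theorem at all: it reads the Q-BMW switching condition at the epoch $t_{k+1}$ to obtain, pathwise, $\bkt[\big]{\bs{1}^T\bs{Q}(t_k)}^{1-\alpha}\leq C\,T_k+\textrm{O}(\sqrt{T_k\log\log T_k})$ (the fluctuation term controlled by the Functional Law of the Iterated Logarithm), and then imports the complementary \emph{upper} bound $\limsup_{\T}\sum_{k}T_k/K_{\T}\leq B_0/\epsilon^{*}$ from the work-conservation counting argument of Lemma \ref{lemma:single-channel work-conserving Tk upper bound} together with Lemma \ref{lemma:single-channel Q-BMW work-conserving}; the $1/\epsilon^{*}$ enters entirely through that bound on the mean interval length, after which ergodicity and Lemma \ref{lemma: queue length bound for all t} finish the job. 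You instead put the $1/\epsilon^{*}$ into the drift itself, exploiting the workload-conservation identity special to single-server systems (the expected one-step change of $\sum_i Q_i/\mu_i$ is $\sum_i\rho_i-M(t)$ for any non-starved schedule, with $\sum_i\rho_i=1-\epsilon^{*}$ here), and then use the \emph{lower} bound on $T_k$ from Lemma \ref{lemma:lower bound for Tk} to show that the SWITCH-mode slots contribute only a constant; your starvation case is a nice quantitative restatement of Lemma \ref{lemma:single-channel Q-BMW work-conserving}. What your route buys: it avoids the FLIL and the counting argument of Lemma \ref{lemma:single-channel work-conserving Tk upper bound} entirely, and it isolates exactly where the serve-one-queue-at-a-time assumption is used. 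What it costs: the renewal-reward identity must be taken under the Palm law of the embedded chain at switching epochs, the division of expectations in your last display goes through only because both numerator and denominator are affine in $\E[\big]{x(\bs{Q}(t_k))^{1-\alpha}}$ (with $\E{T_k}\geq\max\{T_s,\,C_0'\E[\big]{x(\bs{Q}(t_k))^{1-\alpha}}\}$, costing a harmless factor of two), and the integrability issue you flag must be handled by truncation; these are precisely the technicalities the paper's pathwise time-average formulation is designed to sidestep. Both proofs ultimately lean on Theorem \ref{theorem: Q-BMW throughput-optimal} for positive recurrence, and your final passage from $x(\bs{Q})$ back to $\bs{1}^T\bs{Q}$ via $x(\bs{Q})\geq\bs{1}^T\bs{Q}/\mu_{\max}$ is sound.
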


We first introduce some necessary definitions and lemmas for the proof of Theorem \ref{theorem:single-channel delay-optimal stochastic}.

\begin{definition}
A scheduling policy $\eta$ is said to be \emph{work-conserving} if the server never serves an empty queue whenever there is an unfinished job in the system. $\qed$
\end{definition}

\begin{definition}
A scheduling policy $\eta$ is said to be \emph{ergodic} if the Markov chain resulting from $\eta$ is positive recurrent. $\qed$
\end{definition}

\begin{lemma}
\label{lemma:single-channel work-conserving Tk upper bound}
Let $K_{\T}$ be the number of intervals in $[0,\T)$. For any queueing system under any ergodic work-conserving policy $\eta$, there exists some constant $B_0<\infty$ such that 
\begin{equation}
\limsup_{\T\rightarrow \infty}\frac{\sum_{k=1}^{K_{\T}}T_k}{K_{\T}}\leq \frac{B_{0}}{\epsilon^{*}},
\end{equation}
almost surely. $\qed$
\end{lemma}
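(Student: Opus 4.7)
My plan is to translate the claim into a bound on the stationary mean interval length and then exploit work-conservation. Observe first that $\sum_{k=1}^{K_\T} T_k\le \T$ with equality up to a single boundary interval, so the lemma is equivalent, modulo $o(\T)$ terms, to the lower bound $\liminf_{\T\to\infty} K_\T/\T\ge \epsilon^*/B_0$ on the asymptotic switch rate. By positive recurrence of the chain induced by $\eta$, Birkhoff's ergodic theorem applied to the sequence $\{T_k\}$ gives $\sum_{k=1}^{K_\T} T_k/K_\T\to \mathbb{E}_\pi[T_0]$ almost surely, where $\pi$ is the stationary distribution and $T_0$ denotes a stationary interval length. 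It therefore suffices to show $\mathbb{E}_\pi[T_0]\le B_0/\epsilon^*$.

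Next, I would exploit work-conservation in the single-channel setting: the active phase of the $k$-th interval must terminate as soon as the served queue $i_k$ empties whenever some other queue has backlog. This yields the pathwise bound $T_k-T_s\le \tau_k$, where $\tau_k$ is the busy-period duration of queue $i_k$ started with $Q_{i_k}(t_k)+\sum_{s=0}^{T_s-1}A_{i_k}(t_k+s)$ jobs. A standard single-queue busy-period estimate then gives
\begin{equation*}
\E{\tau_k\mid \mathcal{F}_{t_k}}\le \frac{Q_{i_k}(t_k)+T_s\lambda_{\max}}{\mu_{\min}-\lambda_{\max}}.
\end{equation*}

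To close the bound on $\mathbb{E}_\pi[T_0]$, I would combine this busy-period estimate with the global time-fraction balance implied by stability: letting $R_j(\T)$ denote the time spent serving schedule $j$, we have $R_j(\T)/\T\to r_j\ge \rho_j$ almost surely, so $\sum_j r_j\ge \beta^*=1-\epsilon^*$, and at most an $\epsilon^*$-fraction of time is devoted to switching or idling. A cycle-averaging argument modeled on the classical exhaustive cyclic-polling analysis---whose cycle time satisfies $C=\beta^* C+NT_s$, hence $C=NT_s/\epsilon^*$---then resolves the self-consistent coupling between mean interval length and mean queue length at visit start (the latter being driven by the off-phase accumulation rate $\lambda_{i_k}$), yielding $\mathbb{E}_\pi[T_0]\le B_0/\epsilon^*$ with a constant $B_0$ depending only on $N$, $T_s$, $\lambda_{\max}$, and $\mu_{\min}$.

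The main technical obstacle is extending this cycle-averaging computation from cyclic exhaustive polling to arbitrary work-conserving policies, whose visit pattern need not be cyclic and whose voluntary switches render the cycle structure state-dependent, so the busy-period bound does not readily close in terms of $\epsilon^*$ alone. The cleanest route I see is a sample-path coupling or dominance argument: establishing that the stationary mean interval length under any work-conserving single-channel policy is no larger, up to a system-dependent constant, than that under a reference cyclic exhaustive policy, for which the closed-form $T_s/\epsilon^*$ cycle-length bound is directly available.
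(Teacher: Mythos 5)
Your proposal does not close: you yourself flag the decisive step --- extending the cyclic exhaustive-polling cycle-time identity $C=\beta^{*}C+NT_s$ to arbitrary work-conserving policies --- as an unresolved obstacle, and the dominance/coupling argument you sketch to bridge it is not established (and is doubtful in general, since the mean interval length under an arbitrary state-dependent visit order need not be dominated by that of a fixed cyclic exhaustive schedule). Two further problems appear earlier in your chain. First, the busy-period estimate $\mathbb{E}[\tau_k \mid \mathcal{F}_{t_k}]\leq (Q_{i_k}(t_k)+T_s\lambda_{\max})/(\mu_{\min}-\lambda_{\max})$ requires $\mu_{\min}>\lambda_{\max}$, which is not assumed and fails even in the paper's own examples (Scenario \RN{2} has $\lambda_{\max}=0.25>0.2=\mu_{\min}$). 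Second, the pathwise bound $T_k-T_s\leq\tau_k$ is false when all other queues are empty at the moment queue $i_k$ drains: work conservation then forces no switch, and the interval may continue.

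The paper's proof sidesteps all of this with a counting argument that never touches busy periods or stationary cycle lengths. It introduces $Z_i(\T)$, the number of slots up to $\T$ in which the service offered to queue $i$ exceeds its backlog, and draws two consequences of work conservation: (i) each such underutilization event triggers a switch at the next slot, so $\sum_{i}Z_i(\T)\leq K_{\T}$; and (ii) the server idles only while switching, so $\sum_{i}Y_i(\T)\geq \T-K_{\T}T_s$, where $Y_i(\T)$ counts the slots serving queue $i$. Combining the pathwise inequality $S_{\max}Z_i(\T)\geq \sum_{m=1}^{Y_i(\T)}S_i(\tau_{i,m})-\sum_{\tau=0}^{\T-1}A_i(\tau)$ with the strong law of large numbers (ergodicity is invoked only to guarantee that $Y_i(\T)/\T$ converges) yields $\liminf_{\T\rightarrow\infty}K_{\T}/\T\geq \epsilon^{*}\bigl(T_s+NS_{\max}/\mu_{\min}\bigr)^{-1}$, which is precisely the reformulation you correctly identified at the outset as equivalent to the lemma. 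If you wish to salvage your outline, replacing the busy-period and cycle-averaging machinery with this underutilization-counting step is the missing idea.
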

\begin{proof}
Let the counting process $Y_i(t)$ be the number of slots for which queue $i$ is served up to $t$. Moreover, let $\tau_{i,1}, \tau_{i,2},...,\tau_{i,Y_i(t)}$ be these time slots. 
Let $Z_i(t)$ denote the cumulative time for which the scheduled service for queue $i$ is not fully utilized up to time $t$, that is,
\begin{equation}
Z_i(t):=\sum_{\tau=0}^{t-1} \mathbb{I}_{\{\hat{S}_i(\tau)>Q_i(\tau)\}},
\end{equation}
where $\mathbb{I}_{\{\cdot\}}$ denotes the indicator function of an event. At each time $\tau$, the event that $\{\hat{S}_i(\tau)>Q_i(\tau)\}$ happens only when the queue becomes empty at the beginning of time $\tau+1$. Since the policy $\eta$ is work-conserving, at the beginning of slot $\tau+1$ the server will switch to a new schedule. Therefore, the total cumulative time for which the scheduled service is not fully utilized is at most the same as the number of intervals, that is, for any $\mathbb{T}>0$,
\begin{equation}
\sum_{i=1}^{N} Z_i(\mathbb{T}) \leq K_{\mathbb{T}},\label{equation:sum of Zi less than KT}
\end{equation}
where $K_{\mathbb{T}}$ is the number of intervals in $[0,\mathbb{T})$. Since the policy $\eta$ is work-conserving, we also have
\begin{equation}
\sum_{i=1}^{N}Y_i(\T)\geq \T-K_{\T}T_s.\label{equation:time conservation}
\end{equation}
Moreover, we have for each queue $i$, 
\begin{equation}
\label{equation:service conservation}
S_{\max}Z_i({\T})\geq \sum_{m=1}^{Y_i(\T)}S_i(\tau_{i,m})-\sum_{\tau=0}^{\T-1}A_i(\tau).
\end{equation}
The right-hand side of (\ref{equation:service conservation}) represents the cumulative service that is not fully utilized by queue $i$ up to $\T$. 
After dividing both sides of (\ref{equation:service conservation}) by $\mathbb{T}$, we have
\begin{align}
\frac{S_{\max}Z_i(\mathbb{T})}{\mathbb{T}} &\geq \frac{\sum_{m=1}^{Y_i(\mathbb{T})}S_i(\tau_{i,m})}{\mathbb{T}} - \frac{\sum_{\tau=0}^{\mathbb{T}-1}A_i(\tau)}{\mathbb{T}}\\ \label{equation:Z greater than S minus A}
& = \frac{Y_i(\mathbb{T})}{\mathbb{T}}\frac{\sum_{m=1}^{Y_i(\mathbb{T})}S_i(\tau_{i,m})}{Y_i(\mathbb{T})}-\frac{\sum_{\tau=0}^{\mathbb{T}-1}A_i(\tau)}{\mathbb{T}}
\end{align}
Since the policy $\eta$ is ergodic and thereby the Markov chain resulting from $\eta$ is positive recurrent, then $\lim_{\T \rightarrow \infty} \frac{Y_i(\T)}{\T}$ exists, for every queue $i$. By letting $\mathbb{T}\rightarrow \infty$, we have
\begin{align}
\liminf_{\mathbb{T}\rightarrow \infty}\frac{S_{\max}Z_i(\mathbb{T})}{\mathbb{T}} &\geq\liminf_{\mathbb{T}\rightarrow\infty}\frac{Y_i(\mathbb{T})}{\mathbb{T}}\frac{\sum_{m=1}^{Y_i(\mathbb{T})}S_i(\tau_{i,m})}{Y_i(\mathbb{T})} - \limsup_{\mathbb{T}\rightarrow\infty}\frac{\sum_{\tau=0}^{\mathbb{T}-1}A_i(\tau)}{\mathbb{T}}. \label{equation:limit of Z greater than S minus A}
\end{align}
Note that $\lim_{\mathbb{T}\rightarrow \infty}Y_i(\mathbb{T})\rightarrow\infty$ since the queue $i$ cannot be stable if otherwise. 
By the Strong Law of Large Numbers, we have $\lim_{\mathbb{T}\rightarrow\infty}\frac{\sum_{\tau=0}^{\mathbb{T}-1}A_i(\tau)}{\mathbb{T}}=\lambda_i$ and $\lim_{\mathbb{T}\rightarrow \infty}\frac{\sum_{m=1}^{Y_i(\mathbb{T})}S_i(\tau_{i,m})}{Y_i(\mathbb{T})}=\mu_i$, for every queue $i$.
Therefore, (\ref{equation:limit of Z greater than S minus A}) can be written as 
\begin{equation}
\frac{S_{\max}}{\mu_i}\cdot\liminf_{\mathbb{T}\rightarrow \infty}\frac{Z_i(\mathbb{T})}{\mathbb{T}} \geq \lim_{\mathbb{T}\rightarrow \infty} \frac{Y_i(\mathbb{T})}{\mathbb{T}} - \rho_{i}. \label{equation:limit of Z greater than Y minus rho }
\end{equation}
By summing (\ref{equation:limit of Z greater than Y minus rho }) over all $i$, we have
\begin{align}
\frac{NS_{\max}}{\mu_{\min}} \cdot\liminf_{\mathbb{T}\rightarrow \infty}\frac{\sum_{i=1}^{N}Z_{i}(\mathbb{T})}{\mathbb{T}}&\geq \lim_{\mathbb{T}\rightarrow \infty}\frac{\sum_{i=1}^{N}Y_i(\mathbb{T})}{\mathbb{T}}-\sum_{i=1}^{N}\rho_{i}\\
\label{equation: limit inf KT over T}
&\geq 1-\liminf_{\T\rightarrow \infty}\frac{K_{\T}T_{s}}{\T}-\sum_{i=1}^{N}\rho_{i}\\ 
&=\epsilon^{*}-\liminf_{\T\rightarrow \infty}\frac{K_{\T}T_{s}}{\T},
\end{align}
where (\ref{equation: limit inf KT over T}) holds from the inequality (\ref{equation:time conservation}).
By using (\ref{equation:sum of Zi less than KT}), we then have
\begin{equation}
\frac{NS_{\max}}{\mu_{\min}}\cdot \liminf_{\mathbb{T}\rightarrow \infty}\frac{K_{\T}}{\T}\geq \epsilon^{*}-\liminf_{\T\rightarrow \infty}\frac{K_{\T}T_{s}}{\T}.
\end{equation}
Therefore, we obtain that $\liminf_{\T\rightarrow \infty}\frac{K_{\T}}{\T}\geq  {\epsilon^{*}}\bkt[\big]{T_s+\frac{NS_{\max}}{\mu_{\min}}}^{-1}$, almost surely.
Equivalently, we have
\begin{equation}
\limsup_{\T\rightarrow\infty} \frac{\sum_{k=1}^{K_{\T}}T_k}{K_{\T}}\leq \frac{T_s + \frac{NS_{\max}}{\mu_{\min}}}{\epsilon^{*}}, 
\end{equation}
almost surely. $\qed$
\end{proof}

\begin{lemma}
\label{lemma:single-channel Q-BMW work-conserving}
For any queueing system $\mathcal{Q}=(\mathcal{N}, \mathcal{I}, T_s)$ described in Section \ref{section: model} where the server can serve at most one queue at a time, the Q-BMW scheduling policy is work-conserving. $\qed$
\end{lemma}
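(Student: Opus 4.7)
The plan is to verify the definition of work-conservation directly. The key observation is that in the single-queue-service setting where $\lvert \bs{I}\rvert = 1$ for every $\bs{I}\in\mathcal{I}$, the set of maximal feasible schedules is precisely the collection of singletons $\{\bs{I}^{(1)},\ldots,\bs{I}^{(N)}\}$ with $(\bs{I}^{(i)})^T\bs{Q}(t)=Q_i(t)$, so the biased switching condition (\ref{equation:policy}) collapses to a pointwise comparison between the current queue's backlog and the largest backlog in the system.

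First I would fix any time $t$ in the $k$-th interval and let $m$ be the queue with $\bs{I}(t_k)=\bs{I}^{(m)}$. Suppose that $Q_m(t)=0$ while some other queue $j\neq m$ satisfies $Q_j(t)>0$, so that unfinished work remains in the system. Then the left-hand side of (\ref{equation:policy}) becomes $\bigl(1+T_s/F(\bs{Q}(t_k))\bigr)\cdot Q_m(t)=0$, irrespective of the specific value of $F(\bs{Q}(t_k))\geq 1$. Meanwhile, since $\mathcal{I}$ contains the singleton schedule that serves queue $j$, the right-hand side satisfies $\max_{\ell}(\bs{I}^{(\ell)})^T\bs{Q}(t)\geq Q_j(t)>0$. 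The switching condition in (\ref{equation:policy}) is therefore satisfied strictly, and by the definition of Q-BMW the server immediately leaves queue $m$ and transitions toward the nonempty schedule with the largest backlog.

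Finally, I would note that (\ref{equation:policy}) is tested in every time slot, so the server cannot remain assigned to an empty queue while jobs are present somewhere else; during the ensuing $T_s$ SWITCH-mode slots the server is not assigned to any queue at all and therefore does not serve an empty queue either. This verifies the work-conservation property. There is no genuine obstacle in the argument: the only delicate point is recognizing that the multiplicative bias $1+T_s/F(\bs{Q}(t_k))$ becomes irrelevant the moment the current-schedule weight vanishes, which is precisely the regime in which work-conservation could otherwise be violated.
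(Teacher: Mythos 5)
Your proposal is correct and follows essentially the same argument as the paper: when the currently scheduled queue empties, the left-hand side of (\ref{equation:policy}) vanishes while the right-hand side is strictly positive whenever another queue is nonempty, so the switch is triggered and the server never serves an empty queue while work remains. Your additional remarks about the bias factor becoming irrelevant and about the SWITCH-mode slots are fine elaborations of the same one-line observation the paper makes.
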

\begin{proof}
Under the Q-BMW policy, if the scheduled queue becomes empty, that is, $\bs{I}(t)^T\bs{Q}(t)=0$, and if there still exists another non-empty queue, then the switching condition (\ref{equation:policy}) should be triggered. Therefore, the Q-BMW policy never idles when there is still a job in the system. $\qed$
\end{proof}

\begin{theorem}
\label{theorem:single-channel Q-BMW Tk}
For any queueing system $\mathcal{Q}=(\mathcal{N}, \mathcal{I}, T_s)$ described in Section \ref{section: model} where the server can serve at most one queue at a time, under the Q-BMW scheduling policy, there exists some constant $B_0<\infty$ such that 
\begin{equation}
\limsup_{\T\rightarrow\infty}\frac{\sum_{k=1}^{K_{\T}}T_k}{K_{\T}}\leq \frac{B_0}{\epsilon^{*}},
\end{equation}
almost surely. Moreover, if the system is strongly stable and therefore the underlying Markov chain is positive recurrent, then we also have
\begin{equation}
\lim_{\epsilon^{*}\downarrow 0} \epsilon^{*} \E{T_k}\leq B_0.  
\end{equation}
$\qed$
\end{theorem}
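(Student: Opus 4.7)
My plan is to obtain the almost-sure bound as a direct corollary of the two preceding lemmas, and then to lift it to an expectation bound by invoking the ergodic theorem.

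For the first inequality, I would simply verify the hypotheses of Lemma~\ref{lemma:single-channel work-conserving Tk upper bound}. By Lemma~\ref{lemma:single-channel Q-BMW work-conserving}, Q-BMW is work-conserving in the single-server setting, and by Theorem~\ref{theorem: Q-BMW throughput-optimal}, the underlying queue length Markov chain is positive recurrent, and hence Q-BMW is ergodic, whenever $\bm{\lambda}$ lies in the interior of $\mathrm{\Lambda}$. Applying Lemma~\ref{lemma:single-channel work-conserving Tk upper bound} then yields the almost-sure bound with the explicit constant $B_0 = T_s + NS_{\max}/\mu_{\min}$, independent of $\epsilon^{*}$.

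For the expectation bound, I would consider the chain $\{\bs{Q}(t_k)\}_{k\geq 0}$ sampled at switching epochs. Because positive recurrence of $\{\bs{Q}(t)\}_{t\geq 0}$ descends to this embedded chain, and because $T_k$ is a measurable functional of $\bs{Q}(t_k)$ together with the i.i.d.\ future arrivals and services, the sequence $\{T_k\}_{k\geq 0}$ is stationary and ergodic under the stationary distribution of the embedded chain. Birkhoff's ergodic theorem then gives
\begin{equation*}
\lim_{\T\to\infty}\frac{1}{K_{\T}}\sum_{k=1}^{K_{\T}}T_k=\E{T_k},\quad \textrm{almost surely},
\end{equation*}
where $\E{\cdot}$ denotes stationary expectation. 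Combined with the first part, this forces $\E{T_k}\leq B_0/\epsilon^{*}$, and hence $\epsilon^{*}\E{T_k}\leq B_0$ for every $\epsilon^{*}>0$, from which the stated limit is immediate.

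The main point requiring care is the last step: justifying that the Cesaro sample path limit coincides with the stationary expectation of $T_k$. Positive recurrence (from Theorem~\ref{theorem: Q-BMW throughput-optimal}) is the key ingredient. Integrability of $T_k$ need not be assumed a priori, because the already-established almost-sure bound $B_0/\epsilon^{*}$, via Fatou's lemma, forces $\E{T_k}$ to be finite and no greater than $B_0/\epsilon^{*}$. Hence the proof reduces to a clean assembly of the earlier results, with the embedded-chain ergodic argument being the only new ingredient.
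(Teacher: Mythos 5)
Your proposal is correct and follows essentially the same route as the paper, whose proof of this theorem is literally a one-line appeal to Lemma~\ref{lemma:single-channel work-conserving Tk upper bound} and Lemma~\ref{lemma:single-channel Q-BMW work-conserving}. Your additional justification of the expectation bound via the embedded chain at switching epochs, Birkhoff's ergodic theorem, and Fatou's lemma supplies a step the paper leaves implicit, and it is sound.
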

\begin{proof}
This is a direct result of Lemma \ref{lemma:single-channel work-conserving Tk upper bound} and Lemma \ref{lemma:single-channel Q-BMW work-conserving}. $\qed$
\end{proof}

The following lemma shows that to derive the queue length bound in steady state, we can consider only the queue length at the beginning of each interval. 
\begin{lemma}
\label{lemma: queue length bound for all t}
Given $\gamma\in (0,1]$, in steady state, if there exists some positive constant $B_{0}<\infty$ such that at the beginning of any interval
\begin{equation}
\lim_{\epsilon^{*}\downarrow 0}\epsilon^{*}\E[\Big]{(\bs{1}^{T}\bs{Q}(t_k))^{\gamma}}\leq B_{0},
\end{equation}
then there also exists a positive constant $B_{1}<\infty$ such that in any time slot $t$
\begin{equation}
\lim_{\epsilon^{*}\downarrow 0}\epsilon^{*}\E[\Big]{(\bs{1}^{T}\bs{Q}(t))^{\gamma}}\leq B_{1}.
\end{equation}
$\qed$
\end{lemma}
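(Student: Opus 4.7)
My plan is to combine a sample-path bound with a time-averaging argument. Within any interval $[t_k, t_{k+1})$, arrivals are at most $A_{\max}$ per queue per slot while services only decrease queue lengths, so
\begin{equation*}
\bs{1}^{T}\bs{Q}(t) \leq \bs{1}^{T}\bs{Q}(t_k) + NA_{\max}(t - t_k) \leq \bs{1}^{T}\bs{Q}(t_k) + NA_{\max} T_k, \quad t \in [t_k, t_{k+1}).
\end{equation*}
Because $x \mapsto x^{\gamma}$ is subadditive for $\gamma \in (0,1]$, this yields the pointwise bound $(\bs{1}^{T}\bs{Q}(t))^{\gamma} \leq (\bs{1}^{T}\bs{Q}(t_k))^{\gamma} + (NA_{\max})^{\gamma} T_k^{\gamma}$.

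I would then pass to a stationary bound via the ergodic theorem, justified by the positive recurrence of $\{\bs{Q}(t)\}$ established in Theorem \ref{theorem: Q-BMW throughput-optimal}. Grouping the time-average into intervals and summing the pointwise bound over $t \in [t_k, t_{k+1})$ gives
\begin{equation*}
\frac{1}{\T}\sum_{t=0}^{\T-1}(\bs{1}^{T}\bs{Q}(t))^{\gamma} \leq \frac{1}{\T}\sum_{k=0}^{K_{\T}-1}\bkt[\Big]{T_k (\bs{1}^{T}\bs{Q}(t_k))^{\gamma} + (NA_{\max})^{\gamma} T_k^{1+\gamma}}.
\end{equation*}
Sending $\T \to \infty$ and using $K_{\T}/\T \to 1/\E{T_k}$ (where $\E{T_k}$ denotes the long-run average interval length), the right side becomes, up to the factor $1/\E{T_k}$, the average over interval starts of $T_k (\bs{1}^{T}\bs{Q}(t_k))^{\gamma} + (NA_{\max})^{\gamma}T_k^{1+\gamma}$. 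Theorem \ref{theorem:single-channel Q-BMW Tk} already provides $\lim_{\epsilon^{*}\downarrow 0}\epsilon^{*}\E{T_k}\leq B_{0}$.

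The main obstacle is controlling the dependence between $T_k$ and $\bs{Q}(t_k)$: specifically, bounding the mixed moment $\E{T_k (\bs{1}^{T}\bs{Q}(t_k))^{\gamma}}$ and the higher moment $\E{T_k^{1+\gamma}}$. My approach is to establish a deterministic upper bound of the form $T_k \leq C_1 (\bs{1}^{T}\bs{Q}(t_k))^{\eta} + C_2$ using the Q-BMW switching rule (\ref{equation:policy}) and work-conservation (Lemma \ref{lemma:single-channel Q-BMW work-conserving}): once the served queue drains and another queue is non-empty, the switch fires immediately, so $T_k$ is at most the drain time of the served queue plus the contribution from the biased threshold. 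Substituting this bound reduces both mixed and higher moments to moments of $\bs{1}^{T}\bs{Q}(t_k)$, which the hypothesis controls up to a constant power of $1/\epsilon^{*}$. A careful accounting of exponents then yields the desired constant $B_{1}$.
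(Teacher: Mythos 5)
Your opening steps match the paper's: the sample-path bound $\bs{1}^{T}\bs{Q}(t)\leq \bs{1}^{T}\bs{Q}(t_k)+NA_{\max}(t-t_k)$ and the subadditivity of $x\mapsto x^{\gamma}$ are exactly how the paper begins. But you then diverge into a time-averaging argument over $[0,\T)$, and this detour creates the very difficulty that sinks the proof: grouping the sum by intervals weights each interval start by its length, producing the mixed moment $\mathbb{E}\bigl[T_k(\bs{1}^{T}\bs{Q}(t_k))^{\gamma}\bigr]$ and the higher moment $\mathbb{E}\bigl[T_k^{1+\gamma}\bigr]$, neither of which is supplied by the hypothesis or by Theorem~\ref{theorem:single-channel Q-BMW Tk} (which controls only $\mathbb{E}[T_k]$). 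Your proposed repair has two genuine gaps. First, the deterministic bound $T_k\leq C_1(\bs{1}^{T}\bs{Q}(t_k))^{\eta}+C_2$ is false for the stochastic model of Section~\ref{section: model}: on a sample path where the served queue receives $A_{\max}$ arrivals and zero service for many consecutive slots while the other queues receive nothing, the switching condition (\ref{equation:policy}) never fires and the served queue never drains, so $T_k$ admits no sample-path bound in terms of $\bs{Q}(t_k)$ alone (this is precisely why the paper only proves a \emph{lower} bound on $T_k$ pathwise, in Lemma~\ref{lemma:lower bound for Tk}, and an upper bound only in a time-averaged/expected sense, in Theorem~\ref{theorem:single-channel Q-BMW Tk}). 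Second, even granting such a bound, the hypothesis controls only the $\gamma$-th moment of $\bs{1}^{T}\bs{Q}(t_k)$; it says nothing about moments of order $\gamma+\eta$, and any bound that picks up an extra power of $1/\epsilon^{*}$ would in any case destroy the conclusion $\lim_{\epsilon^{*}\downarrow 0}\epsilon^{*}\mathbb{E}[\cdot]\leq B_1$, which requires the bound to be exactly $O(1/\epsilon^{*})$.

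The paper avoids all of this by never forming the product. It keeps the bound additive, $(\bs{1}^{T}\bs{Q}(t))^{\gamma}\leq (\bs{1}^{T}\bs{Q}(t_k))^{\gamma}+NA_{\max}T_k$ for the interval containing $t$, takes expectations so the two terms separate, and then multiplies by $\epsilon^{*}$: the first term is handled by the hypothesis and the second by $\lim_{\epsilon^{*}\downarrow 0}\epsilon^{*}\mathbb{E}[T_k]\leq B_0$ from Theorem~\ref{theorem:single-channel Q-BMW Tk}. If you want to salvage your route, drop the time average and take expectations of the pointwise additive bound directly; the independence-of-terms structure is what makes the lemma a two-line consequence of results already proved.
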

\begin{proof}
For any time slot $t$ in the $k$-th interval, we have
\begin{align}
\E[\Big]{\bkt[\big]{\bs{1}^{T}\bs{Q}(t)}^{\gamma}}&\leq \E[\bigg]{\bkt[\Big]{\bs{1}^{T}\bs{Q}(t_k) + (t-t_k)\sum_{i=1}^{N}A_{\max}}^{\gamma}}\\
&\leq \E[\bigg]{\bkt[\big]{(\bs{1}^{T}\bs{Q}(t_k)}^{\gamma} + (t-t_k)\sum_{i=1}^{N}A_{\max}}\\
&\leq \E[\Big]{\bkt[\big]{(\bs{1}^{T}\bs{Q}(t_k)}^{\gamma} + N A_{\max}T_k}.
\end{align}
Therefore, by Theorem \ref{theorem:single-channel Q-BMW Tk}, we know that
\begin{align}
\lim_{\epsilon^{*}\downarrow 0}\epsilon^{*} \E[\Big]{\bkt[\big]{\bs{1}^{T}\bs{Q}(t)}^{\gamma}} &\leq \lim_{\epsilon^{*}\downarrow 0} \epsilon^{*}\E[\Big]{\bkt[\big]{\bs{1}^{T}\bs{Q}(t_k)}^{\gamma}} + \lim_{\epsilon^{*}\downarrow 0} \epsilon^{*}\E[\big]{NA_{\max}T_k}<\infty.
\end{align}
This completes the proof. $\qed$
\end{proof}

We are now ready to prove Theorem \ref{theorem:single-channel delay-optimal stochastic}.
\begin{proof} (Theorem \ref{theorem:single-channel delay-optimal stochastic})
Given ${\bs{Q}}(t_k)$, for any $\tau\geq T_s$, define $\Delta \widetilde{\bs{Q}}(t_k+\tau):=\bs{Q}(t_k+\tau)-\bkt[\Big]{\bs{Q}(t_k)+\tau\bm{\lambda}-(\tau-T_s)\bkt[\big]{\bm{\mu}\circ{\bs{I}(t_k)}}}$, where $\bm{\mu}\circ {\bs{I}(t_k)}$ denotes the element-wise product of $\bm{\mu}$ and ${\bs{I}(t_k)}$. Note that $\Delta \widetilde{\bs{Q}}(t_k+\tau)$ represents the "deviation" in queue backlog with stochastic arrival and service processes from that with deterministic arrival rates and service rates. 
Therefore, at time $t_{k+1}$, under the Q-BMW policy,
\begin{align}
\label{equation: Q-BMW deterministic estimate 1}
&\bkt[\bigg]{1+\frac{T_s}{F(\bs{Q}(t_k))}}\bkt[\Big]{{\bs{I}(t_k)^T}\bkt[\Big]{\bs{Q}(t_k)+T_k\bm{\lambda}- (T_k-T_s)\bm{\mu}+\Delta \widetilde{\bs{Q}}(t_{k+1})}}\\
\label{equation: Q-BMW deterministic estimate 2}
&\leq \bkt[\bigg]{\bs{I}(t_{k+1})^T \bkt[\Big]{\bs{Q}(t_k)+T_k \bm{\lambda}+\Delta \widetilde{\bs{Q}}(t_{k+1}))}}.
\end{align}
Since $\bs{I}(t_k)^{T}\bs{Q}(t_k)\geq \bs{I}(t_{k+1})^{T}\bs{Q}(t_k)$, (\ref{equation: Q-BMW deterministic estimate 1}) and (\ref{equation: Q-BMW deterministic estimate 2}) can be rearranged as
\begin{align}
\frac{T_s\bs{I}(t_k)^T \bs{Q}(t_k)}{F(\bs{Q}(t_k))}&\leq \bkt[\bigg]{1+\frac{T_s}{F(\bs{Q}(t_k))}}\bkt[\Bigg]{\bkt[\Big]{T_k \bs{I}(t_k)^T\bkt[\big]{\bm{\mu}-\bm{\lambda}}} - \bs{I}(t_k)^T\Delta \widetilde{\bs{Q}}(t_{k+1})}\\
&\hspace{72pt}+ T_k\bs{I}(t_{k+1})^T\bm{\lambda}+\bs{I}(t_{k+1})^T\Delta \widetilde{\bs{Q}}(t_{k+1})\\
&\leq T_k \bkt[\Big]{\mu_{\max}(T_s+1)+\lambda_{\max}}+(T_s+2)\sum_{i=1}^{N}{\lvert\Delta\widetilde{\bs{Q}}_i(t_{k+1})\rvert}.
\end{align}
By the Functional Law of Iterated Logarithm \cite{Chen2001}, with probability one we have
\begin{equation}
\Delta\widetilde{{Q}}_{i}(t_{k+1}) = \textrm{O}(\sqrt{T_k \log \log T_k}), \hspace{12pt} \forall i=1,...,N.
\end{equation}
Therefore, by choosing $F(\bs{Q}(t_k))$ as in Theorem \ref{theorem: Q-BMW throughput-optimal}, we have
\begin{equation}
\bkt[\bigg]{\sum_{k=1}^{K_{\T}}{\bkt[\Big]{\bs{1}^T \bs{Q}(t_k)}^{1-\alpha}}}\leq N{\sum_{k=1}^{K_{\T}}\bkt[\Big]{ \bkt[\big]{\mu_{\max}(T_s+1)+\lambda_{\max}}\frac{T_k}{T_s} + \textrm{O}(\sqrt{T_k \log \log T_k})}}. \label{equation:single-beam queue length bound stochastic}
\end{equation}
By dividing both sides of (\ref{equation:single-beam queue length bound stochastic}) by $K_{\T}$ and using Theorem \ref{theorem:single-channel Q-BMW Tk}, we know there exists some constant $B_0<\infty$ such that
\begin{equation}
\lim_{{\T}\rightarrow \infty}\frac{\sum_{k=1}^{K_{\T}}{\bkt[\Big]{\bs{1}^T \bs{Q}(t_k)}^{1-\alpha}}}{K_{\T}}\leq{\frac{B_0}{\epsilon^{*}}},
\label{equation:single-beam queue length bound stochastic 2}
\end{equation}
almost surely. For any $\alpha\in (0,1)$, by Theorem \ref{theorem: Q-BMW throughput-optimal}, we know that the Markov chain induced by $\{Q(t)\}_{t\geq 0}$ is positive recurrent and therefore
\begin{equation}
\label{equation: limiting average to expectation}
\E[\Big]{\bkt[\big]{{\bs{1}^T \bs{Q}(t_k)}}^{1-\alpha}}= \lim_{\T\rightarrow \infty}\frac{\sum_{k=1}^{K_{\T}} \bkt[\big]{{\bs{1}^T \bs{Q}(t_k)}}^{1-\alpha}}{K_{\T}},
\end{equation} 
almost surely. Hence, by Lemma \ref{lemma: queue length bound for all t} along with (\ref{equation:single-beam queue length bound stochastic 2}) and (\ref{equation: limiting average to expectation}), there exists a positive constant $B<\infty$ such that
\begin{equation}
\lim_{\epsilon^{*}\downarrow 0}\epsilon^{*}\E[\Big]{(\bs{1}^{T}\bs{Q}(t))^{1-\alpha}}\leq B.
\end{equation}
By choosing $\alpha$ arbitrarily close to 0, the Q-BMW policy indeed achieves the asymptotically tight queue length bound. $\qed$
\end{proof}

\section{Waiting-Time-Based Biased Max-Weight Scheduling}
\label{section: W-BMW}
\subsection{Throughput-Optimality}
We extend the framework introduced in Section \ref{section: throughput-optimal Q-BMW} and Section \ref{section: delay-optimal Q-BMW} to the waiting-time-based Biased Max-Weight (W-BMW) scheduling policy. Throughout this section, we relax the assumption that the arrival processes are i.i.d. for all the queues. Instead, we make a mild assumption on the arrival processes: for each queue $i$, the inter-arrival times $\{V_i(m)\}_{m\geq 0}$ form an i.i.d. sequence and are upper bounded by a constant $V_{\max}<\infty$, almost surely. Note that with this assumption, the following analysis of W-BMW policy also applies to queueing systems with periodic arrivals.
 
\vspace{2mm}
\noindent {\bf W-BMW policy:} Let $G(\cdot): \mathbb{R}_{+}^N\rightarrow [1,\infty)$ be a function chosen by the server. At each time $t$ in the $k$-th interval, if the system satisfies
\begin{equation}
\label{equation:HOL policy}
\bkt[\bigg]{1+\frac{T_s}{G(\bs{W}(t_k))}}\bkt[\Big]{\bs{I}(t_k)^T \bs{W}(t)} \leq \bkt[\bigg]{\max_{j: 1\leq j \leq J}(\bs{I}^{(j)})^T \bs{W}(t)},
\end{equation}
then the server enters SWITCH mode to prepare for serving the schedule with the largest sum of HOL waiting time at time $t$ (ties are broken arbitrarily).
Otherwise, the server stays with the current schedule. $\qed$\newline

\begin{remark}
Both the Q-BMW and W-BMW have the same performance in terms of throughput-optimality and delay-optimality defined in Section \ref{section: prelim}. The advantage of W-BMW is that it achieves better fairness than the Q-BMW policy in terms of per-queue average delay, especially when there is a large difference in the arrival rates and service rates between different queues. We will further describe this feature of W-BMW through simulation in Section \ref{section: simulation}.
\end{remark}
\begin{lemma}
\label{lemma:HOL lower bound for Tk}
Suppose the server can serve at most $K$ queues at a time. Under the W-BMW policy, for every $k\geq 0$ and for every sample path, in the $k$-th interval we have
\begin{equation}
T_k\geq {C_1(\bs{1}^T\bs{W}(t_k))}/{F(\bs{W}(t_k))},
\end{equation}
where $C_1=T_s/\bkt[\big]{NK\bkt[\big]{1+(1+T_s)S_{\max}V_{\max}}}$.  $\qed$
\end{lemma}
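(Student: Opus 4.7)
The plan is to mirror the proof of Lemma \ref{lemma:lower bound for Tk} step by step, with queue lengths replaced by HOL waiting times. Two dynamical bounds will substitute for the roles played by $A_{\max}$ and $S_{\max}$ in the Q-BMW argument: every HOL waiting time grows by at most $1$ per slot, since an unserved HOL simply ages by one slot, and a served queue loses at most $S_{\max}V_{\max}$ from its HOL waiting time per slot, since at most $S_{\max}$ jobs depart and each departure shrinks the HOL by the inter-arrival time of the departed job, which is capped by $V_{\max}$.

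Concretely, I would suppose that the W-BMW switching rule (\ref{equation:HOL policy}) first fires at time $t_k+\tau$, so that some $\bs{I}^{(m)} \neq \bs{I}(t_k)$ satisfies
\begin{equation*}
\bkt[\Big]{1+T_s/G(\bs{W}(t_k))}\bs{I}(t_k)^T \bs{W}(t_k+\tau) \leq (\bs{I}^{(m)})^T \bs{W}(t_k+\tau).
\end{equation*}
The two dynamical bounds yield $(\bs{I}^{(m)})^T \bs{W}(t_k+\tau) \leq (\bs{I}^{(m)})^T(\bs{W}(t_k)+\tau\bs{1})$ together with $\bs{I}(t_k)^T \bs{W}(t_k+\tau) \geq \bs{I}(t_k)^T(\bs{W}(t_k)-\tau S_{\max}V_{\max}\bs{1})$. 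Substituting these into the switching inequality, using $|\bs{I}^{(m)}|, |\bs{I}(t_k)| \leq K$, the max-weight comparison $\bs{I}(t_k)^T \bs{W}(t_k) \geq (\bs{I}^{(m)})^T \bs{W}(t_k)$ (which holds because $\bs{I}(t_k)$ was selected as the max-weight schedule at time $t_k$), and $G(\bs{W}(t_k))\geq 1$, and then rearranging exactly as in Lemma \ref{lemma:lower bound for Tk}, I reach
\begin{equation*}
K\bkt[\big]{1+(1+T_s)S_{\max}V_{\max}}\tau \geq T_s\, \bs{I}(t_k)^T \bs{W}(t_k)/G(\bs{W}(t_k)).
\end{equation*}

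Finally, since every queue belongs to at least one maximal schedule and $\bs{I}(t_k)$ is a max-weight schedule at $t_k$, I have $\bs{I}(t_k)^T \bs{W}(t_k) \geq \max_i W_i(t_k) \geq \bs{1}^T \bs{W}(t_k)/N$, and dividing through delivers $T_k \geq C_1 \bs{1}^T \bs{W}(t_k)/G(\bs{W}(t_k))$ with the stated $C_1$. The argument is a direct translation of Lemma \ref{lemma:lower bound for Tk} and introduces no new technique; the only step that warrants care is the per-slot depletion bound $S_{\max}V_{\max}$, which relies on the uniform inter-arrival cap $V_i(m)\leq V_{\max}$ posited at the start of Section \ref{section: W-BMW}.
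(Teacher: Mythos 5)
Your proposal is correct and follows essentially the same route as the paper's own proof: the same switching inequality at the first firing time, the same two per-slot bounds (growth by at most $1$, depletion by at most $S_{\max}V_{\max}$), the same rearrangement dropping the nonnegative term $\bs{I}(t_k)^T\bs{W}(t_k)-(\bs{I}^{(m)})^T\bs{W}(t_k)$, and the same final reduction $\bs{I}(t_k)^T\bs{W}(t_k)\geq \bs{1}^T\bs{W}(t_k)/N$. No substantive differences to report.
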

\begin{proof}
Suppose at time $t=t_k+\tau$, $\tau>0$, the server enters SWITCH mode and starts switching. Then, there exists some schedule $\bs{I}^{(m)}\neq \bs{I}(t_k)$ such that
\begin{equation}
\bkt[\bigg]{1+\frac{T_s}{G(\bs{W}(t_k))}} \bkt[\Big]{\bs{I}(t_k)^T \bs{W}(t_k+\tau)} \leq \bkt[\Big]{(\bs{I}^{(m)})^T \bs{W}(t_k+\tau)}.\label{equation:HOL Tk lower bound 1}
\end{equation}
Moreover, we know
\begin{equation}
\bkt[\big]{\bs{I}^{(m)}}^T \bkt[\Big]{\bs{W}(t_k)+\tau \bs{1}}\geq \bkt[\big]{\bs{I}^{(m)}}^T\bs{W}(t_k+\tau).\label{equation:HOL Tk lower bound 2} 
\end{equation}
From (\ref{equation:HOL Tk lower bound 1}) and (\ref{equation:HOL Tk lower bound 2}), we have
\begin{align}
\bkt[\big]{\bs{I}^{(m)}}^T \bkt[\Big]{\bs{W}(t_k)+\tau \bs{1}}&\geq \bkt[\Big]{\bs{I}(t_k)^T \bs{W}(t_k+\tau)} \bkt[\bigg]{1+\frac{T_s}{G(\bs{W}(t_k))}}\\
& \geq \bkt[\Big]{\bs{I}(t_k)^T \bkt[\big]{\bs{W}(t_k)-\tau S_{\max}V_{\max}\bs{1}}} \bkt[\bigg]{1+\frac{T_s}{G(\bs{W}(t_k))}}
\end{align}
Next, we rearrange the above equations as
\begin{align}
K\bkt[\Big]{1+(1+T_s) S_{\max}V_{\max}}\tau&\geq \bs{I}(t_k)^T\bs{W}(t_k)-\bkt[\big]{\bs{I}^{(m)}}^T\bs{W}(t_k)+\frac{T_s{\bs{I}(t_k)}^T\bs{W}(t_k)}{G(\bs{W}(t_k))}\\
&\geq \frac{T_s{\bs{I}(t_k)}^T\bs{W}(t_k)}{G(\bs{W}(t_k))}\\
&\geq \frac{T_s{\bs{1}}^T\bs{W}(t_k)}{N\cdot G(\bs{W}(t_k))}
\end{align}
Hence, we can get the lower bound on $T_k$:
\begin{equation}
T_k\geq\frac{T_s\cdot{\bs{1}}^T\bs{W}(t_k)}{NK\bkt[\big]{1+(1+T_s)S_{\max}V_{\max}}\cdot G(\bs{W}(t_k))}.
\end{equation}
$\qed$
\end{proof}

Next, we show that W-BMW policy is also throughput-optimal in the following theorem.
\begin{theorem}
\label{theorem: W-BMW throughput-optimal}
If we choose $G(\bs{W}(t_k))=\max\{1, (\bs{1}^T \bs{W}(t_k))^{\alpha}\}$ with $\alpha\in(0,1)$, then the W-BMW policy is throughput-optimal. Moreover, the underlying Markov chain induced by the waiting time process $\{\bs{W}(t)\}_{t\geq 0}$ is positive recurrent. $\qed$
\end{theorem}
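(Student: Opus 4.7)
The plan is to mirror the multi-step Lyapunov drift argument used for Theorem \ref{theorem: Q-BMW throughput-optimal}, with the waiting-time vector $\bs{W}(t)$ replacing the queue-length vector $\bs{Q}(t)$ and with Lemma \ref{lemma:HOL lower bound for Tk} playing the role that Lemma \ref{lemma:lower bound for Tk} played there. First I would fix an interior arrival-rate vector $\bm{\lambda}\in\mathrm{\Lambda}$, and pick $\delta>0$ together with $\bm{\beta}\ge 0$ satisfying $\sum_{j=1}^{J}\beta_j\le 1-\delta$ and $\bm{\rho}+\delta\bs{1}\le\sum_{j=1}^{J}\beta_j\bs{I}^{(j)}$, so that there is strict slack from the boundary of $\mathrm{\Lambda}$. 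I would then introduce the quadratic Lyapunov function $L(\bs{w})=\tfrac{1}{2}\|\bs{w}\|^{2}$ on the countable state space of $\{\bs{W}(t)\}$ and analyse the conditional multi-step drift $\E{L(\bs{W}(t_{k+1}))-L(\bs{W}(t_k))\given \bs{W}(t_k)}$ across a whole inter-switching interval, or, if needed, a hypothetical observation window spanning several consecutive intervals exactly as in the Q-BMW case.

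The drift splits into two pieces. For a queue $i$ not scheduled during the $k$-th interval, $W_i(t_{k+1})=W_i(t_k)+T_k$, so its contribution is of order $T_kW_i(t_k)+T_k^{2}$. For a scheduled queue, the HOL job departs after a random time, and successive HOL jobs are cleared at inter-arrival spacings with mean $1/\lambda_i$ and almost-sure upper bound $V_{\max}$; hence the decrement in $W_i$ over the interval can be related to $(T_k-T_s)\mu_i/\lambda_i - W_i(t_k)$ plus a fluctuation term of order $\sqrt{T_k}$. Combining the two pieces and invoking the W-BMW rule (\ref{equation:HOL policy}), which guarantees $\bs{I}(t_k)^T\bs{W}(t_k)\ge (1+T_s/G(\bs{W}(t_k)))^{-1}\max_{j}(\bs{I}^{(j)})^T\bs{W}(t_k)$, the slack $\delta$ forces the leading linear term in $\bs{1}^T\bs{W}(t_k)$ to carry a strictly negative coefficient.

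The critical estimate is that the switching penalty $T_s$ is asymptotically negligible compared with this negative drift, and this is exactly where Lemma \ref{lemma:HOL lower bound for Tk} enters: with $G(\bs{W}(t_k))=\max\{1,(\bs{1}^T\bs{W}(t_k))^{\alpha}\}$ and $\alpha\in(0,1)$, one obtains $T_k\ge C_1(\bs{1}^T\bs{W}(t_k))^{1-\alpha}$ whenever $\bs{1}^T\bs{W}(t_k)\ge 1$, so both $T_s/T_k$ and $T_s/G(\bs{W}(t_k))$ vanish as $\|\bs{W}(t_k)\|\to\infty$. Outside a finite set of states the drift is therefore bounded above by $-c\,\bs{1}^T\bs{W}(t_k)+\text{const}$; Foster's criterion gives positive recurrence of $\{\bs{W}(t)\}$, and summing this drift bound in the Ces\`aro sense gives $\sup_{t}\E{\bs{1}^T\bs{W}(t)}<\infty$. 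To convert waiting-time stability into the queue-length strong stability required by the definition of throughput-optimality, I would use the deterministic bound $Q_i(t)\le A_{\max}(W_i(t)+1)$, valid because every job currently buffered in queue $i$ arrived inside the window $(t-W_i(t),t\,]$ and at most $A_{\max}$ jobs arrive per slot.

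The main obstacle I anticipate is the drift computation for the served queues. Unlike the linear queue-length recursion used in the Q-BMW proof, the number of HOL departures inside an interval is a random stopping number that depends on the realised inter-arrival and service sequences, so controlling the resulting second-moment fluctuation and absorbing it into the $\textrm{O}(\sqrt{T_k})$ remainder requires Wald's identity together with the almost-sure bounds $V_i(m)\le V_{\max}$ and $S_i(t)\le S_{\max}$. Once this estimate is in hand, the rest of the argument is essentially a verbatim adaptation of the Q-BMW analysis in Appendix 1.
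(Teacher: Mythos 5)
Your overall route --- a multi-step Lyapunov drift over switching intervals, Lemma \ref{lemma:HOL lower bound for Tk} to make the switching cost $T_s$ negligible relative to the negative drift, a Foster-type conclusion of positive recurrence for $\{\bs{W}(t)\}$, and a final conversion from waiting times to queue lengths --- is exactly the paper's (Appendix 2). However, two of your concrete choices would break the argument as written. First, the unweighted Lyapunov function $L(\bs{w})=\tfrac12\|\bs{w}\|^2$ does not close the drift computation. Over an interval the served queue $i$ drains its HOL waiting time at rate $\rho_i^{-1}=\mu_i/\lambda_i$ while every backlogged queue ages at rate $1$, so the cross term of your drift is proportional to $\bs{1}^T\bs{W}(t_k)-\sum_i I_i(t_k)W_i(t_k)/\rho_i$. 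The W-BMW rule maximizes the \emph{unweighted} sum $\bs{I}^T\bs{W}$, and the capacity-region decomposition reads $\bm{\rho}\le\sum_j\beta_j\bs{I}^{(j)}$; neither lets you dominate $\bs{1}^T\bs{W}$ by $\sum_i I_iW_i/\rho_i$ in general (e.g.\ three singleton schedules with $\bm{\rho}=(0.8,0.05,0.05)$ and $\bs{W}=(w,w-1,w-1)$ give a cross term of order $+1.75w$). The paper instead takes $L_2(t)=\bs{W}(t)^T\bs{P}\bs{W}(t)$ with $\bs{P}=\mathrm{diag}(\rho_1,\dots,\rho_N)$, so the cross term becomes $\bm{\rho}^T\bs{W}-\bs{I}(t_k)^T\bs{W}$, which the max-weight property together with $\sum_j\beta_j=1-\epsilon$ bounds by $-\epsilon\,\bs{I}(t_k)^T\bs{W}(t_k)$.

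Second, your observation window goes in the wrong direction. You propose taking the drift over a whole interval $[t_k,t_{k+1})$ ``or several consecutive intervals,'' but $T_k$ has no upper bound in terms of $\bs{W}(t_k)$ (Lemma \ref{lemma:HOL lower bound for Tk} only bounds it from below), so the second-moment term $\mathrm{O}(T_k^2)$ in the quadratic expansion cannot be absorbed by the negative term $-\epsilon T_k\bs{I}(t_k)^T\bs{W}(t_k)$, and the full-interval drift need not be negative (for small $\epsilon$, states in which $T_k=\Theta(\bs{1}^T\bs{W}(t_k))$ already defeat the bound). The paper's fix is to \emph{truncate} the window inside the interval, $\hat{T}_k=\min\{T_k,(\bs{1}^T\bs{W}(t_k))^{\alpha_1}\}$ with $\alpha_1<\alpha$ and $\alpha+\alpha_1<1$, so the quadratic term is $\mathrm{O}((\bs{1}^T\bs{W}(t_k))^{2\alpha_1})$ while the negative term is of order $(\bs{1}^T\bs{W}(t_k))^{1+\alpha_1}$; the leftover slots $t\in[t_k+\hat{T}_k,t_{k+1})$ are then handled by a separate one-step drift bound that exploits the fact that the switching condition (\ref{equation:HOL policy}) has not yet fired, i.e.\ $\bs{I}(t)^T\bs{W}(t)\ge(\bs{I}^{(j)})^T\bs{W}(t)-\bs{I}(t)^T\bs{W}(t)/G(\bs{W}(t_k))$ for all $j$. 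The remaining ingredients of your plan are sound: your treatment of the served-queue decrement via Wald's identity and the bounds $V_i(m)\le V_{\max}$, $S_i(t)\le S_{\max}$ matches the paper's construction of $\delta_i(t)=I_i(t)\sum_{m=1}^{S_i(t)}V_i(\varphi_i(t)+m)$, and your deterministic conversion $Q_i(t)\le A_{\max}(W_i(t)+1)$ is a valid substitute for the paper's bound $\mathbb{E}[Q_i(t)\,|\,W_i(t)]\le\lambda_iW_i(t)+1$.
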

\begin{proof}
The proof is provided in Appendix 2. We use the similar technique as in the proof of Theorem \ref{theorem: Q-BMW throughput-optimal} to prove that W-BMW is throughput-optimal. $\qed$
\end{proof}

\subsection{Asymptotically Tight Queue Length Bound under the W-BMW Scheduling}
As in Section \ref{section: delay-optimal Q-BMW}, we focus on systems where the server can serve at most one queue at a time. We show that W-BMW is also nearly delay-optimal when $\alpha \downarrow 0$ by proving the following theorem:

\begin{theorem}
\label{theorem:HOL single-channel delay-optimal stochastic}
Suppose the server can serve at most one queue at a time. For any such queueing system $\mathcal{Q}=(\mathcal{N}, \mathcal{I}, T_s)$ and stochastic arrival and service processes as described in Section \ref{section: model} and Section \ref{section: W-BMW}, W-BMW policy provides the following upper bound on queue length: there exists some constant $B<\infty$ such that
\begin{equation}
\lim_{\epsilon^{*}\downarrow 0}\epsilon^{*}\E[\Big]{(\bs{1}^{T}\bs{Q}(t))^{1-\alpha}}\leq B.
\end{equation}
Hence, $\E[\big]{(\bs{1}^{T}\bs{Q}(t))^{1-\alpha}}$ scales as $\textrm{O}(1/\epsilon^{*})$. By choosing $\alpha$ to be arbitrarily close to 0, the W-BMW policy achieves asymptotically tight queue length bound and hence it is delay-optimal. $\qed$
\end{theorem}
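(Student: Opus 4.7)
The plan is to mirror the Q-BMW argument of Theorem~\ref{theorem:single-channel delay-optimal stochastic} in the HOL-waiting-time state space, and then pass from the resulting waiting-time bound to a queue-length bound via a simple pathwise inequality. The first step is to establish the W-BMW analogue of Theorem~\ref{theorem:single-channel Q-BMW Tk}. W-BMW is work-conserving: if the scheduled queue empties during an interval its HOL waiting time drops to zero, so $\bs{I}(t_k)^{T}\bs{W}(t)=0$ while any non-empty queue contributes a strictly positive term to $\max_{j}(\bs{I}^{(j)})^{T}\bs{W}(t)$, and the switching condition~\eqref{equation:HOL policy} fires. Since Lemma~\ref{lemma:single-channel work-conserving Tk upper bound} uses only work-conservation and ergodicity (the latter guaranteed by Theorem~\ref{theorem: W-BMW throughput-optimal}), it applies verbatim to W-BMW and produces a constant $B_{0}<\infty$ with $\limsup_{\T\to\infty}\frac{1}{K_{\T}}\sum_{k=1}^{K_{\T}}T_k\leq B_{0}/\epsilon^{*}$ almost surely, which by positive recurrence upgrades to $\lim_{\epsilon^{*}\downarrow 0}\epsilon^{*}\E{T_k}\leq B_{0}$.

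Second, I would run the drift-plus-deviation argument directly on $\bs{W}$. Define
\[
\Delta\widetilde{\bs{W}}(t_k+\tau):=\bs{W}(t_k+\tau)-\bkt[\Big]{\bs{W}(t_k)+\tau\bs{1}-(\tau-T_s)\bkt[\big]{\bs{I}(t_k)\circ\bm{\kappa}}},
\]
where $\kappa_{i}=\mu_{i}/\lambda_{i}$ captures the average rate at which ACTIVE-mode service, driven by the inter-arrival renewal $\{V_{i}(m)\}$, drains the HOL waiting time of the served queue. Substituting this decomposition into~\eqref{equation:HOL policy} evaluated at $t=t_{k+1}$ and rearranging, in exact parallel with~\eqref{equation: Q-BMW deterministic estimate 1}--\eqref{equation: Q-BMW deterministic estimate 2}, would yield an inequality of the form
\[
\frac{T_{s}\bs{I}(t_{k})^{T}\bs{W}(t_{k})}{G(\bs{W}(t_{k}))}\leq C_{1}T_{k}+C_{2}\sum_{i=1}^{N}\lvert\Delta\widetilde{W}_{i}(t_{k+1})\rvert,
\]
and the Functional Law of the Iterated Logarithm~\cite{Chen2001}, applied to the i.i.d.\ inter-arrival and service sequences that drive $\bs{W}$, gives $\Delta\widetilde{W}_{i}(t_{k+1})=\textrm{O}(\sqrt{T_{k}\log\log T_{k}})$ almost surely. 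Plugging in $G(\bs{W}(t_{k}))=\max\{1,(\bs{1}^{T}\bs{W}(t_{k}))^{\alpha}\}$, summing over $k$, dividing by $K_{\T}$, and invoking Step~1 yields $\limsup_{\T\to\infty}\frac{1}{K_{\T}}\sum_{k=1}^{K_{\T}}(\bs{1}^{T}\bs{W}(t_{k}))^{1-\alpha}\leq B_{0}'/\epsilon^{*}$ almost surely. Positive recurrence from Theorem~\ref{theorem: W-BMW throughput-optimal} converts this Ces\`aro average into the stationary expectation, and a waiting-time analogue of Lemma~\ref{lemma: queue length bound for all t}, using the pathwise bound $W_{i}(t)\leq W_{i}(t_{k})+(t-t_{k})$ inside any interval, extends the estimate from $t=t_{k}$ to every $t$ and furnishes $\lim_{\epsilon^{*}\downarrow 0}\epsilon^{*}\E{(\bs{1}^{T}\bs{W}(t))^{1-\alpha}}\leq B'$.

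Third, I would convert this waiting-time bound into the advertised queue-length bound. Every job present in queue $i$ at time $t$ must have arrived strictly after the current HOL job, hence in the window $(t-W_{i}(t),t]$, which contains at most $(W_{i}(t)+1)A_{\max}$ arrivals by the per-slot bound $A_{i}(\tau)\leq A_{\max}$; summing over $i$ gives the deterministic pathwise inequality $\bs{1}^{T}\bs{Q}(t)\leq A_{\max}(\bs{1}^{T}\bs{W}(t)+N)$. Since $x\mapsto x^{1-\alpha}$ is concave on $[0,\infty)$ with $0^{1-\alpha}=0$ and therefore subadditive for $\alpha\in(0,1)$,
\[
\bkt[\big]{\bs{1}^{T}\bs{Q}(t)}^{1-\alpha}\leq A_{\max}^{1-\alpha}\bkt[\Big]{\bkt[\big]{\bs{1}^{T}\bs{W}(t)}^{1-\alpha}+N^{1-\alpha}},
\]
and taking expectations, multiplying by $\epsilon^{*}$, and letting $\epsilon^{*}\downarrow 0$ produces the constant $B$ required by the theorem.

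The main obstacle is the deviation estimate in the second step. Whereas in Q-BMW the quantity $\Delta\widetilde{\bs{Q}}$ is a sum of the independent partial-sum processes $\sum A_{i}$ and $\sum S_{i}$, to which the FLIL applies directly, here $\Delta\widetilde{\bs{W}}$ involves the renewal sum $\sum_{j=1}^{\hat{S}_{i}(\cdot)}V_{i}(\varphi(\cdot)+j)$ evaluated at the \emph{random} number of completed services. Controlling it requires a random-time-change step (Theorem~5.3 of~\cite{Chen2001}) composed with a FLIL for the inter-arrival renewal process, which is exactly where the $V_{\max}$ bound introduced at the start of Section~\ref{section: W-BMW} is needed. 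Once this one estimate is secured, every remaining step is a mechanical transcription of the Q-BMW proof.
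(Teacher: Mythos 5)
Your proposal follows the same overall architecture as the paper's proof: establish work-conservation of W-BMW (the paper's Lemma \ref{lemma:W-BMW single-channel work-conserving}), invoke Lemma \ref{lemma:single-channel work-conserving Tk upper bound} to get the interval-length bound of Theorem \ref{theorem:W-BMW single-channel Tk}, decompose the waiting-time evolution around its mean drift with per-active-slot drain rate $\mu_i/\lambda_i$ (your $\bm{\kappa}$ is exactly the paper's $\bm{\rho}^{-1}$), control the deviation by the Functional Law of the Iterated Logarithm, sum over intervals, and use positive recurrence from Theorem \ref{theorem: W-BMW throughput-optimal} plus a Lemma-\ref{lemma: queue length bound for all t}-type argument to pass from interval boundaries to all $t$. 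The one place you genuinely diverge is the conversion from waiting times to queue lengths. The paper applies a second FLIL estimate at the interval start times, $Q_i(t_k)=\lambda_i W_i(t_k)+O(\sqrt{W_i(t_k)\log\log W_i(t_k)})$, and only then passes to expectations; you instead first obtain the stationary bound on $\E[\big]{(\bs{1}^{T}\bs{W}(t))^{1-\alpha}}$ and convert via the deterministic pathwise inequality $\bs{1}^{T}\bs{Q}(t)\leq A_{\max}\bkt[\big]{\bs{1}^{T}\bs{W}(t)+N}$ together with subadditivity of $x\mapsto x^{1-\alpha}$. Your route is more elementary and arguably more robust: it needs only the per-slot arrival bound $A_{\max}$ rather than a second almost-sure asymptotic expansion, and it sidesteps any question of uniformity of the FLIL error over the random sequence $\{t_k\}$. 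What it gives up is the sharper constant $\lambda_i$ in front of $W_i$ (you get $A_{\max}$ instead), which is immaterial for the $O(1/\epsilon^{*})$ scaling claimed in the theorem. Your closing caveat about $\Delta\widetilde{\bs{W}}$ involving a renewal sum evaluated at a random number of completed services is well taken; the paper asserts the FLIL bound for $\Delta\widetilde{W}_i$ without spelling out the random-time-change composition, so your proposal is, if anything, more explicit about the one delicate estimate both arguments rely on.
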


We introduce some necessary lemmas for the proof of Theorem \ref{theorem:HOL single-channel delay-optimal stochastic}.

\begin{lemma}
\label{lemma:W-BMW single-channel work-conserving}
Suppose the server can serve at most one queue at a time. For any queueing system $\mathcal{Q}=(\mathcal{N}, \mathcal{I}, T_s)$ described in Section \ref{section: model}, the W-BMW policy is work-conserving. $\qed$
\end{lemma}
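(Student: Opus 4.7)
The plan is to mimic the brief proof of the corresponding Q-BMW work-conservation result (Lemma~\ref{lemma:single-channel Q-BMW work-conserving}), replacing queue lengths with HOL waiting times and exploiting the fact that $|\bs{I}|=1$ reduces every schedule to the selection of a single queue.

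First I would observe that when the server can serve at most one queue at a time, each feasible schedule $\bs{I}^{(j)}$ is a standard basis vector, so $\bs{I}(t_k)^T \bs{W}(t) = W_{i^*}(t)$ where $i^*$ is the unique queue selected by $\bs{I}(t_k)$. If the scheduled queue $i^*$ is empty at time $t$, then by the HOL convention (an empty queue holds no HOL job) we have $W_{i^*}(t)=0$, so the left-hand side of the switching condition (\ref{equation:HOL policy}) is $0$.

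Next I would argue that whenever some other queue $j\neq i^*$ is non-empty, its HOL waiting time satisfies $W_j(t)\geq 0$, and by the recursion for $W_j$ in Section~\ref{section: model} the HOL job of a non-empty queue has been buffered for at least one slot, so $W_j(t)\geq 1$. Taking the schedule $\bs{I}^{(m)}$ that selects queue $j$ yields $(\bs{I}^{(m)})^T\bs{W}(t)=W_j(t)\geq 1>0$. Hence the switching condition $0\leq (\bs{I}^{(m)})^T\bs{W}(t)$ is strictly satisfied and the server is forced into SWITCH mode, moving away from the empty queue. Therefore the server never remains on an empty schedule while any other queue has unfinished work, which is exactly the work-conservation property.

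The main subtlety is the corner case where a fresh arrival at an empty queue could produce $W_j(t)=0$ depending on how one interprets the update rule in the arrival slot; here the weak inequality in (\ref{equation:HOL policy}) still gives $0\leq 0$, so a switch is triggered, and the ``ties broken arbitrarily'' clause can be resolved in favor of a non-empty queue without loss of generality. I do not expect this to be a real obstacle, and overall the argument is essentially a one-paragraph replication of the Q-BMW proof with $\bs{Q}$ replaced by $\bs{W}$.
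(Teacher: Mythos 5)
Your proposal is correct and follows essentially the same route as the paper's proof: an empty scheduled queue forces $\bs{I}(t)^T\bs{W}(t)=0$, so the weak inequality in (\ref{equation:HOL policy}) is triggered and the server switches whenever another queue is non-empty. Your extra care about the corner case $W_j(t)=0$ with $Q_j(t)>0$ (a fresh arrival) and the tie-breaking convention is a detail the paper glosses over, but it does not change the substance of the argument.
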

\begin{proof}
By definition, $Q_i(t) = 0$ implies $W_i(t) = 0$ for any queue $i$ and any time $t$. Under the W-BMW policy, if the scheduled queue becomes empty, then we have $\bs{I}(t)^T\bs{W}(t)=0$. Meanwhile, if there also exists another non-empty queue, then the switching condition (\ref{equation:HOL policy}) should be triggered. Therefore, the W-BMW policy never idles when there is still a job in the system. $\qed$
\end{proof}

\begin{theorem}
\label{theorem:W-BMW single-channel Tk}
Let $K_{\T}$ be the number of intervals in $[0,\T)$. For any queueing system $\mathcal{Q}=(\mathcal{N}, \mathcal{I}, T_s)$ described in Section \ref{section: model} where the server can serve at most one queue at a time, under the W-BMW policy, there exists some constant $B_0<\infty$ such that
\begin{equation}
\lim_{\T\rightarrow\infty}\frac{\sum_{k=1}^{K_{\T}}T_k}{K_{\T}}\leq \frac{B_0}{\epsilon^{*}},
\end{equation}
almost surely. $\qed$
\end{theorem}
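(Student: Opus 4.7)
The plan is to recognize Theorem \ref{theorem:W-BMW single-channel Tk} as the precise W-BMW analogue of Theorem \ref{theorem:single-channel Q-BMW Tk}: the entire asymptotic estimate has already been encapsulated in Lemma \ref{lemma:single-channel work-conserving Tk upper bound}, which supplies the $B_0/\epsilon^{*}$ bound for any ergodic work-conserving scheduling policy when $\lvert \bs{I}\rvert = 1$. Only the two hypotheses of that lemma need to be verified for W-BMW.

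First I would invoke Lemma \ref{lemma:W-BMW single-channel work-conserving}, which already shows that W-BMW is work-conserving in the single-channel regime. Next I would invoke Theorem \ref{theorem: W-BMW throughput-optimal}: for any $\bm{\lambda}$ in the interior of $\mathrm{\Lambda}$, the waiting-time Markov chain $\{\bs{W}(t)\}_{t\geq 0}$ induced by W-BMW is positive recurrent, so W-BMW qualifies as ergodic under the definition preceding Lemma \ref{lemma:single-channel work-conserving Tk upper bound}.

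The only subtlety I need to address is that the proof of Lemma \ref{lemma:single-channel work-conserving Tk upper bound} uses ergodicity exclusively to deduce existence of $\lim_{\T\to\infty}Y_i(\T)/\T$, where $Y_i(\T)$ counts service-slots, and was phrased with positive recurrence of the queue-length chain in mind (as used by Q-BMW). Under W-BMW we instead have positive recurrence of $\{\bs{W}(t)\}$. The transfer is harmless: the bounded inter-arrival assumption $V_i(m)\leq V_{\max}$ added at the start of Section \ref{section: W-BMW} forces $Q_i(t)\geq 1 + \lfloor W_i(t)/V_{\max}\rfloor$ whenever $W_i(t)\geq 1$, so positive recurrence of $\{\bs{W}(t)\}$ prevents $\bs{Q}(t)$ from escaping to infinity, and the rate-stability argument inside Lemma \ref{lemma:single-channel work-conserving Tk upper bound} then yields $\lim_{\T\to\infty}Y_i(\T)/\T = \rho_i$ almost surely for every queue $i$.

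With both hypotheses in place, Lemma \ref{lemma:single-channel work-conserving Tk upper bound} directly gives the claim with the same constant $B_0 = T_s + NS_{\max}/\mu_{\min}$ as in Theorem \ref{theorem:single-channel Q-BMW Tk}. I do not anticipate a substantive obstacle anywhere; the only non-mechanical step is the brief ergodicity-transfer remark above, and even that is essentially immediate from the bounded inter-arrival assumption imposed on W-BMW.
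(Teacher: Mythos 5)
Your proposal is correct and follows essentially the same route as the paper, whose proof is literally the one-line combination of Lemma~\ref{lemma:single-channel work-conserving Tk upper bound} with Lemma~\ref{lemma:W-BMW single-channel work-conserving}; you are in fact more careful than the paper in explicitly verifying the ergodicity hypothesis via Theorem~\ref{theorem: W-BMW throughput-optimal}. One small quibble with your aside: the inequality $Q_i(t)\geq 1+\lfloor W_i(t)/V_{\max}\rfloor$ bounds $\bs{Q}$ from \emph{below} by $\bs{W}$, which is the wrong direction for arguing that positive recurrence of $\{\bs{W}(t)\}$ keeps $\bs{Q}(t)$ from escaping; the aside is unnecessary anyway, since the existence of $\lim_{\T\rightarrow\infty}Y_i(\T)/\T$ follows from the ergodic theorem applied directly to the waiting-time chain, which is all the proof of Lemma~\ref{lemma:single-channel work-conserving Tk upper bound} requires.
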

\begin{proof}
This is a direct result of Lemma \ref{lemma:single-channel work-conserving Tk upper bound} and Lemma \ref{lemma:W-BMW single-channel work-conserving}. $\qed$
\end{proof}

We are now ready to prove Theorem \ref{theorem:HOL single-channel delay-optimal stochastic} as follows.

\begin{proof} (Theorem \ref{theorem:HOL single-channel delay-optimal stochastic})
Given ${\bs{W}}(t_k)$, for any $\tau\geq T_s$, define $\Delta \widetilde{\bs{W}}(t_k+\tau):=\bs{W}(t_k+\tau)-\bkt[\Big]{\bs{W}(t_k)+\tau\bs{1}-(\tau-T_s)\bkt[\big]{\bm{\rho}^{-1}\circ{\bs{I}(t_k)}}}$, where $\bm{\rho}^{-1}:=(\rho_{1}^{-1},...,\rho_{N}^{-1})$ and $\bm{\rho}^{-1}\circ {\bs{I}(t_k)}$ denotes the element-wise product (also called Hadamard product) of $\bm{\rho}^{-1}$ and ${\bs{I}(t_k)}$. 
Therefore, at time $t_{k+1}$, under the W-BMW policy, we have
\begin{align}
\label{equation: W-BMW deterministic estimate 1}
&\bkt[\bigg]{1+\frac{T_s}{G(\bs{W}(t_k))}}\bkt[\Big]{{\bs{I}(t_k)^T}\bkt[\Big]{\bs{W}(t_k)+T_k\bs{1}- (T_k-T_s)\bkt[\big]{\bm{\rho}^{-1}\circ{\bs{I}(t_k)}}+\Delta \widetilde{\bs{W}}(t_{k+1})}}\\
\label{equation: W-BMW deterministic estimate 2}
&\leq \bkt[\bigg]{\bs{I}(t_{k+1})^T \bkt[\Big]{\bs{W}(t_k)+T_k\bs{1}+\Delta \widetilde{\bs{W}}(t_{k+1}))}}
\end{align}
Since $\bs{I}(t_k)^{T}\bs{W}(t_k)\geq \bs{I}(t_{k+1})^{T}\bs{W}(t_k)$, we can rearrange (\ref{equation: W-BMW deterministic estimate 1}) and (\ref{equation: W-BMW deterministic estimate 2}) as
\begin{align}
\frac{T_s\bs{I}(t_k)^T \bs{W}(t_k)}{G(\bs{W}(t_k))}&\leq \bkt[\bigg]{1+\frac{T_s}{G(\bs{W}(t_k))}}\bkt[\Bigg]{\bkt[\Big]{T_k \bs{I}(t_k)^T\bkt[\big]{\bm{\rho}^{-1}-\bs{1}}} - \bs{I}(t_k)^T\Delta \widetilde{\bs{W}}(t_{k+1})}\\
&\hspace{60pt}+ T_k\bs{I}(t_{k+1})^T\bs{1}+\bs{I}(t_{k+1})^T\Delta \widetilde{\bs{W}}(t_{k+1})\\
&\leq T_k \bkt[\bigg]{{\frac{\mu_{\max}(T_s+1)}{\lambda_{\min}}}+1}+(T_s+2)\sum_{i=1}^{N}{\lvert\Delta\widetilde{\bs{W}}_i(t_{k+1})\rvert}.
\end{align}
By the Functional Law of Iterated Logarithm \cite{Chen2001}, with probability one we have
\begin{equation}
\Delta\widetilde{{W}}_{i}(t_{k+1}) = \textrm{O}(\sqrt{T_k \log \log T_k}), \hspace{12pt} \forall i=1,...,N.
\end{equation}
Therefore, we have
\begin{equation}
\bkt[\bigg]{\sum_{k=1}^{K_{\T}}{\bkt[\Big]{\bs{1}^T \bs{W}(t_k)}^{1-\alpha}}}\leq \frac{N}{T_s}{\sum_{k=1}^{K_{\T}}\bkt[\bigg]{{\bkt[\Big]{{\frac{\mu_{\max}(T_s+1)}{\lambda_{\min}}}+1}}T_k + \textrm{O}(\sqrt{T_k \log \log T_k})}}.\label{equation:HOL single-beam queue length bound stochastic}
\end{equation}
By dividing both sides of (\ref{equation:HOL single-beam queue length bound stochastic}) by $K_{\T}$ and using Theorem \ref{theorem:W-BMW single-channel Tk}, there must exist some constant $B_0<\infty$ such that
\begin{equation}
\lim_{{\T}\rightarrow \infty}\frac{\sum_{k=1}^{K_{\T}}{\bkt[\Big]{\bs{1}^T \bs{W}(t_k)}^{1-\alpha}}}{K_{\T}}\leq\frac{B_0}{\epsilon^{*}}.
\label{equation:HOL single-beam waiting time bound stochastic}
\end{equation}
By the Functional Law of Iterated Logarithm, as $\epsilon^{*}$ approaches 0, with probability one we further have
\begin{equation}
Q_i(t_k) = \lambda_i W_i(t_k) + O(\sqrt{W_i(t_k) \log \log W_i(t_k)}), \hspace{6pt} \forall i\in \mathcal{N}.
\label{equation:HOL Qi and Wi FLIL}
\end{equation}
Therefore, from (\ref{equation:HOL single-beam waiting time bound stochastic}) and (\ref{equation:HOL Qi and Wi FLIL}), there exists another constant $B_1<\infty$ such that
\begin{equation}
\lim_{{\T}\rightarrow \infty}\frac{\sum_{k=1}^{K_{\T}}{\bkt[\Big]{\bs{1}^T \bs{Q}(t_k)}^{1-\alpha}}}{K_{\T}}\leq \frac{B_1}{\epsilon^{*}}.
\end{equation}
For any $\alpha\in (0,1)$, by using Theorem \ref{theorem: W-BMW throughput-optimal}, we know that the Markov chain induced by $\{\bs{Q}(t)\}_{t\geq 0}$ is positive recurrent and hence
\begin{equation}
\label{equation: limiting average to expectation of waiting time}
\E[\Big]{\bkt[\big]{{\bs{1}^T \bs{Q}(t_k)}}^{1-\alpha}}= \lim_{\T\rightarrow \infty}\frac{\sum_{k=1}^{K_{\T}} \bkt[\big]{{\bs{1}^T \bs{Q}(t_k)}}^{1-\alpha}}{K_{\T}}
\end{equation}
By Lemma \ref{lemma: queue length bound for all t}, we obtain the queue length bound as
\begin{equation}
\lim_{\epsilon^{*}\downarrow 0}\epsilon^{*}\E[\Big]{(\bs{1}^{T}\bs{Q}(t))^{1-\alpha}}\leq B,
\end{equation}
for some finite constant $B>0$. By choosing the parameter $\alpha$ to be arbitrarily close to 0, the W-BMW scheduling policy can achieve asymptotically tight queue length upper bound and hence is delay-optimal. $\qed$
\end{proof}

\section{Simulation}
\label{section: simulation}
In this section, we explore the delay performance of the two types of BMW policies and the state-of-the-art VFMW policy through extensive simulation of the following three applications: polling systems, directional-antenna systems, and traffic control for signalized intersections. Throughout this section, the arrival and service process of each queue $i$ is Bernoulli with mean $\lambda_i$ and $\mu_i$, respectively. 

\subsection{Polling Systems With Arbitrary Service Order}
We consider a polling system of 4 parallel queues where the service order can be determined dynamically. We first check the delay performance of the BMW policies with different $\alpha$. Figure \ref{figure:Q-BMW power topology 1} and \ref{figure:Q-BMW power topology 2} show the total average queue length under Q-BMW policy in the two scenarios described in Section \ref{subsection: VFMW}. We state the scenarios here again for easy reference. 
\begin{itemize}
\item Scenario \RN{1}: $\bm{\lambda}=(0.119, 0.119, 0.119, 0.119)$, $\bm{\mu}=(0.5, 0.5, 0.5, 0.5)$
\item Scenario \RN{2}: $\bm{\lambda}=(0.08, 0.25, 0.09, 0.01)$, $\bm{\mu}=(0.8, 0.5, 0.3, 0.2)$
\end{itemize}
As stated in Theorem \ref{theorem:single-channel delay-optimal stochastic}, the Q-BMW policy achieves the smallest average delay when $\alpha$ is arbitrarily close to 0. Similarly, Figure \ref{figure:W-BMW power topology 1} and \ref{figure:W-BMW power topology 2} show that under the W-BMW policy the average delay is the smallest when $\alpha$ is arbitrarily close to 0. For consistency of simulation results of different scenarios, for the rest of the simulation we choose $\alpha=0.001$ for both the Q-BMW and W-BMW policies.

\begin{figure}[!htbp]
\centering
\subfigure[Scenario \RN{1}.]{
\includegraphics[scale=0.4]{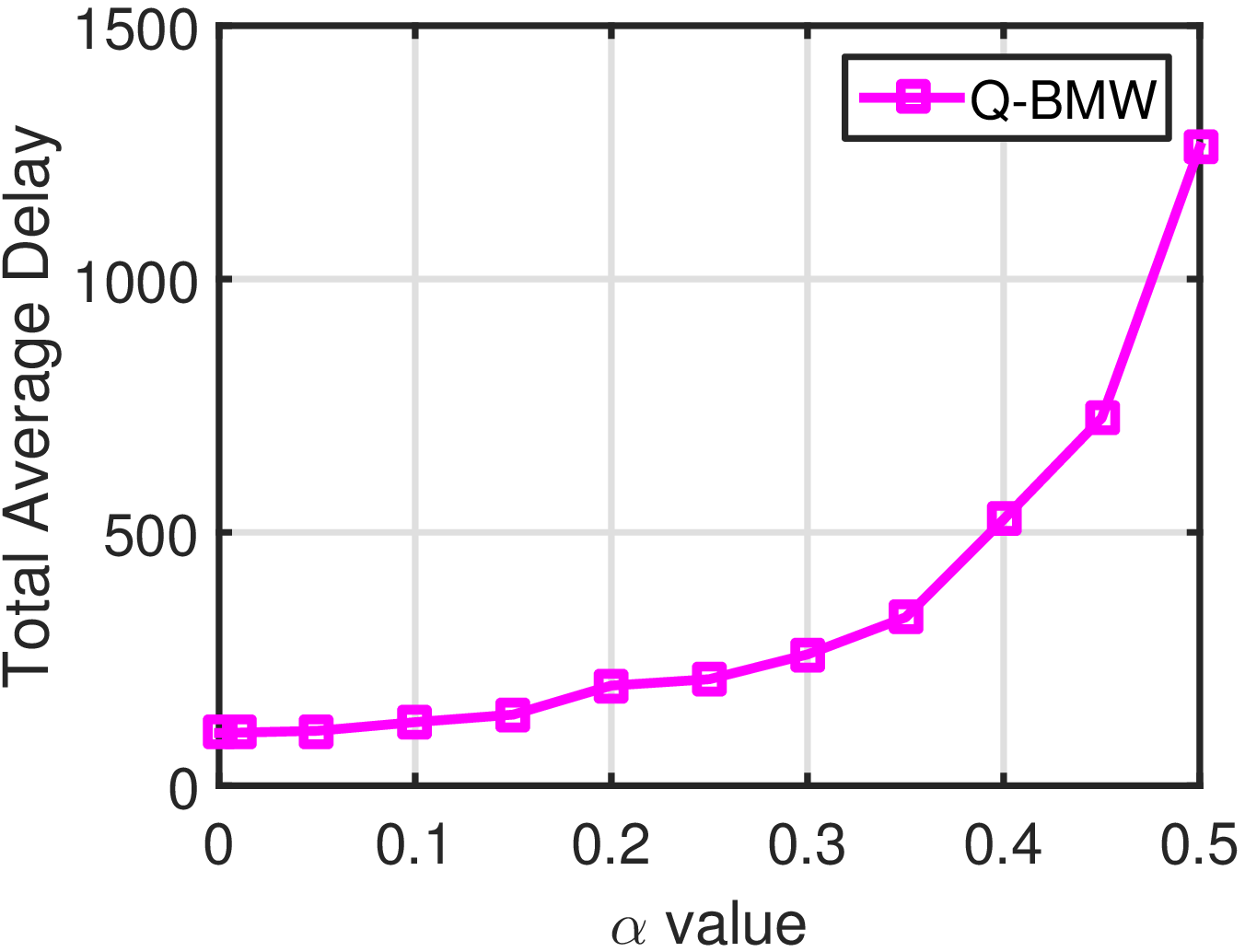}
\label{figure:Q-BMW power topology 1}}
\hspace{2mm}
\subfigure[Scenario \RN{2}.]{
\includegraphics[scale=0.4]{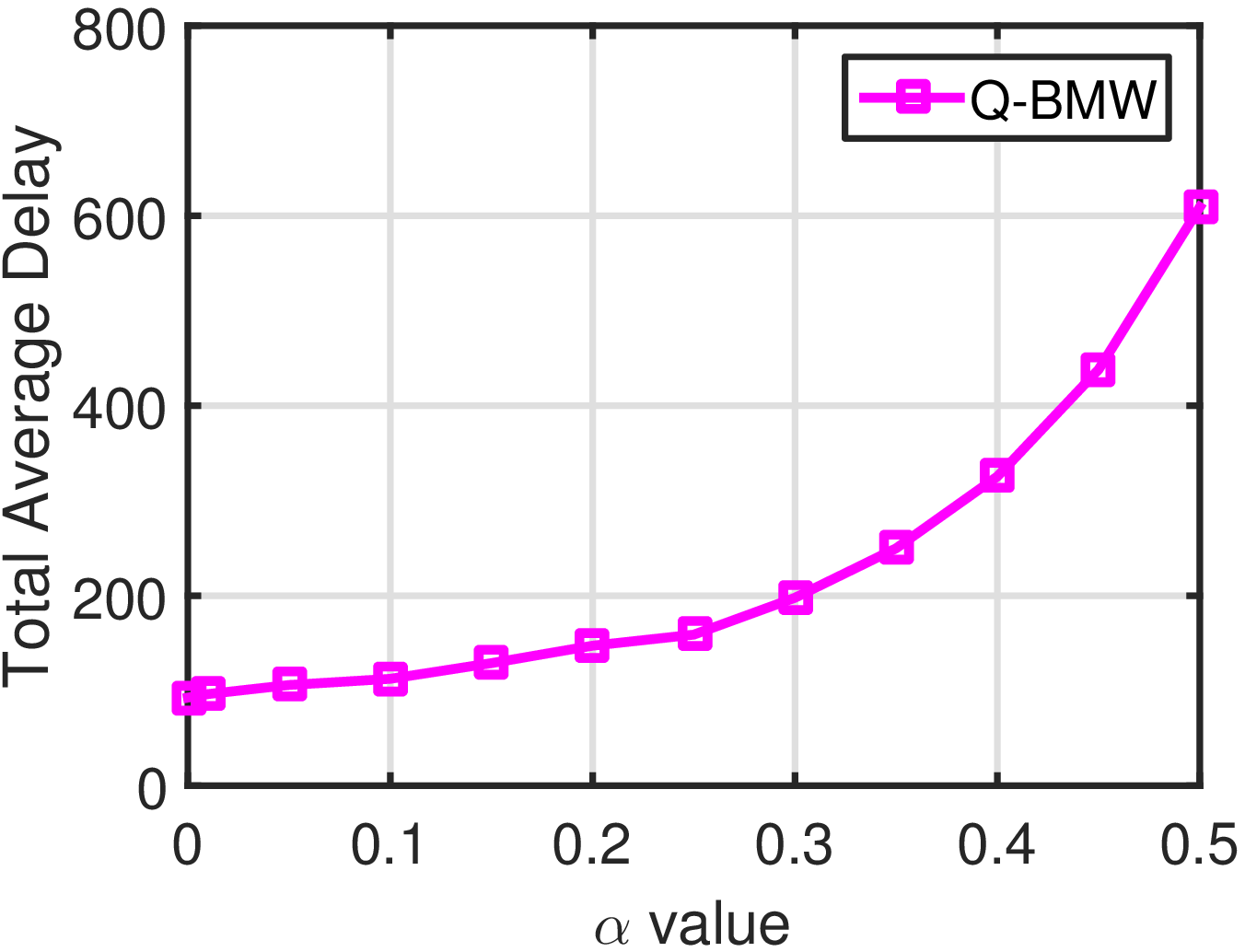}
\label{figure:Q-BMW power topology 2}}
\caption{Average delay versus different $\alpha$ value under Q-BMW policy in Scenario \RN{1} and \RN{2}.}
\end{figure}

\begin{figure}[!hbtp]
\centering
\subfigure[Scenario \RN{1}.]{
\includegraphics[scale=0.4]{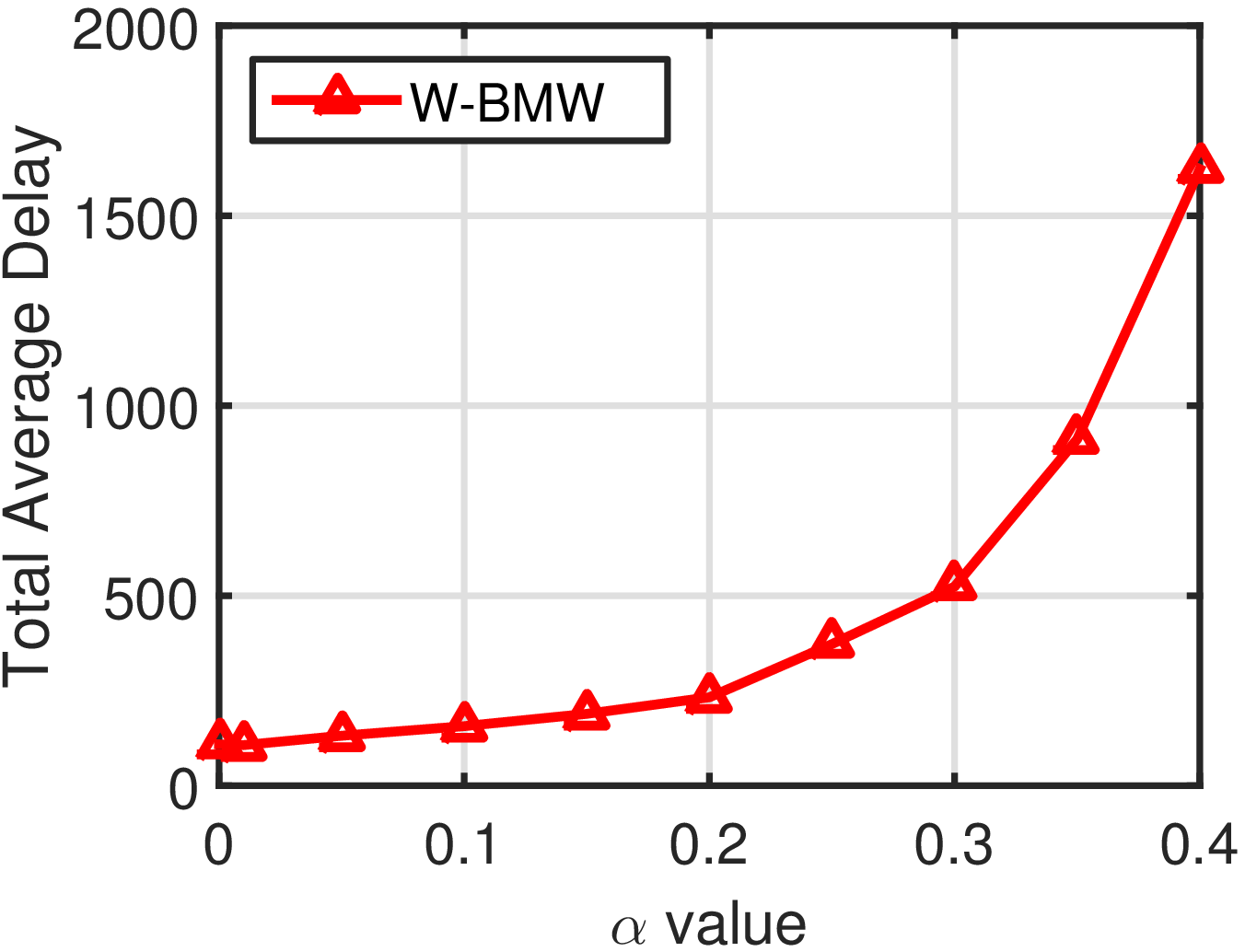}
\label{figure:W-BMW power topology 1}}
\hspace{2mm}
\subfigure[Scenario \RN{2}.]{
\includegraphics[scale=0.4]{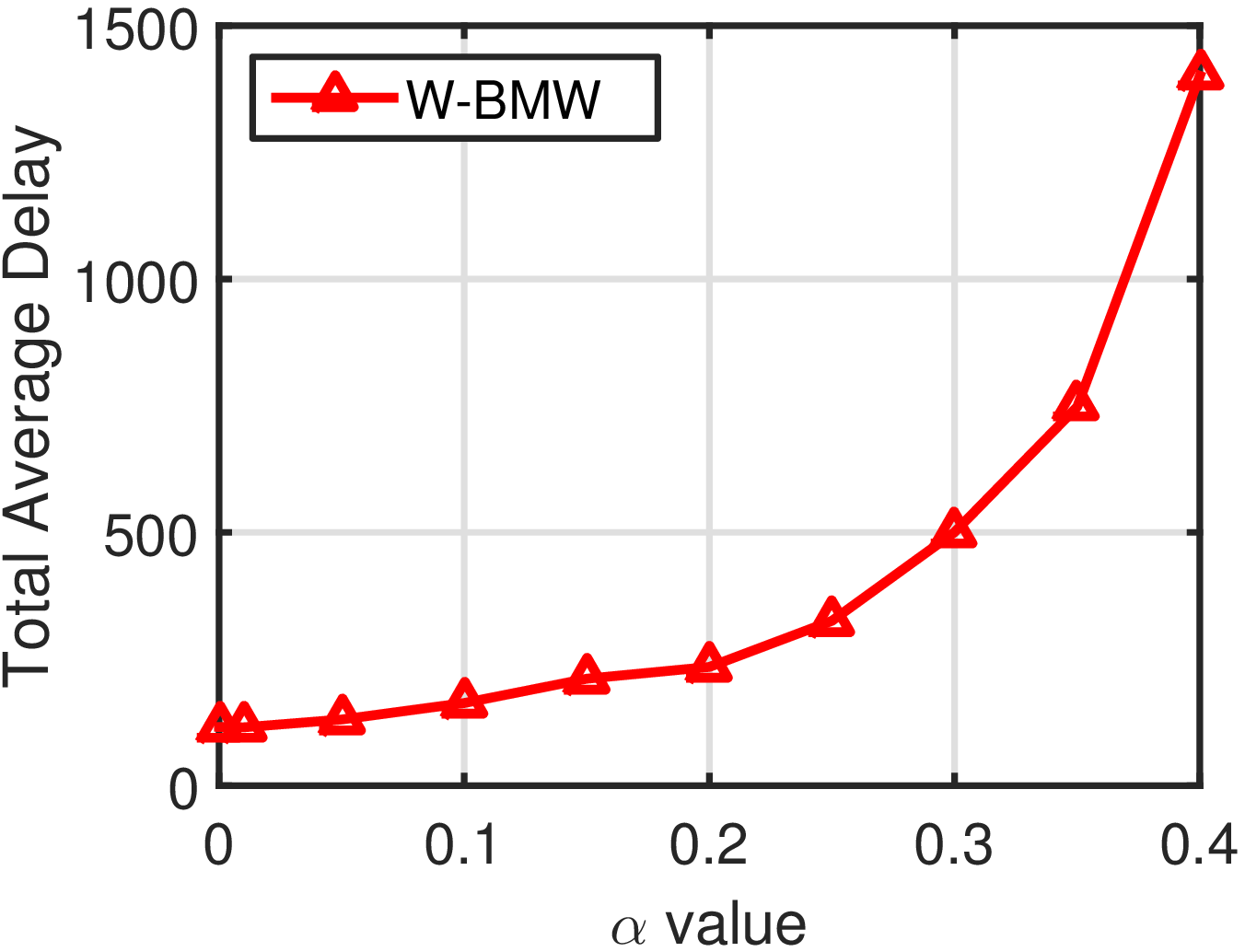}
\label{figure:W-BMW power topology 2}}
\caption{Average delay versus different $\alpha$ value under W-BMW policy in Scenario \RN{1} and \RN{2}.}
\end{figure}

Next, we simulate the average delay with different utilization factor $\beta^{*}$ under the three scheduling policies, as shown in Figure \ref{figure:1-beam rho symmetric} and \ref{figure:1-beam rho asymmetric}. We consider both the symmetric and asymmetric cases:
\begin{itemize}
\item Scenario \RN{3}: $\bm{\lambda}=\beta^{*}\cdot(0.125, 0.125, 0.125, 0.125)$, $\bm{\mu}=(0.5, 0.5, 0.5, 0.5)$
\item Scenario \RN{4}: $\bm{\lambda}=\beta^{*}\cdot(0.25, 0.15, 0.075, 0.025)$, $\bm{\mu}=(0.5, 0.5, 0.5, 0.5)$
\end{itemize}
In Figure \ref{figure:1-beam rho symmetric} and \ref{figure:1-beam rho asymmetric}, we observe that both Q-BMW and W-BMW achieve a much lower average delay than that of the VFMW policy with either frame size function $\bkt[\big]{\sum_{i}Q_i(t)}^{0.5}$ or  $\bkt[\big]{\sum_{i}Q_i(t)}^{0.99}$. Moreover, we are also interested in the delay performance with different amount of switching overhead. Figure \ref{figure:1-beam Ts symmetric} and \ref{figure:1-beam Ts asymmetric} show the total average delay with $T_s$ ranging from 1 to 7 under Scenario \RN{3} and \RN{4} with utilization factor equal to 0.95. In these two figures we do not show the simulation results for VFMW with $\alpha=0.5$ simply because its delay is much larger than its counterparts. We observe that the total average delay grows roughly linearly with the switching overhead and the two BMW policies still have much smaller delay than the VFMW policy, regardless of the amount of switching overhead. Therefore, for the rest of the simulation, we simply choose $T_s=1$. Figure \ref{figure:1-beam rho asymmetric foreach} shows the per-queue average delay of the two BMW policies in Scenario \RN{4}. Under the Q-BMW policy, the per-queue delay is inversely proportional to the mean arrival rate. On the other hand, the delay of each queue is about the same under the W-BMW policy. Hence, W-BMW indeed achieves better fairness in per-queue delay than the Q-BMW policy.

\begin{figure}[!hbtp]
\centering
\subfigure[Scenario \RN{3}.]{
\includegraphics[scale=0.4]{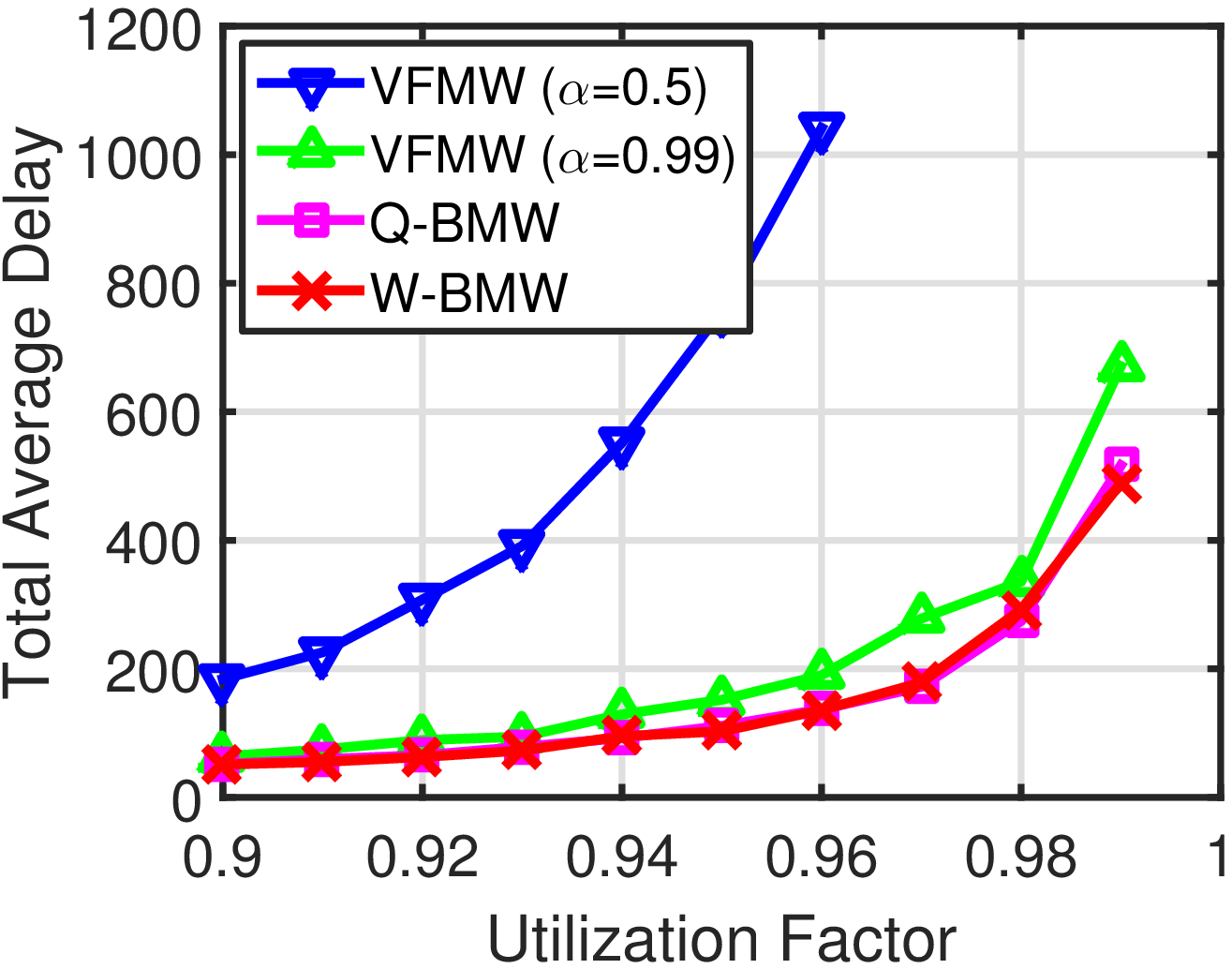}
\label{figure:1-beam rho symmetric}}
\hspace{2mm}
\subfigure[Scenario \RN{4}.]{
\includegraphics[scale=0.4]{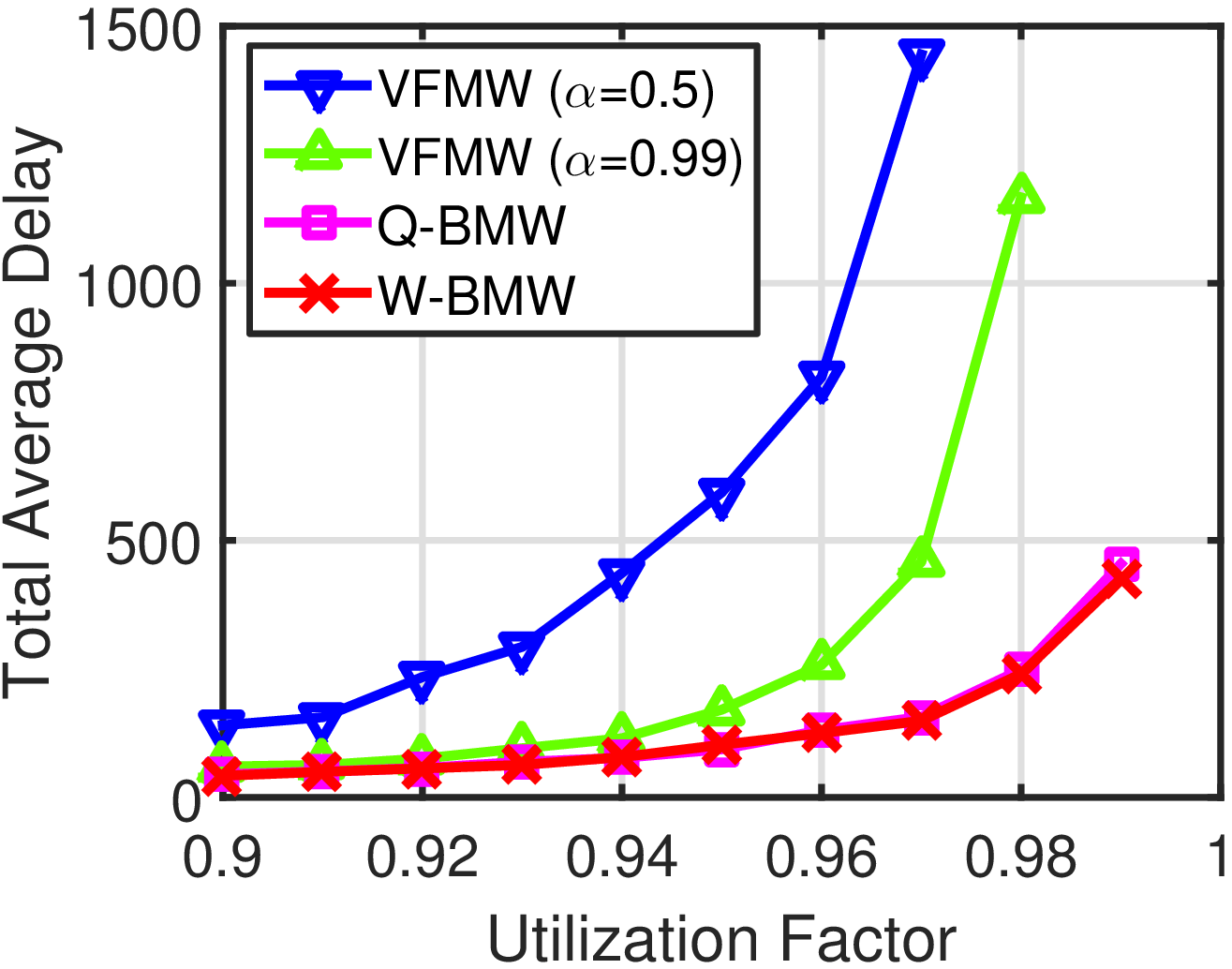}
\label{figure:1-beam rho asymmetric}}
\caption{Total average delay under the Q-BMW, W-BMW, and VFMW policies in Scenario \RN{3} and \RN{4}.}
\end{figure}

\begin{figure}[!hbtp]
\centering
\subfigure[Scenario \RN{3}.]{
\includegraphics[scale=0.4]{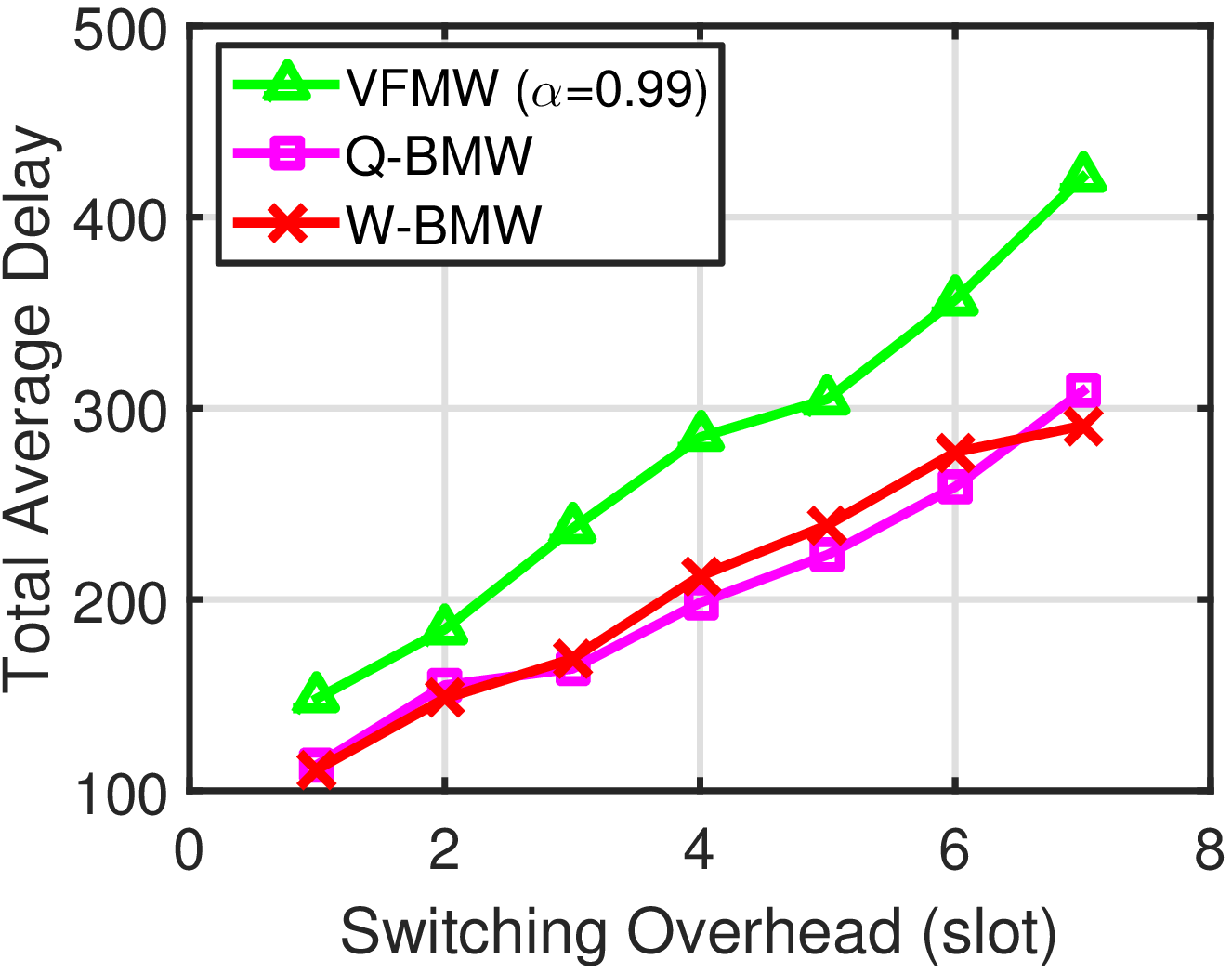}
\label{figure:1-beam Ts symmetric}}
\hspace{2mm}
\subfigure[Scenario \RN{4}.]{
\includegraphics[scale=0.4]{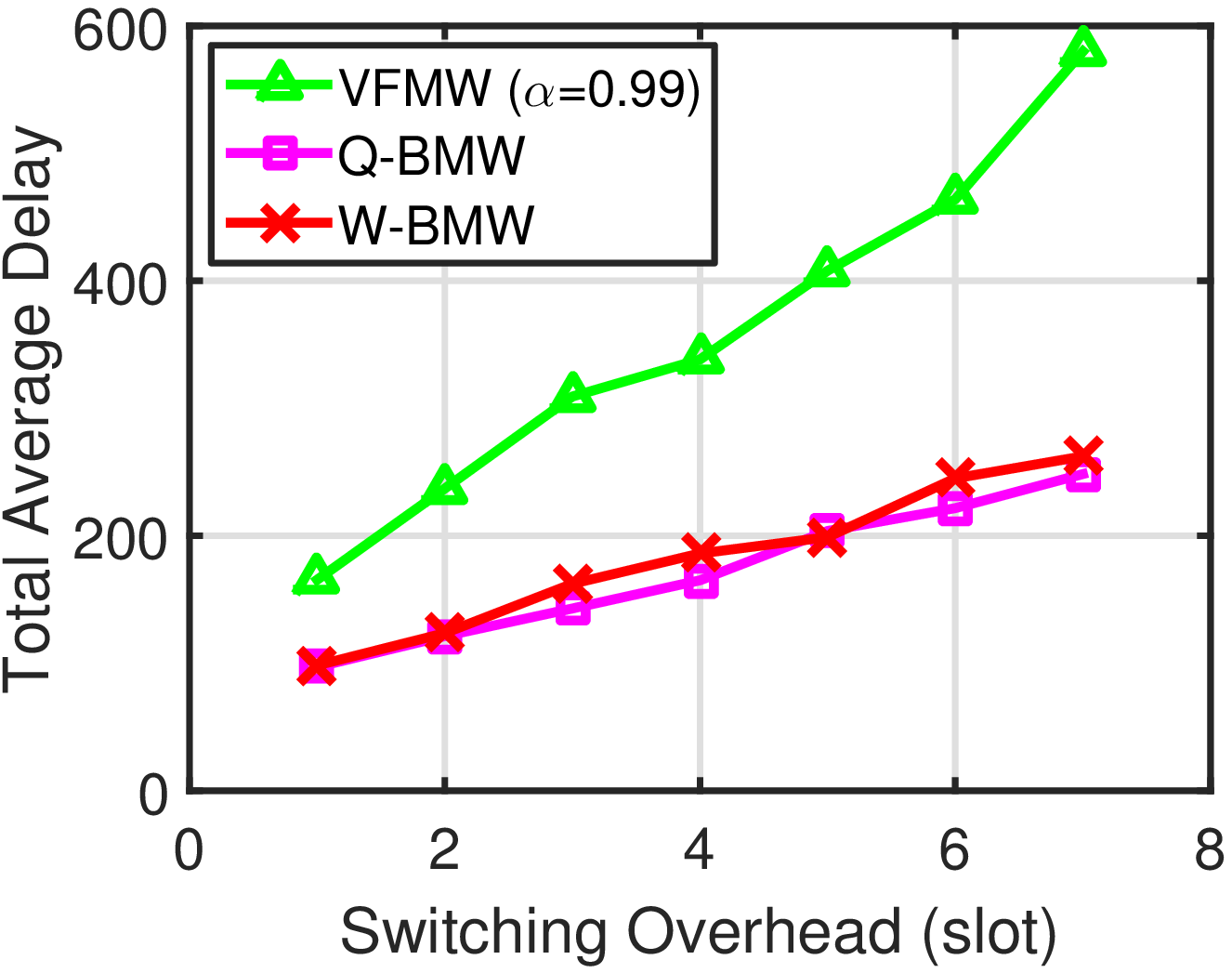}
\label{figure:1-beam Ts asymmetric}}
\caption{Total average delay versus different amount of switching overhead in Scenario \RN{3} and \RN{4}.}
\end{figure}

\begin{figure}[!htbp]
\centering
\includegraphics[scale=0.4]{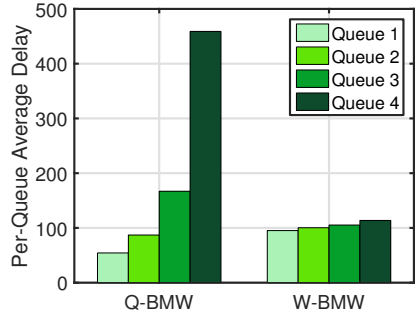}
\caption{Per-queue delay under the Q-BMW and W-BMW policies in Scenario \RN{4}.}
\label{figure:1-beam rho asymmetric foreach}
\end{figure}

\subsection{Multi-Beam Directional-Antenna Systems}
We consider a system of 6 queues and $\lvert \bs{I}\rvert=4$ for every feasible schedule $\bs{I}$. In the context of directional-antenna systems, this example represents a 4-beam system. 
Besides, there are system-wise conflicting constraints which limit the number of maximal feasible schedules. We consider the topology with sets of conflicting queues: queue 1 and queue 2 cannot be served simultaneously; queue 3 and queue 4 cannot be served simultaneously.
Therefore, there are only four maximal feasible schedules: $\{1, 3, 5, 6\}, \{1, 4, 5, 6\}, \{2, 3, 5, 6\},$ and $\{2, 4, 5, 6\}$. Besides, we choose the traffic pattern to be :
\begin{itemize}
\item Scenario \RN{5}: $\bm{\lambda}=\beta^{*}\cdot(0.18, 0.16, 0.25, 0.3, 0.9, 0.8)$ and asymmetric service rates $\bm{\mu}=(0.3, 0.4, 0.5, 0.6, 0.9, 0.8)$,
\end{itemize}
with different utilization factor $\beta^{*}$. First, we measure the average delay under the two BMW policies with different $\alpha$. Figure \ref{figure:4-beam rho power Q-BMW} and \ref{figure:4-beam rho power W-BMW} show that the average delay gets lower with smaller $\alpha$ for both Q-BMW and W-BMW policy. This result is consistent with that of the queueing systems where the server can serve at most one queue at a time. Next, with $\alpha$ equal to 0.001, Figure \ref{figure:4-beam rho interference 1} shows that the two BMW policies still achieve much smaller delay than that of the VFMW policy.
\begin{figure}[!htbp]
\begin{center}
\subfigure[Q-BMW scheduling.]{
\includegraphics[scale=0.4]{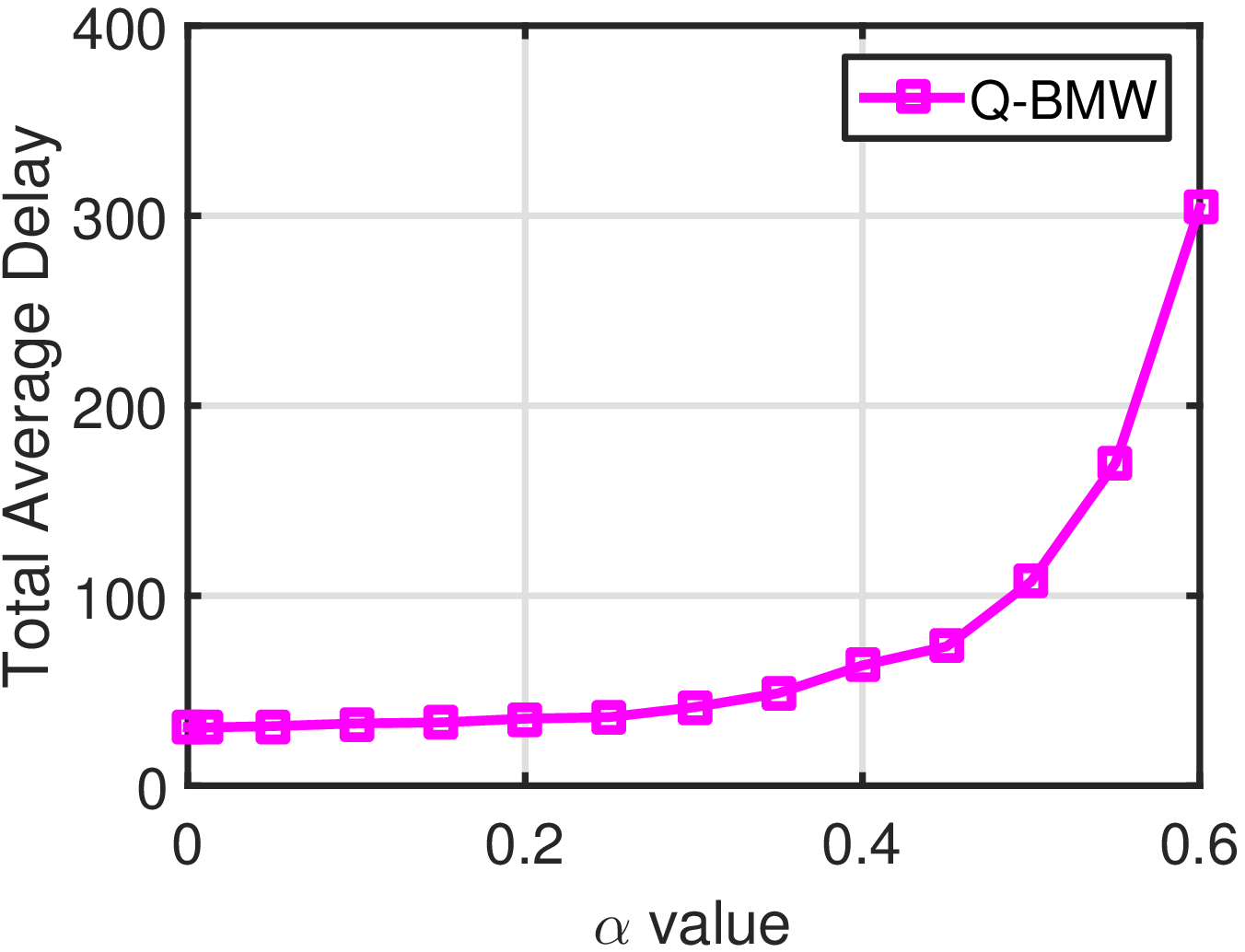}
\label{figure:4-beam rho power Q-BMW}}
\hspace{2mm}
\subfigure[W-BMW scheduling]{
\includegraphics[scale=0.4]{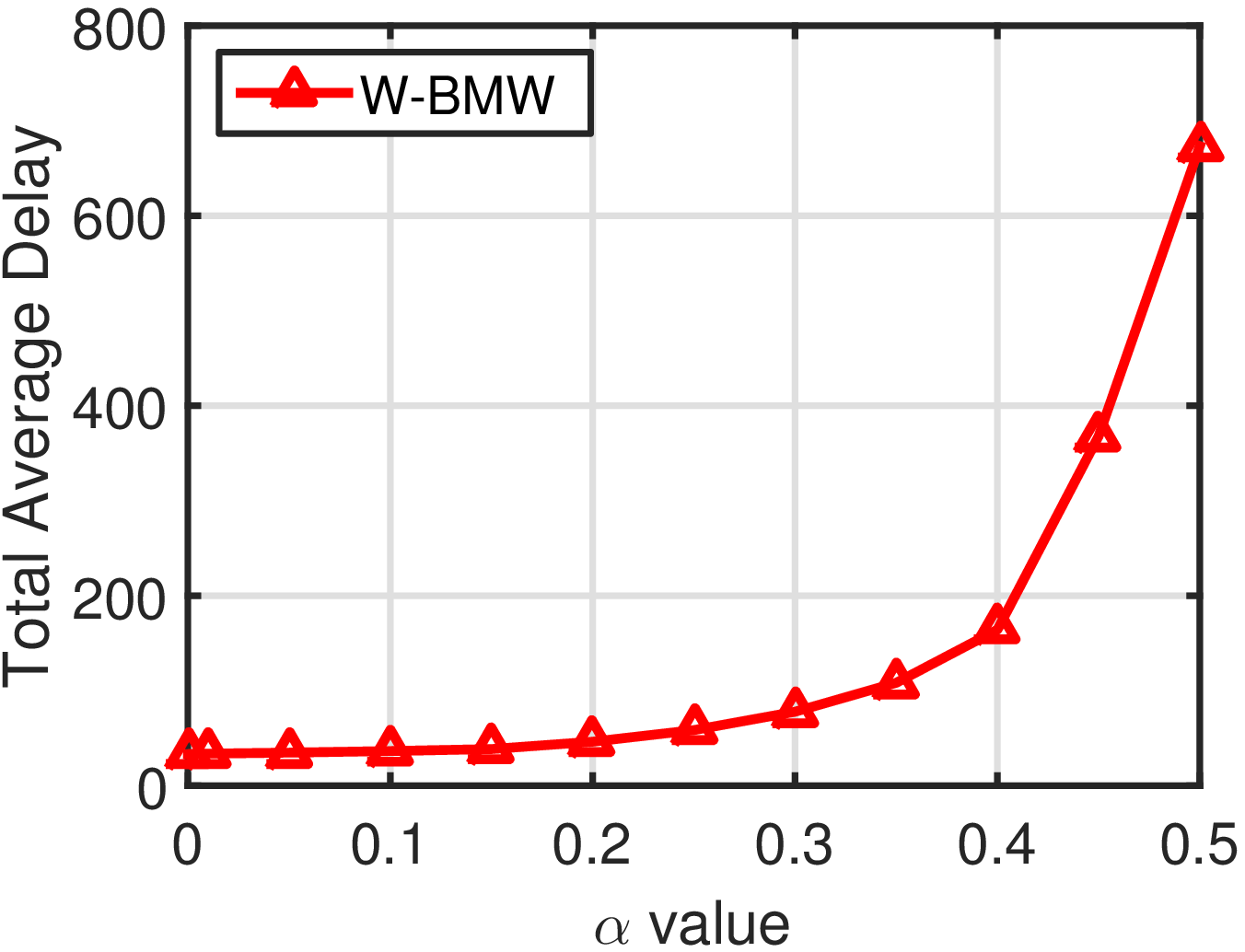}
\label{figure:4-beam rho power W-BMW}}
\caption{Average delay under Q-BMW and W-BMW policies with different $\alpha$ in Scenario \RN{5}.}
\end{center}
\end{figure}
\begin{figure}[!htbp]
\begin{center}
\subfigure[Scenario \RN{5}.]{
\includegraphics[scale=0.4]{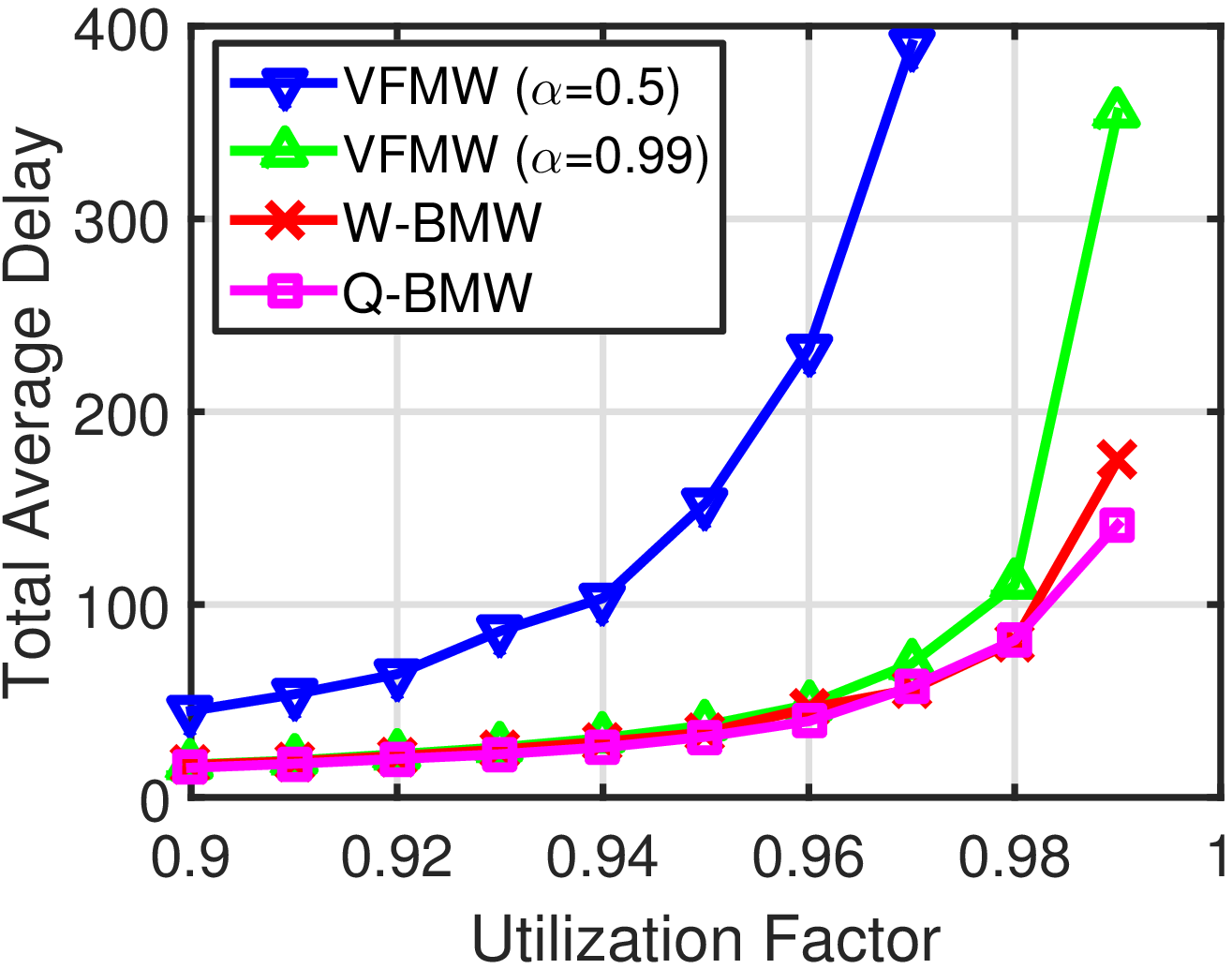}
\label{figure:4-beam rho interference 1}}
\hspace{2mm}
\subfigure[Scenario \RN{6}.]{
\includegraphics[scale=0.4]{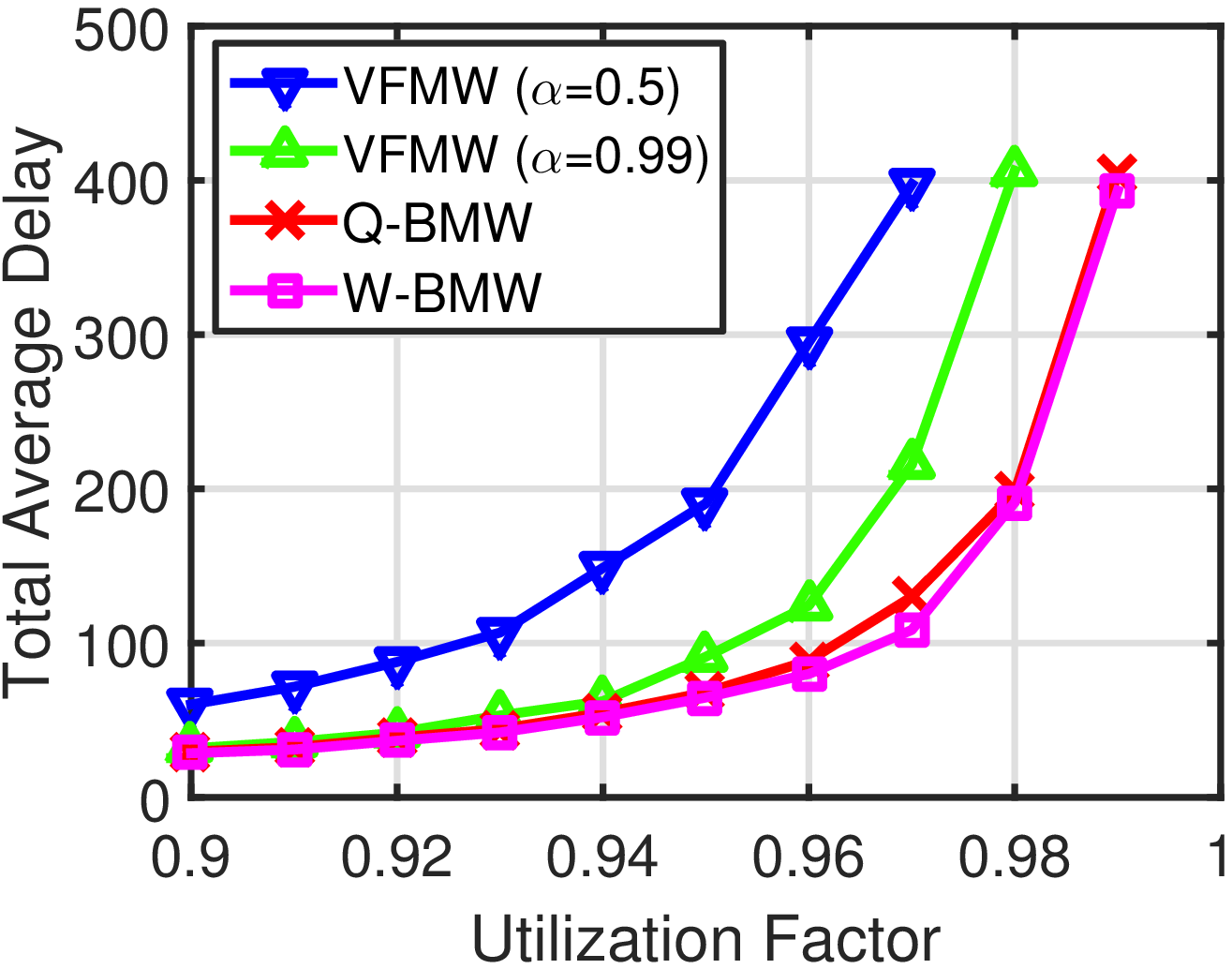}
\label{figure:4-beam rho interference 2}}
\caption{Delay comparison of the Q-BMW, W-BMW, and VFMW policies in Scenario \RN{5} and \RN{6}.}
\end{center}
\end{figure}
Using the same topology as in Figure \ref{figure:4-beam rho interference 1}, we change the arrival rates and service rates to be:
\begin{itemize}
\item Scenario \RN{6}: $\bm{\lambda}=\beta^{*}\cdot(0.35, 0.15, 0.3, 0.2, 0.5, 0.5)$ and ${\mu_i}=0.5$ for every $i$.
\end{itemize} 
As shown in Figure \ref{figure:4-beam rho interference 2}, the result is consistent with that of the other traffic pattern. In summary, with totally different traffic patterns, both Q-BMW and W-BMW always achieve much smaller average delay than that of the VFMW scheduling policy, regardless of the traffic pattern.

\subsection{Isolated Signalized Intersections}
We consider an isolated four-way signalized intersection at which each arriving vehicle either goes straight or makes a left turn. The intersection can be modeled by a queueing system with 8 queues (4 through lanes and 4 left-turn lanes) and $\lvert \bs{I}\rvert=2$ for every feasible schedule $\bs{I}$. Moreover, due to conflicting constraints imposed by the system, there are six maximal feasible schedules. We consider two different traffic patterns as follows:
 \begin{itemize}
\item Scenario \RN{7}: $\bm{\lambda}=\beta^{*}\cdot(0.1, 0.5, 0.1, 0.3, 0.1, 0.5, 0.1, 0.3)$ and ${\mu_i}=1$ for every queue $i$.
\item Scenario \RN{8}: $\bm{\lambda}=\beta^{*}\cdot(0.02, 0.26, 0.24, 0.48, 0.24, 0.48, 0.02, 0.26)$ and ${\mu_i}=1$ for every queue $i$.
\end{itemize}
Note that the service processes are chosen to be deterministic since the amount of vehicles that are able to pass through the intersection in one time slot should exhibit very little variation. Figure \ref{figure:intersection rho topo 1} and \ref{figure:intersection rho topo 2} show the average delay under the three policies in Scenario \RN{7} and \RN{8}. Note that for the VFMW policy we choose $\alpha=0.8$ instead of $\alpha=0.99$ simply because the average delay with $\alpha=0.99$ turns out to be extremely large in these two scenarios. Again, the two BMW policies achieve better system-wise delay performance than the VFMW policy. Besides, note that Figure \ref{figure:intersection rho topo 1} shows that VFMW with $\alpha=0.5$ performs better than VFMW with $\alpha=0.8$, while Figure \ref{figure:intersection rho topo 2} shows the opposite. This also highlights the fundamental dilemma of choosing $\alpha$ for the VFMW policy, as discussed in Section \ref{subsection: VFMW}. We further compare the per-queue average delay under Q-BMW and W-BMW. Figure \ref{figure:intersection rho topo 1 foreach} and \ref{figure:intersection rho topo 2 foreach} show the per-queue delay of queue 1 through queue 4. For simplicity, we do not show the results for queue 5 to queue 8 since they have the same arrival rate pattern and hence have similar per-delay performance as queue 1 to queue 4. These two figures demonstrate that W-BMW still achieves much better fairness than Q-BMW in the sense that the queues with lighter traffic do not suffer from huge queueing delay. Since the per-queue delay is especially crucial in transportation systems, the W-BMW policy is particularly suitable for traffic control at signalized intersections. 

\begin{figure}[!hbtp]
\centering
\subfigure[Scenario \RN{7}.]{
\includegraphics[scale=0.4]{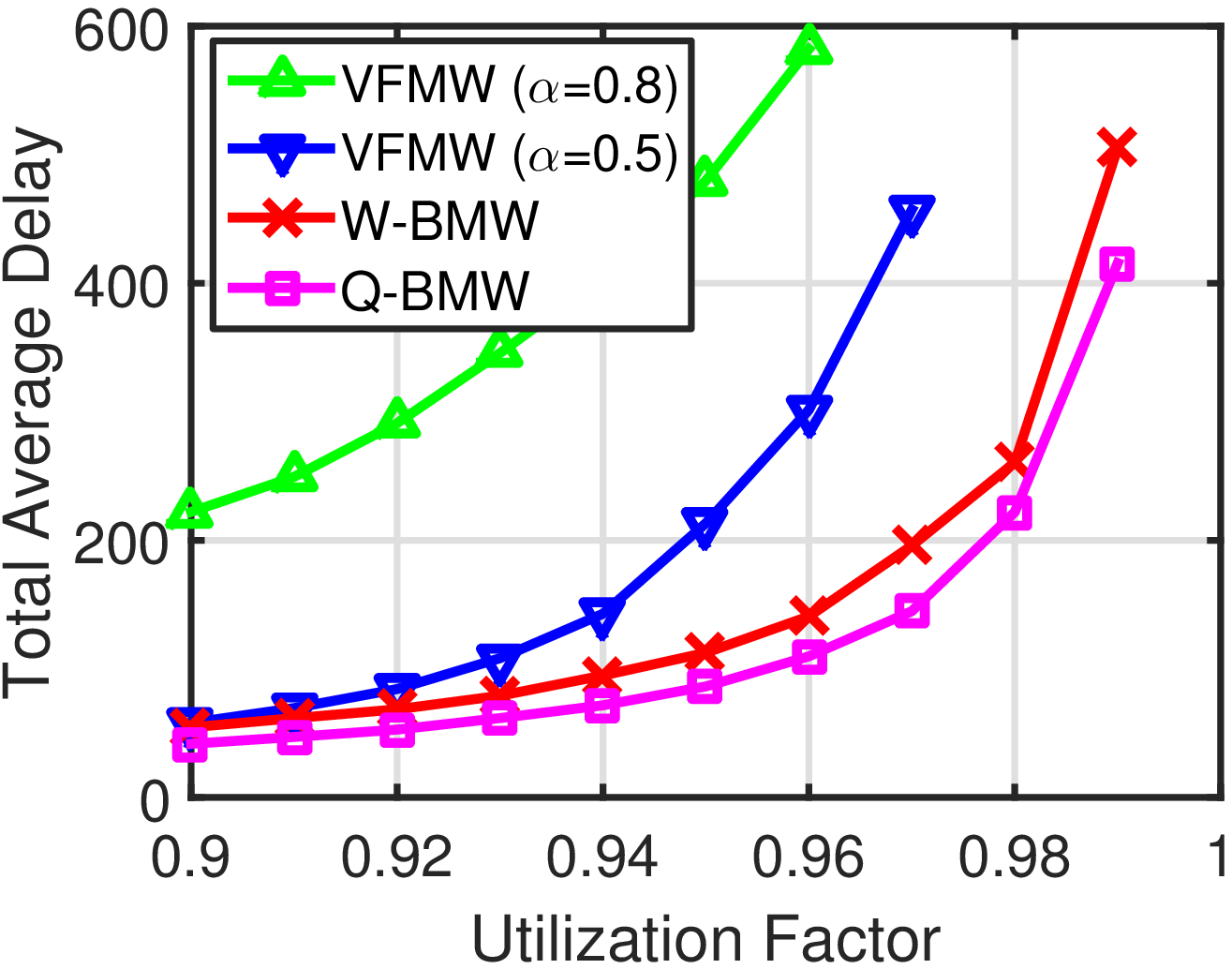}
\label{figure:intersection rho topo 1}}
\hspace{2mm}
\subfigure[Scenario \RN{8}.]{
\includegraphics[scale=0.4]{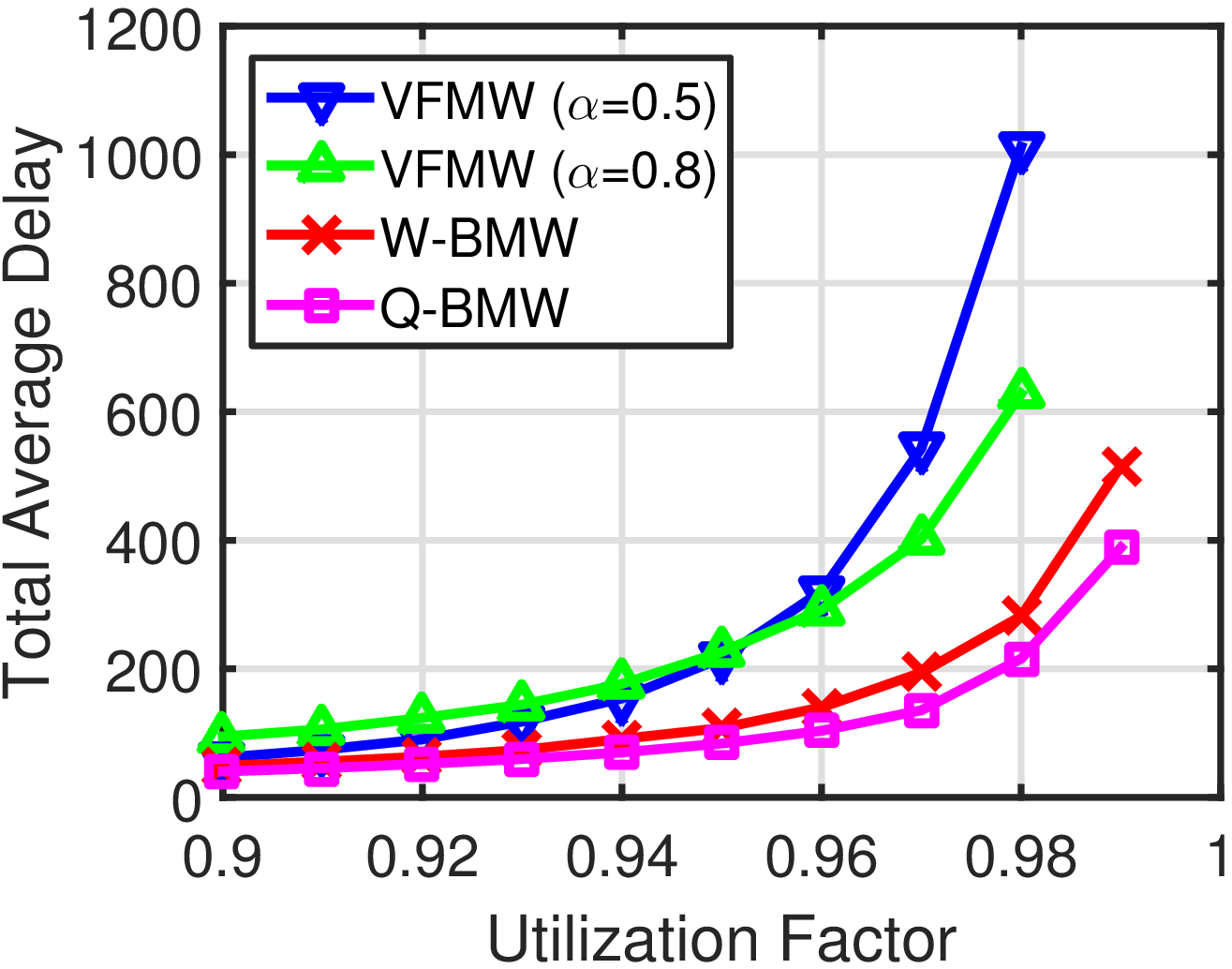}
\label{figure:intersection rho topo 2}}
\caption{Total average delay under the Q-BMW, W-BMW, and VFMW policies in Scenario \RN{7} and \RN{8}.}
\end{figure}

\begin{figure}[!hbtp]
\centering
\subfigure[Scenario \RN{7}.]{
\includegraphics[scale=0.4]{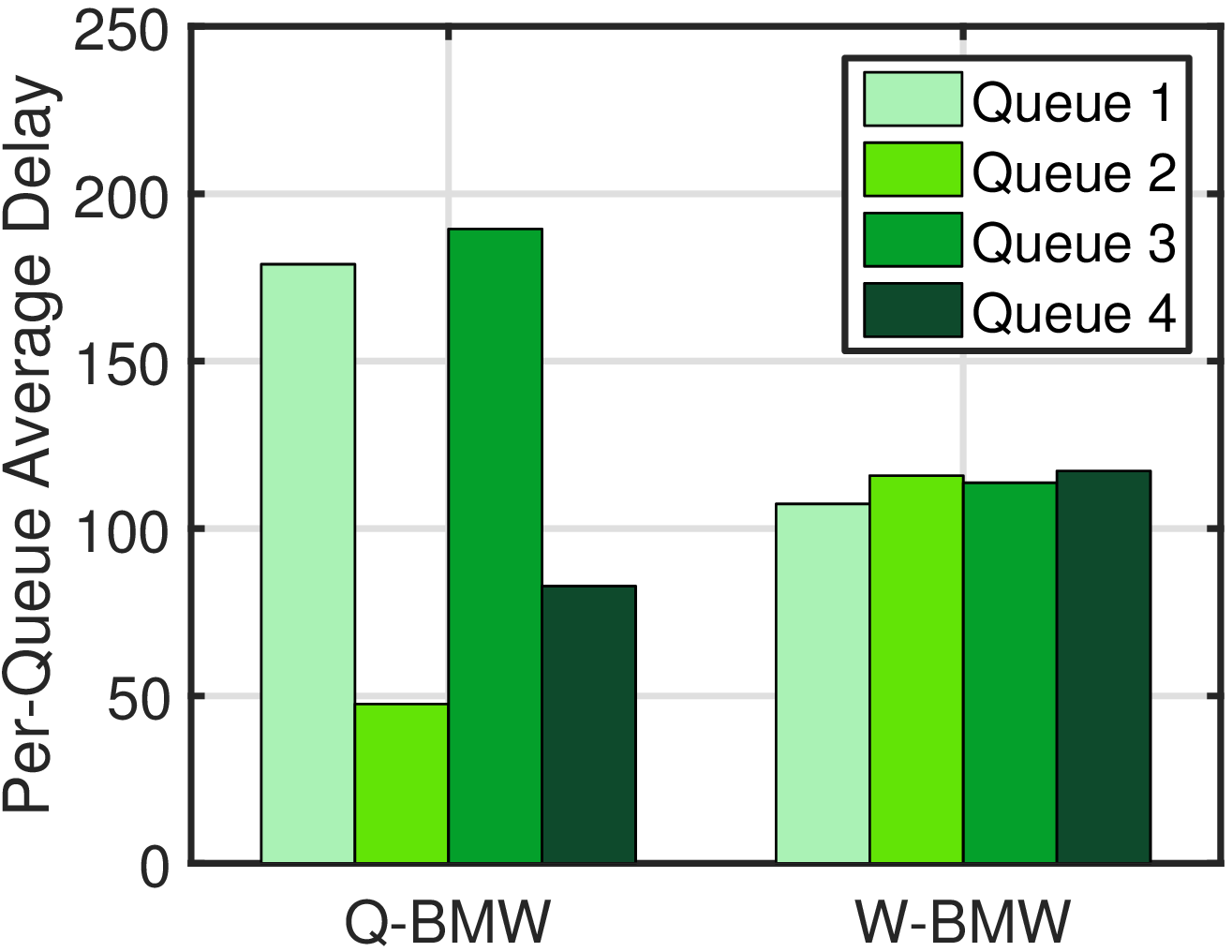}
\label{figure:intersection rho topo 1 foreach}}
\hspace{2mm}
\subfigure[Scenario \RN{8}.]{
\includegraphics[scale=0.4]{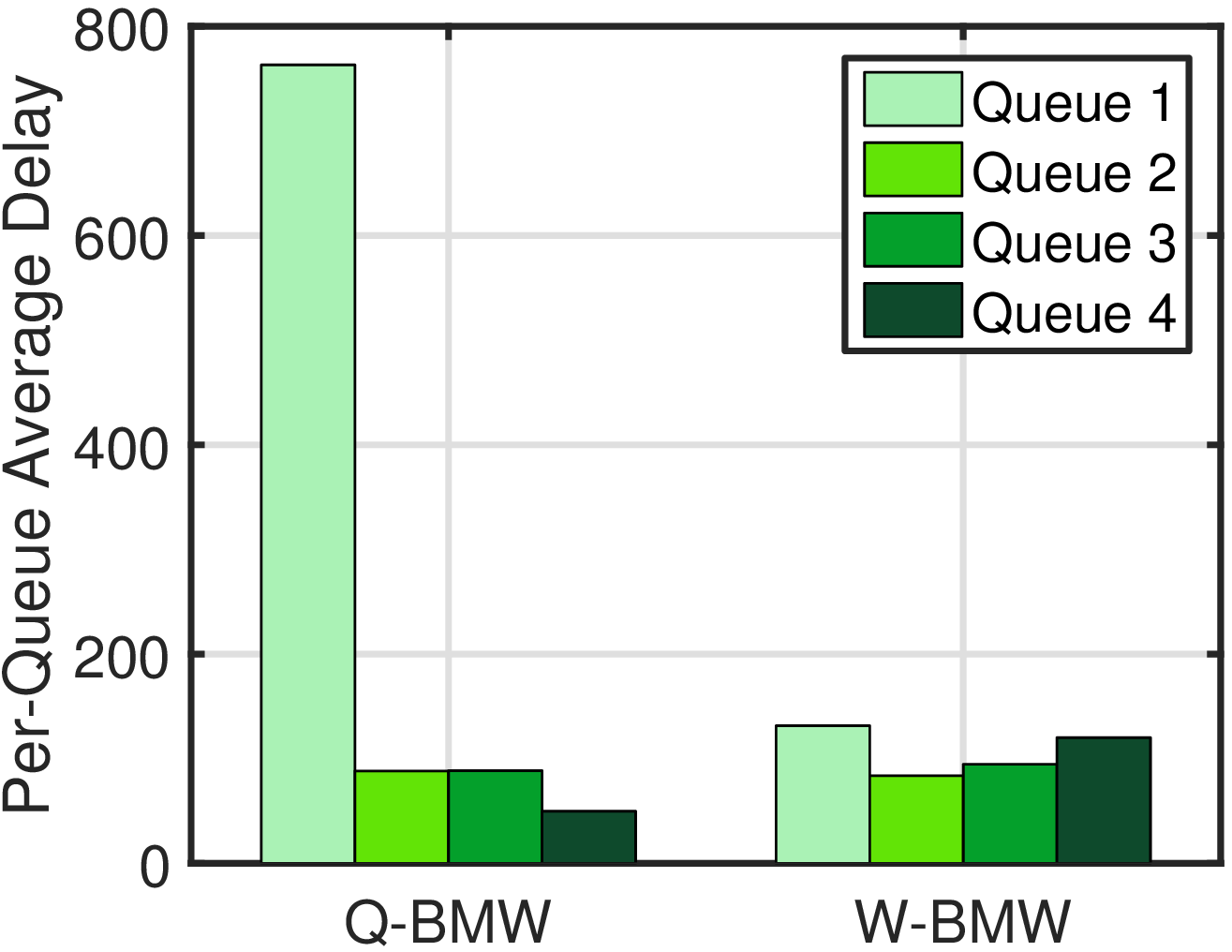}
\label{figure:intersection rho topo 2 foreach}}
\caption{Per-queue average delay under Q-BMW and W-BMW in Scenario \RN{7} and \RN{8}.}
\end{figure}

\section{Conclusion}
\label{section: conclusion}
In this paper, we study the delay performance of queueing systems with switching overhead. We propose two types of BMW scheduling policies that achieves not only throughput-optimality but also delay-optimality. We provide a theoretical queue length upper bound which is asymptotically tight. Through extensive simulation, we demonstrate that the proposed policies achieve much better delay performance than that of the state-of-the-art policy.

\begin{acknowledgements}
This material is based upon work supported in part by the U. S. Army Research Laboratory and the U. S. Army Research Office under contract/grant number W911NF-15-1-0279 and NPRP Grant 8-1531-2-651 of Qatar National Research Fund (a member of Qatar Foundation).
\end{acknowledgements}

\section*{Appendix 1: Proof of Theorem \ref{theorem: Q-BMW throughput-optimal}}
\label{appendix: Q-BMW}
We prove the theorem by choosing a proper Lyapunov function and showing that the Lyapunov drift is negative. To begin with, consider the multi-step queue length evolution: In the $k$-th interval, for any $\tau$ with $0<\tau\leq T_k$, for any queue $i$ we have
\begin{equation}
Q_i(t_k+\tau)\leq \max\left\{0, Q_i(t_k)-\sum_{s=0}^{\tau-1}M(t_k+s)I_{i}(t_k+s)S_i(t_k+s)\right\}+\sum_{s=0}^{\tau-1}A_i(t_k+s).
\end{equation}
Therefore, we also have
\begin{align}
Q_i(t_k+\tau)^2&\leq \bkt[\bigg]{Q_i(t_k)-\sum_{s=0}^{\tau-1}M(t_k+s)I_{i}(t_k+s)S_i(t_k+s)}^{2}+\bkt[\bigg]{\sum_{s=0}^{\tau-1}A_i(t_k+s)}^{2}\\ 
&\hspace{12pt}+2\sum_{s=0}^{\tau-1}A_i(t_k+s) \bkt[\bigg]{Q_i(t_k)-\sum_{s=0}^{\tau-1}M(t_k+s)I_{i}(t_k+s)S_i(t_k+s)}\\
&\leq Q_i(t_k)^2 + 2Q_i(t_k)\bkt[\bigg]{\sum_{s=0}^{\tau-1}A_i(t_k+s)-\sum_{s=0}^{\tau-1}M(t_k+s)I_{i}(t_k+s)S_i(t_k+s)}\label{equation:Qi evolution 1}\\
&\hspace{12pt}+\bkt[\bigg]{\sum_{s=0}^{\tau-1}M(t_k+s)I_{i}(t_k+s)S_i(t_k+s)}^2+\bkt[\bigg]{\sum_{s=0}^{\tau-1}A_i(t_k+s)}^{2}. \label{equation:Qi evolution 2}
\end{align}
Define a Lyapunov function 
\begin{equation}
L(t):=\bs{Q}(t)^{T}\bs{U}\bs{Q}(t), 
\end{equation}
where $\bs{U}:=\textrm{diag}(\mu_1^{-1},\cdots, \mu_N^{-1})$. Define $\widetilde{F}(\bs{Q}(t))=(\bs{1}^T \bs{Q}(t))^{\alpha_1}$ with $\alpha_1\in(0,\alpha)$ and $\alpha+\alpha_1<1$. Let $\widetilde{T}_k=\min\{T_k, \widetilde{F}(\bs{Q}(t_k))\}$. For any $\tau_1, \tau_2 \geq 0$, define the conditional drift between $\tau_1$ and $\tau_2$ as $\Delta(\tau_1, \tau_2):=\E[\big]{L(\tau_2)-L(\tau_1) \given \bs{Q}(\tau_1)}$.
The conditional drift between $t_k$ and $t_k+\widetilde{T}_k$ is
\begin{align}
\Delta(t_k&, t_k+\widetilde{T}_k)\\
= & \E[\Big]{\bs{Q}(t_k+\widetilde{T}_k)^T \bs{U} \bs{Q}(t_k+\widetilde{T}_k)-\bs{Q}(t_k)^T \bs{U} \bs{Q}(t_k) \given \bs{Q}(t_k)} \\
\leq & 2\cdot \E[\Bigg]{\bkt[\bigg]{\sum_{t=t_{k}}^{t_k+\widetilde{T}_k-1}(\bs{A}(t)-\bs{S}^{*}(t))^{T}}\bs{U}\bs{Q}(t_k) \given \bs{Q}(t_k)}\label{equation:ASUQ} \\ 
+& \E[\Bigg]{\bkt[\bigg]{\sum_{t=t_k}^{t_k+\widetilde{T}_k-1}\bs{A}(t)}^T \bs{U}\bkt[\bigg]{\sum_{t=t_k}^{t_k+\widetilde{T}_k-1}\bs{A}(t)} \given \bs{Q}(t_k)}\label{equation:ATUA} \\
+& \E[\Bigg]{\bkt[\bigg]{\sum_{t=t_k}^{t_k+\widetilde{T}_k-1}\bs{S}^{*}(t)}^T \bs{U}\bkt[\bigg]{\sum_{t=t_k}^{t_k+\widetilde{T}_k-1}\bs{S}^{*}(t)} \given \bs{Q}(t_k)},\label{equation:STUS}
\end{align}
where $\bs{S}^{*}(t):=M(t)(\bs{S(t)}\circ \bs{I}(t))$ and $\bs{S(t)}\circ \bs{I}(t)$ denotes the element-wise product (also called Hadamard product) of the two vectors $\bs{S}(t)$ and $\bs{I}(t)$. For any $t\geq 0$, we have $\bs{A}(t)\leq A_{\max}\cdot\bs{1}$ and $\bs{S}(t)\leq S_{\max}\cdot\bs{1}$, regardless of the queue length at time $t$. Therefore, (\ref{equation:ATUA}) and (\ref{equation:STUS}) are bounded as
\begin{align}
&\E[\Bigg]{\bkt[\bigg]{\sum_{t=t_k}^{t_k+\widetilde{T}_k-1}\bs{A}(t)}^T \bs{U}\bkt[\bigg]{\sum_{t=t_k}^{t_k+\widetilde{T}_k-1}\bs{A}(t)} \given \bs{Q}(t_k)}\leq A_{\max}^2 \Tr(\bs{U}) \E[\big]{{\widetilde{T}_{k}}^2\given \bs{Q}(t_k)} \label{equation:AUA bounded}\\
&\E[\Bigg]{\bkt[\bigg]{\sum_{t=t_k}^{t_k+\widetilde{T}_k-1}\bs{S}^{*}(t)}^T \bs{U}\bkt[\bigg]{\sum_{t=t_k}^{t_k+\widetilde{T}_k-1}\bs{S}^{*}(t)} \given \bs{Q}(t_k)}\leq  S_{\max}^2 \Tr(\bs{U}) \E[\big]{{\widetilde{T}_{k}}^2\given \bs{Q}(t_k)}. \label{equation:SUS bounded}
\end{align}
Since both $\bs{A}(t)$ and $\bs{S}(t)$ are independent of $\bs{Q}(t_k)$, we can rewrite (\ref{equation:ASUQ}) as
\begin{align}
& \E[\Bigg]{\bkt[\bigg]{\sum_{t=t_{k}}^{t_k+\widetilde{T}_k-1}(\bs{A}(t)-\bs{S}^{*}(t))^{T}}\bs{U}\bs{Q}(t_k) \given \bs{Q}(t_k)}\\
&= \E[\Big]{\bkt[\big]{\widetilde{T}_k \bm{\rho}^T-(\widetilde{T}_k-T_{s})\bs{I}(t_k)^T}\bs{Q}(t_k)\given \bs{Q}(t_k)},
\end{align}
where $\bm{\rho}$ is the vector of normalized traffic load of each queue. Since $\bm{\lambda}$ is assumed to be in the capacity region, then there exists a $J$-dimensional non-negative vector $\bm{\beta}^T=\bkt{\beta_1,\cdots,\beta_J}$ with $\bm{\beta}^T \bs{1}<1$ such that $\sum_{j=1}^{J}\beta_j \bs{I}^{(j)}\geq \bm{\rho}$. Under the Q-BMW policy, it is guaranteed that $\bs{I}(t_k)^T \bs{Q}(t_k)=\max_{j:1\leq j\leq J} (\bs{I}^{(j)})^T \bs{Q}(t_k)$. Therefore, 
\begin{align}
\bm{\rho}^T\bs{Q}(t_k)&\leq \bkt[\bigg]{\sum_{j=1}^{J}\beta_j\bs{I}^{(j)}}^T \bs{Q}(t_k)\label{equation:rho to epsilon 1}\\
&\leq \bkt[\bigg]{\sum_{j=1}^{J}\beta_j}\bs{I}(t_k)^T \bs{Q}(t_k)\label{equation:rho to epsilon 2} \\
&= (1-\epsilon)\bs{I}(t_k)^T \bs{Q}(t_k)\label{equation:rho to epsilon 3}
\end{align}
where $\epsilon:=1-\bm{\beta}^T\bs{1}$ denotes the corresponding "distance" from the boundary of the capacity region.
From (\ref{equation:ATUA})-(\ref{equation:rho to epsilon 3}), the conditional drift can be written as
\begin{align}
\Delta(t_k&, t_k+\widetilde{T}_k)\leq \E[\Big]{2\cdot\bkt[\big]{-\epsilon\widetilde{T}_k + T_s}\bs{I}(t_k)^T \bs{Q}(t_k)+B_0\widetilde{T}_k^{2}\given \bs{Q}(t_k)},\label{equation: drift bound 1}
\end{align}
where $B_0=\bkt[\big]{S_{\max}^2+A_{\max}^2}\Tr\bkt[\big]{\bs{U}}$ does not depend on the queue length vector or scheduling decisions. Suppose the server can serve at most $K$ queues at a time. By Lemma \ref{lemma:lower bound for Tk}, we also know that
\begin{equation}
T_k\geq{C_0\bkt[\Big]{{\bs{1}}^T\bs{Q}(t_k)}^{1-\alpha}},
\end{equation}
where $C_0=T_s/\bkt[\big]{NK\bkt[\big]{A_{\max}+(1+T_s)S_{\max}}}$. Here, we need to discuss two possible cases:

\vspace{3mm}
\noindent {\bf Case 1}:  $\widetilde{F}(\bs{Q}(t_k))\geq {C_0\bkt[\big]{{\bs{1}}^T\bs{Q}(t_k)}^{1-\alpha}}$

\vspace{1mm}
\noindent The above condition also implies that
\begin{align}
& \bkt[\Big]{{{\bs{1}}^T\bs{Q}(t_k)}}^{\alpha_1}\geq {C_0\bkt[\Big]{{\bs{1}}^T\bs{Q}(t_k)}^{1-\alpha}}.
\end{align}
Therefore, by the assumption that $\alpha+\alpha_1 < 1$, we have
\begin{equation}
{\bs{1}}^T\bs{Q}(t_k)\leq C_0^{\frac{-1}{1-\alpha-\alpha_1}}.\label{equation:case 1 queue bound}
\end{equation}
Hence, from (\ref{equation: drift bound 1}) and (\ref{equation:case 1 queue bound}), we know that the conditional drift between $t_k$ and $t_k+\widetilde{T}_k$ is bounded, i.e.
\begin{align}
\Delta(t_k, t_k+\widetilde{T}_k)&\leq \E[\Big]{2\cdot\bs{I}(t_k)^T \bs{Q}(t_k)+B_0\widetilde{T}_k^{2}\given \bs{Q}(t_k)} \\
& \leq  2T_s\cdot\bs{I}(t_k)^T \bs{Q}(t_k)+ {B_0\widetilde{F}(\bs{Q}(t_k))^2} \\
& \leq  2T_s\cdot\bs{1}^T \bs{Q}(t_k)+B_0\bkt[\Big]{{{\bs{1}}^T\bs{Q}(t_k)}}^{2\alpha_1}\\
& \leq  2T_s\cdot{C_0}^{\frac{-1}{1-\alpha-\alpha_1}}+B_0  {C_0}^{\frac{-2\alpha_1}{1-\alpha-\alpha_1}}< \infty.
\end{align} 
This also implies that the unconditional drift between $t_k$ and $t_k+\widetilde{T}_k$ is bounded, i.e.
\begin{equation}
\E[\Big]{L(t_k+\widetilde{T}_k)-L(t_k)} \leq 2T_s\cdot{C_0}^{\frac{-1}{1-\alpha-\alpha_1}}+B_0 {C_0}^{\frac{-2\alpha_1}{1-\alpha-\alpha_1}}< \infty.
\end{equation}
\noindent {\bf Case 2}:  $\widetilde{F}(\bs{Q}(t_k))< {C_0\bkt[\big]{{\bs{1}}^T\bs{Q}(t_k)}^{1-\alpha}}$

\vspace{1mm}
\noindent The above condition also implies that $\widetilde{T}_k=\widetilde{F}(\bs{Q}(t_k))<T_k$. Therefore, (\ref{equation: drift bound 1}) can then be written as
\begin{align}
\Delta(t_k, t_k+\widetilde{T}_k)&\leq 2\bkt[\Big]{-\epsilon\widetilde{F}(\bs{Q}(t_k)) + T_s}\bs{I}(t_k)^T \bs{Q}(t_k)+ {B_0\widetilde{F}(\bs{Q}(t_k))^2} \\
& \leq -\frac{2\epsilon}{N} \bkt[\Big]{{{\bs{1}}^T\bs{Q}(t_k)}}^{1+\alpha_1}+2T_s\bkt[\Big]{{{\bs{1}}^T\bs{Q}(t_k)}}+B_0 \bkt[\Big]{{{\bs{1}}^T\bs{Q}(t_k)}}^{2\alpha_1}.
\end{align}
Since $-\frac{2\epsilon}{N} \bkt[\big]{{{\bs{1}}^T\bs{Q}(t_k)}}^{1+\alpha_1}$ is the dominating term, there must exist some constant $B_2>0$ such that
\begin{equation}
\Delta(t_k, t_k+\widetilde{T}_k) \leq B_2 - \frac{\epsilon}{N}\bkt[\Big]{{{\bs{1}}^T\bs{Q}(t_k)}}^{1+\alpha_1}.\label{equation:case2 conditional drift}
\end{equation}
Moreover, we also know that
\begin{align}
\sum_{t=t_k}^{t_k-\widetilde{T}_k -1} \bs{1}^T \bs{Q}(t)\leq \sum_{t=t_k}^{t_k-\widetilde{T}_k -1}\bs{1}^T \bkt[\bigg]{\bs{Q}(t_k)+\sum_{\tau=t_k}^{t_k+\widetilde{T}_k-1} \bs{A}(\tau)}
\end{align}
By taking conditional expectation, we have
\begin{align}
\E[\Bigg]{\sum_{t=t_k}^{t_k-\widetilde{T}_k -1} \bs{1}^T \bs{Q}(t)\sgiven \bs{Q}(t_k)}&\leq \E[\bigg]{\widetilde{T}_k\cdot \bs{1}^T \bs{Q}(t_k)+{\widetilde{T}_k}^2 \cdot \bs{1}^T \bm{\lambda}\sgiven \bs{Q}(t_k)}\\
&= \bkt[\Big]{\bs{1}^T\bs{Q}(t_k)}^{1+\alpha_1}+ \bs{1}^T \bm{\lambda}\cdot\bkt[\Big]{\bs{1}^T\bs{Q}(t_k)}^{2\alpha_1}\\
&\leq (1+\bs{1}^T \bm{\lambda})\bkt[\Big]{\bs{1}^T\bs{Q}(t_k)}^{1+\alpha_1}.\label{equation:bound of sum queue length}
\end{align}
The last inequality holds since $\alpha_1<1$. Therefore, based on (\ref{equation:case2 conditional drift}) and (\ref{equation:bound of sum queue length}), we obtain
\begin{equation}
\E[\Big]{L(t_k+\widetilde{T}_k)-L(t_k)} \leq B_2 - \frac{\epsilon}{N_1}\E[\Bigg]{\sum_{t=t_k}^{t_k+\widetilde{T}_k-1}{\bs{1}^T\bs{Q}(t)}},\label{equation:case2 unconditional drift bound}
\end{equation}
where $N_1:=N(1+\bs{1}^T \bm{\lambda})$.

Next, we consider the slot-by-slot conditional drift for any $t$ between $t_k+\widetilde{T}_k$ and $t_{k+1}$. Note that there is no switching between $t_k+\widetilde{T}_k$ and $t_{k+1}$ and hence $M(t)=1$ for all $t\in [t_k+\widetilde{T}_k,t_{k+1})$. Therefore,
\begin{align}
\Delta(t,t+1)&= \E[\Big]{\bs{Q}(t+1)^T \bs{U} \bs{Q}(t+1)-\bs{Q}(t)^T \bs{U} \bs{Q}(t) \given \bs{Q}(t)}\\
&\leq 2\cdot \E[\Big]{\bkt[\big]{\bs{A}(t)-\bs{S}^{*}(t)}^{T}\bs{U}\bs{Q}(t) \given \bs{Q}(t)}\label{equation:one-slot ASUQ} \\ 
&\hspace{12pt} + \E[\Big]{{\bs{A}(t)}^T \bs{U}{\bs{A}(t)} + {\bs{S}^{*}(t)}^T \bs{U}{\bs{S}^{*}(t)}\given \bs{Q}(t)}\label{equation:one-slot STUS}
\end{align}
Similar to (\ref{equation:AUA bounded}) and (\ref{equation:SUS bounded}), we know 
\begin{equation}
\E[\Big]{{\bs{A}(t)}^T \bs{U}{\bs{A}(t)} + {\bs{S}^{*}(t)}^T \bs{U}{\bs{S}^{*}(t)}}\leq \bkt[\big]{A_{\max}^2+S_{\max}^2}\Tr({\bs{U}})=B_0.
\end{equation}
Besides, since $\bs{A}(t)$ and $\bs{S}(t)$ are independent of $\bs{Q}(t)$, (\ref{equation:one-slot ASUQ}) can be written as
\begin{align}
& \E[\Big]{\bkt[\big]{\bs{A}(t)-\bs{S}^{*}(t)}^{T}\bs{U}\bs{Q}(t) \given \bs{Q}(t)}= {\bkt[\big]{\bm{\rho}^T-\bs{I}(t)^T}\bs{Q}(t)}
\end{align}
Hence, we have
\begin{equation}
\Delta(t,t+1) \leq 2\cdot {\bkt[\big]{\bm{\rho}^T-\bs{I}(t)^T}\bs{Q}(t)}+B_0\label{equation:one-step drift}
\end{equation}
Under the Q-BMW policy, at time $t$ we must have
\begin{align}
\bs{I}(t)^T \bs{Q}(t)&\geq \bkt[\big]{\bs{I}^{(j)}}^T \bs{Q}(t)-\frac{\bs{I}(t)^T\bs{Q}(t)}{F(\bs{Q}(t_k))}, \hspace{12pt}\forall j=1,...,N
\end{align}
Along with (\ref{equation:rho to epsilon 1})-(\ref{equation:rho to epsilon 3}), we then have
\begin{align}
(1-\epsilon)\bs{I}(t)^T \bs{Q}(t)&=\sum_{j=1}^{J}\beta_j \bs{I}(t)^T \bs{Q}(t)\\
& \geq \sum_{j=1}^{J} \beta_j \bkt[\big]{\bs{I}^{(j)}}^T \bs{Q}(t)-\sum_{j=1}^{J}\beta_j \frac{\bs{I}^T(t)\bs{Q}(t)}{F(\bs{Q}(t_k))}\\
& \geq \bm{\rho}^{T} \bs{Q}(t)- (1-\epsilon)\frac{\bs{I}^T(t)\bs{Q}(t)}{F(\bs{Q}(t_k))}.
\end{align}
Therefore, (\ref{equation:one-step drift}) can be written as
\begin{align}
\Delta(t,t+1)& \leq 2 \cdot \bkt[\bigg]{-\epsilon \bs{I}(t)^T\bs{Q}(t)+(1-\epsilon)\frac{\bs{I}(t)^T\bs{Q}(t)}{F(\bs{Q}(t_k))}}+B_0\label{equation:one-step drift 1}\\
&\leq 2 \cdot  \bkt[\bigg]{-\frac{\epsilon}{N} \bkt[\big]{\bs{1}(t)^T\bs{Q}(t)}+(1-\epsilon)\frac{\bs{I}(t)^T\bs{Q}(t)}{F(\bs{Q}(t_k))}}+B_0\label{equation:one-step drift 2} \\
&\leq  2 \cdot  \bkt[\bigg]{-\frac{\epsilon}{N} \bkt[\big]{\bs{1}(t)^T\bs{Q}(t)}+(1-\epsilon)\bkt[\Big]{{\bs{I}(t)^T\bs{Q}(t)}}^{1-\alpha}}+B_0\label{equation:one-step drift 3} \\
&\leq  2 \cdot  \bkt[\bigg]{-\frac{\epsilon}{N} \bkt[\big]{\bs{1}(t)^T\bs{Q}(t)}+(1-\epsilon)\bkt[\Big]{{\bs{1}(t)^T\bs{Q}(t)}}^{1-\alpha}}+B_0\label{equation:one-step drift 4}
\end{align}
Since $\alpha>0$, then $-\frac{\epsilon}{N} \bs{I}(t)^T\bs{Q}(t)$ is the dominating term in (\ref{equation:one-step drift 4}). In other words, there must exist some constant $B_3>0$ such that
\begin{align}
\Delta(t,t+1)& \leq B_3 - \frac{\epsilon}{N} \bkt[\Big]{\bs{1}(t)^T\bs{Q}(t)}.
\end{align}
Hence, for any $t\in (t_k+\widetilde{T}_{k}, t_{k+1})$, we know
\begin{equation}
\E[\Big]{L(t+1)-L(t)}\leq B_3- \frac{\epsilon}{N} \E[\Big]{{\bs{1}(t)^T\bs{Q}(t)}}.\label{equation:one-step unconditional drift bound}
\end{equation}
Now, we consider any large $\T$ and let $K_{\T}$ be the number of intervals in $[0,\T)$. Since each interval lasts for at least one slot, then $K_{\T}\leq \T$. The unconditional drift in $[0,\T)$
\begin{align}
\E[\big]{L(\T)-L(0)} &= \sum_{k=0}^{K_{\T}-1} \E[\big]{L(t_{k+1})-L(t_k)}\\
& = \sum_{k=0}^{K_{\T}-1} \bkt[\Bigg]{\E[\big]{L(t_k+\widetilde{T}_k)-L(t_k)}+\sum_{\tau=t_k+\widetilde{T}_k}^{t_{k+1}-1}\E[\big]{L(\tau+1)-L(\tau)}  }\\
& \leq K_{\T}B_2 - \frac{\epsilon}{N_1}\sum_{k=1}^{K_{\T}-1}\bkt[\Bigg]{\E[\Bigg]{\sum_{t=t_k}^{t_k+\widetilde{T}_k-1}{\bs{1}^T\bs{Q}(t)}}} \\
&\hspace{60pt}+ B_3\T -  \frac{\epsilon}{N} \sum_{k=0}^{K_{\T}-1} \bkt[\Bigg]{\E[\Bigg]{\sum_{\tau=t_k+\widetilde{T}_k}^{t_{k+1}-1} {\bs{1}^T\bs{Q}(t)}}}\\
&\leq  K_{\T}B_2 + B_3\T  - \frac{\epsilon}{N_1}\bkt[\Bigg]{\E[\Bigg]{\sum_{t=0}^{\T-1}{\bs{1}^T\bs{Q}(t)}}}.
\end{align}
Since  $L(0)=0$ and $L(t)$ is nonnegative regardless of $t$, by letting $\T\rightarrow \infty$, we have
\begin{equation}
\lim_{\T\rightarrow \infty}\frac{\E[\Big]{\sum_{t=0}^{\T-1}{\bs{1}^T\bs{Q}(t)}}}{\T}\leq \frac{N_1(B_2+B_3)}{\epsilon}< \infty. 
\end{equation}
$\qed$

\section*{Appendix 2: Proof of Theorem \ref{theorem: W-BMW throughput-optimal}}
To begin with, we describe how the HOL waiting time evolves. For each queue $i$, define $\delta_{i}(t):=I_{i}(t)\sum_{m=1}^{S_{i}(t)}V_{i}(\varphi_{i}(t)+m)$. Note that $\delta_{i}(t)$ represents the potential decrease in HOL waiting time due to the potential service of queue $i$ at time $t$. We use the boldface symbol $\bm{\delta}$ to denote the $N$-dimensional vector $(\delta_1,...,\delta_{N})\in \mathbb{N}_{0}^{N}$. In the $k$-th interval, for any $\tau$ with $0<\tau\leq T_k$, for any queue $i$ we have
\begin{equation}
W_i(t_k+\tau)\leq \tau+\max\left\{0, W_i(t_k)-\sum_{s=0}^{\tau-1}M(t_k+s)\delta_{i}(t_k+s)\right\}.
\end{equation}
Recall that $M(t)$ represents whether the server is in ACTIVE mode at time $t$.
Therefore, we also have
\begin{align}
W_i(t_k+\tau)^2&\leq W_i(t_k)^2 - 2W_i(t_k)\bkt[\bigg]{\sum_{s=0}^{\tau-1}M(t_k+s)\delta_{i}(t_k+s)}\label{equation:Wi evolution 1}\\
&+\bkt[\bigg]{\sum_{s=0}^{\tau-1}M(t_k+s)\delta_{i}(t_k+s)}^2 + 2\tau W_i(t_k)+\tau^2. \label{equation:Wi evolution 2}
\end{align}
Define a Lyapunov function 
\begin{equation}
L_2(t):=\bs{W}(t)^{T}\bs{P}\bs{W}(t), 
\end{equation}
where $\bs{P}:=\textrm{diag}(\rho_1,\cdots, \rho_N)$. Define $\hat{G}(\bs{W}(t))=(\bs{1}^T \bs{W}(t))^{\alpha_1}$ with $\alpha_1\in(0,\alpha)$ and $\alpha+\alpha_1<1$. Let $\hat{T}_k=\min\{T_k, \hat{G}(\bs{W}(t_k))\}$. For any $\tau_1, \tau_2 \geq 0$, define the conditional drift between $\tau_1$ and $\tau_2$ as $\Delta L_2(\tau_1, \tau_2):=\E[\big]{L_2(\tau_2)-L_2(\tau_1) \given \bs{W}(\tau_1)}$.
Based on (\ref{equation:Wi evolution 1}) and (\ref{equation:Wi evolution 2}), the conditional drift between $t_k$ and $t_k+\hat{T}_k$ is
\begin{align}
\Delta &L_2(t_k, t_k+\hat{T}_k)\\
= & \E[\Big]{\bs{W}(t_k+\hat{T}_k)^T \bs{P} \bs{W}(t_k+\hat{T}_k)-\bs{W}(t_k)^T \bs{P} \bs{W}(t_k) \given \bs{W}(t_k)} \\
\leq & -2\cdot \E[\Bigg]{\bkt[\bigg]{\sum_{t=t_{k}}^{t_k+\hat{T}_k-1}{M}(t) \bm{\delta}(t)}^T\bs{P}\bs{W}(t_k)\given \bs{W}(t_k)}\label{equation:HOL WPVIS} \\ 
+& \E[\Bigg]{\bkt[\bigg]{\sum_{t=t_{k}}^{t_k+\hat{T}_k-1}M(t)\bm{\delta}(t)}^T \bs{P}\bkt[\bigg]{\sum_{t=t_{k}}^{t_k+\hat{T}_k-1}M(t)\bm{\delta}(t)} \given \bs{W}(t_k)}\label{equation:HOL VISPVIS} \\
+& 2\cdot\E[\Big]{\bkt[\big]{{\hat{T}_k\bs{1}}}^T \bs{P}{\bs{W}(t_k)} \given \bs{W}(t_k)}+\E[\Big]{{\hat{T}_k}^2 \bm{\rho}^T \bs{1}\given \bs{W}(t_k)}.\label{equation:HOL WP}
\end{align}
First, we know
\begin{equation}
\E[\Big]{\bkt[\big]{{\hat{T}_k\bs{1}}}^T \bs{P}{\bs{W}(t_k)} \given \bs{W}(t_k)}=\E[\Big]{\hat{T}_k \bm{\rho}^T \bs{W}(t_k)\given \bs{W}(t_k)}.
\end{equation} 
For any $t\geq 0$, by the assumptions on inter-arrival times and service processes, we have ${V}_i(m)\leq V_{\max}$ for every queue $i$ and $m\geq 0$, and $\bs{S}(t)\leq S_{\max}\cdot\bs{1}$, regardless of the HOL waiting time at time $t$. Therefore, (\ref{equation:HOL VISPVIS}) is bounded as
\begin{align}
&\E[\Bigg]{\bkt[\bigg]{\sum_{t=t_{k}}^{t_k+\hat{T}_k-1}M(t)\bm{\delta}(t)}^T \bs{P}\bkt[\bigg]{\sum_{t=t_{k}}^{t_k+\hat{T}_k-1}M(t)\bm{\delta}(t)} \given \bs{W}(t_k)}\label{equation:HOL VISPVIS bounded 1}\\
& \leq V_{\max}^2 S_{\max}^2 \Tr{(\bs{P})}\E[\Big]{{\hat{T}_k}^2 \given \bs{W}(t_k)}.\label{equation:HOL VISPVIS bounded 2}
\end{align}
Note that $\bs{W}(t_k)$ only tells us when the HOL job arrived and therefore provides no information about the inter-arrival times. Hence, ${V}_i(m)$ is independent of $\bs{W}(t_k)$, for any queue $i$ and $m\geq 0$.
Besides, since $\bs{S}(t)$ is also independent of the waiting time $\bs{W}(t_k)$, we can rewrite (\ref{equation:HOL WPVIS}) as
\begin{align}
& \E[\Bigg]{\bkt[\bigg]{\sum_{t=t_{k}}^{t_k+\hat{T}_k-1}{M}(t) \bm{\delta}(t)}^T\bs{P}\bs{W}(t_k)\given \bs{W}(t_k)}\\
&= \E[\Big]{\bkt[\Big]{\bkt[\big]{\hat{T}_k -T_{s}} \bkt[\Big]{{\bs{I}(t_k)} \circ \bm{\rho}^{-1}}}^T\bs{P}{\bs{W}(t_k)\given \bs{W}(t_k)}}\\
& = \E[\Big]{\bkt[\big]{\hat{T}_k-T_{s}}\cdot\bs{I}(t_k)^T\bs{W}(t_k)\given \bs{W}(t_k)},
\end{align}
where $\bm{\rho}^{-1}:=\bkt[\big]{\rho_1^{-1},\cdots,\rho_N^{-1}}$ is the vector of the reciprocal of per-queue normalized traffic load. Since the system is assumed to be stabilizable, then there exists a $J$-dimensional nonnegative vector $\bm{\beta}^T=\bkt{\beta_1,\cdots,\beta_J}$ with $\bm{\beta}^T \bs{1}<1$ such that $\sum_{j=1}^{J}\beta_j \bs{I}^{(j)}\geq \bm{\rho}$. Under the W-BMW policy, it is guaranteed that $\bs{I}(t_k)^T \bs{W}(t_k)=\max_{j:1\leq j\leq J} (\bs{I}^{(j)})^T \bs{W}(t_k)$. Therefore, we know
\begin{align}
\bm{\rho}^T\bs{W}(t_k)&\leq \bkt[\bigg]{\sum_{j=1}^{J}\beta_j\bs{I}^{(j)}}^T \bs{W}(t_k)\label{equation:HOL rho to epsilon 1}\\
&\leq \bkt[\bigg]{\sum_{j=1}^{J}\beta_j}\bs{I}(t_k)^T \bs{W}(t_k)\label{equation:HOL rho to epsilon 2} \\
&= (1-\epsilon)\bs{I}(t_k)^T \bs{W}(t_k)\label{equation:HOL rho to epsilon 3}
\end{align}
where $\epsilon:=1-\bm{\beta}^T\bs{1}$ denotes the normalized distance between the arrival rate vector and the boundary of the capacity region.
From (\ref{equation:HOL WPVIS})-(\ref{equation:HOL rho to epsilon 3}), the conditional drift can be written as
\begin{align}
\Delta L_2(t_k&, t_k+\hat{T}_k)\leq \E[\Big]{2\cdot\bkt[\big]{-\epsilon\hat{T}_k + T_{s}}\bs{I}(t_k)^T \bs{W}(t_k)+\Phi_0\hat{T}_k^{2}\given \bs{W}(t_k)},\label{equation:HOL drift bound 1}
\end{align}
where $\Phi_0:=\bkt[\big]{V_{\max}^2 S_{\max}^2\Tr\bkt[\big]{\bs{P}}+ \bm{\rho}^T \bs{1}}$ does not depend on the waiting time vector or scheduling decisions. By Lemma \ref{lemma:HOL lower bound for Tk}, we also know that
\begin{equation}
T_k\geq C_1 \bkt[\Big]{{\bs{1}}^T\bs{W}(t_k)}^{1-\alpha},
\end{equation}
where $C_1=T_s/\bkt[\big]{NK\bkt[\big]{1+(1+T_s)S_{\max}V_{\max}}}$. Here, we need to discuss two possible cases:

\vspace{3mm}
\noindent {\bf Case 1}:  $\hat{G}(\bs{W}(t_k))\geq C_1 \bkt[\Big]{{\bs{1}}^T\bs{W}(t_k)}^{1-\alpha}$
\vspace{1mm}

\noindent We first have
\begin{align}
& \bkt[\Big]{{{\bs{1}}^T\bs{W}(t_k)}}^{\alpha_1}\geq C_1{\bkt[\Big]{{\bs{1}}^T\bs{W}(t_k)}^{1-\alpha}}
\end{align}
Therefore, by the assumption that $\alpha+\alpha_1 < 1$, we have
\begin{equation}
{\bs{1}}^T\bs{W}(t_k)\leq C_1^{\frac{-1}{1-\alpha-\alpha_1}}\label{equation:HOL case 1 queue bound}
\end{equation}
Hence, from (\ref{equation:HOL drift bound 1}) and (\ref{equation:HOL case 1 queue bound}), we know that the conditional drift between $t_k$ and $t_k+\hat{T}_k$ is bounded, i.e.
\begin{align}
\Delta L_2(t_k, t_k+\hat{T}_k)&\leq \E[\Big]{2T_s\cdot\bs{I}(t_k)^T \bs{W}(t_k)+\Phi_0\hat{T}_k^{2}\given \bs{W}(t_k)} \\
& \leq  2T_{s}\cdot\bs{I}(t_k)^T \bs{W}(t_k)+ {\Phi_0\hat{G}(\bs{W}(t_k))^2} \\
& \leq  2T_{s}\cdot\bs{1}^T \bs{W}(t_k)+\Phi_0\bkt[\Big]{{{\bs{1}}^T\bs{W}(t_k)}}^{2\alpha_1}\\
& \leq  2T_{s}\cdot{C_1}^{\frac{-1}{1-\alpha-\alpha_1}}+\Phi_0  {C_1}^{\frac{-2\alpha_1}{1-\alpha-\alpha_1}}< \infty.
\end{align} 
This also implies that the unconditional drift between $t_k$ and $t_k+\hat{T}_k$ is bounded, i.e.
\begin{equation}
\E[\Big]{L_2(t_k+\hat{T}_k)-L_2(t_k)} \leq 2T_{s}\cdot{C_1}^{\frac{-1}{1-\alpha-\alpha_1}}+\Phi_0  {C_1}^{\frac{-2\alpha_1}{1-\alpha-\alpha_1}}< \infty.
\end{equation}

\noindent {\bf Case 2}:  $\hat{G}(\bs{W}(t_k))<C_1 \bkt[\Big]{{\bs{1}}^T\bs{W}(t_k)}^{1-\alpha}$
\vspace{1mm}

\noindent The above condition implies that $\hat{T}_k=\hat{G}(\bs{W}(t_k))<T_k$. Therefore, (\ref{equation:HOL drift bound 1}) can then be written as
\begin{align}
\Delta L_2(t_k, t_k+\hat{T}_k)&\leq 2\bkt[\Big]{-\epsilon\hat{G}(\bs{W}(t_k)) + T_{s}}\bs{I}(t_k)^T \bs{W}(t_k)+ {\Phi_0\hat{G}(\bs{W}(t_k))^2} \\
& \leq -\frac{2\epsilon}{N} \bkt[\Big]{{{\bs{1}}^T\bs{W}(t_k)}}^{1+\alpha_1}+2T_{s}\bkt[\Big]{{{\bs{1}}^T\bs{W}(t_k)}}+\Phi_0 \bkt[\Big]{{{\bs{1}}^T\bs{W}(t_k)}}^{2\alpha_1}.\label{equation: HOL drift bound 1 in case2}
\end{align}
Since $-\frac{2\epsilon}{N} \bkt[\big]{{{\bs{1}}^T\bs{W}(t_k)}}^{1+\alpha_1}$ is the dominating term in (\ref{equation: HOL drift bound 1 in case2}), there must exist some constant $\Phi_1>0$ such that
\begin{equation}
\Delta L_2(t_k, t_k+\hat{T}_k) \leq \Phi_1 - \frac{\epsilon}{N}\bkt[\Big]{{{\bs{1}}^T\bs{W}(t_k)}}^{1+\alpha_1}.\label{equation:HOL case2 conditional drift}
\end{equation}
Moreover, we also know that
\begin{align}
\sum_{t=t_k}^{t_k+\hat{T}_k -1} \bs{1}^T \bs{W}(t)\leq \sum_{\tau=0}^{\hat{T}_k -1}\bs{1}^T \bkt[\Big]{\bs{W}(t_k)+\tau}\leq \hat{T}_k\cdot \bs{1}^T \bs{W}(t_k)+N\hat{T}_k^{2}.
\end{align}
By taking conditional expectation, we have
\begin{align}
\E[\Bigg]{\sum_{t=t_k}^{t_k+\hat{T}_k -1} \bs{1}^T \bs{W}(t)\sgiven \bs{W}(t_k)}&\leq \E[\bigg]{\hat{T}_k\cdot \bs{1}^T \bs{W}(t_k)+N{\hat{T}_k}^2 \sgiven \bs{W}(t_k)}\\
&= \bkt[\Big]{\bs{1}^T\bs{W}(t_k)}^{1+\alpha_1}+ N\bkt[\Big]{\bs{1}^T\bs{W}(t_k)}^{2\alpha_1}\\
&\leq (1+N)\bkt[\Big]{\bs{1}^T\bs{W}(t_k)}^{1+\alpha_1}.\label{equation:HOL bound of sum queue length}
\end{align}
The last inequality holds since $\alpha_1<1$. Therefore, based on (\ref{equation:HOL case2 conditional drift}) and (\ref{equation:HOL bound of sum queue length}), we obtain that
\begin{equation}
\E[\Big]{L_2(t_k+\hat{T}_k)-L_2(t_k)} \leq \Phi_1 - \frac{\epsilon}{N_2}\E[\Bigg]{\sum_{t=t_k}^{t_k+\hat{T}_k-1}{\bs{1}^T\bs{W}(t)}},\label{equation:case2 unconditional drift bound}
\end{equation}
where $N_2:=N(N+1)$.

Next, we consider the slot-by-slot conditional drift for any $t$ between $t_k+\hat{T}_k$ and $t_{k+1}$. Note that there is no switching between $t_k+\hat{T}_k$ and $t_{k+1}$. Therefore,
\begin{align}
\Delta L_2(t,t+1)&= \E[\Big]{\bs{W}(t+1)^T \bs{P} \bs{W}(t+1)-\bs{W}(t)^T \bs{P} \bs{W}(t) \given \bs{W}(t)}\\
&\leq 2\cdot \E[\Big]{\bkt[\Big]{\bs{1}-\bm{\delta}(t)}^{T}\bs{P}\bs{W}(t) \given \bs{W}(t)}\label{equation:HOL one-slot VISPW} \\ 
&\hspace{6pt} + \E[\Big]{\bkt[\Big]{\bs{1}-\bm{\delta}(t)}^T \bs{P}\bkt[\Big]{\bs{1}-\bm{\delta}(t)}\given \bs{W}(t)},\label{equation:HOL one-slot VISPVIS}
\end{align}
Similar to (\ref{equation:HOL VISPVIS bounded 1}) and (\ref{equation:HOL VISPVIS bounded 2}), we know 
\begin{equation}
\E[\Big]{\bkt[\Big]{\bs{1}-\bm{\delta}(t)}^T \bs{P}\bkt[\Big]{\bs{1}-\bm{\delta}(t)}\given \bs{W}(t)}\leq \bkt[\big]{{S_{\max}^2 V_{\max}^2}+1}\Tr({\bs{P}})=\Phi_0.
\end{equation}
Besides, since $\bs{V}(t)$ and $\bs{S}(t)$ are independent of $\bs{W}(t)$, (\ref{equation:HOL one-slot VISPW}) can be written as
\begin{align}
&  \E[\Big]{\bkt[\Big]{\bs{1}-\bm{\delta}(t)}^{T}\bs{P}\bs{W}(t) \given \bs{W}(t)}= {\bkt[\big]{\bm{\rho}^T-\bs{I}(t)^T}\bs{W}(t)}.
\end{align}
Hence, we have
\begin{equation}
\Delta L_2(t,t+1) \leq 2\cdot {\bkt[\big]{\bm{\rho}^T-\bs{I}(t)^T}\bs{W}(t)}+\Phi_0.\label{equation:HOL one-step drift}
\end{equation}
Under the W-BMW policy, at time $t$ we must have
\begin{align}
\bs{I}(t)^T \bs{W}(t)&\geq \bkt[\big]{\bs{I}^{(j)}}^T \bs{W}(t)-\frac{\bs{I}(t)^T\bs{W}(t)}{G(\bs{W}(t_k))}, \hspace{12pt}\forall j=1,...,N
\end{align}
Along with (\ref{equation:HOL rho to epsilon 1})-(\ref{equation:HOL rho to epsilon 3}), we then have
\begin{align}
(1-\epsilon)\bs{I}(t)^T \bs{W}(t)&=\sum_{j=1}^{J}\beta_j \bs{I}(t)^T \bs{W}(t)\\
& \geq \sum_{j=1}^{J} \beta_j \bkt[\big]{\bs{I}^{(j)}}^T \bs{W}(t)-\sum_{j=1}^{J}\beta_j \frac{\bs{I}^T(t)\bs{W}(t)}{G(\bs{W}(t_k))}\\
& \geq \bm{\rho}^{T} \bs{W}(t)- (1-\epsilon)\frac{\bs{I}^T(t)\bs{W}(t)}{G(\bs{W}(t_k))}.
\end{align}
Therefore, (\ref{equation:HOL one-step drift}) can be written as
\begin{align}
\Delta L_2(t,t+1)& \leq 2 \cdot \bkt[\bigg]{-\epsilon \bs{I}(t)^T\bs{W}(t)+(1-\epsilon)\frac{\bs{I}(t)^T\bs{W}(t)}{G(\bs{W}(t_k))}}+\Phi_0\label{equation:HOL one-step drift 1}\\
&\leq 2 \cdot  \bkt[\bigg]{-\frac{\epsilon}{N} \bkt[\big]{\bs{1}(t)^T\bs{W}(t)}+(1-\epsilon)\frac{\bs{I}(t)^T\bs{W}(t)}{G(\bs{W}(t_k))}}+\Phi_0\label{equation:HOL one-step drift 2} \\
&\leq  2 \cdot  \bkt[\bigg]{-\frac{\epsilon}{N} \bkt[\big]{\bs{1}(t)^T\bs{W}(t)}+(1-\epsilon)\bkt[\Big]{{\bs{I}(t)^T\bs{W}(t)}}^{1-\alpha}}+\Phi_0\label{equation:HOL one-step drift 3}\\
&\leq  2 \cdot  \bkt[\bigg]{-\frac{\epsilon}{N} \bkt[\big]{\bs{1}(t)^T\bs{W}(t)}+(1-\epsilon)\bkt[\Big]{{\bs{1}(t)^T\bs{W}(t)}}^{1-\alpha}}+\Phi_0\label{equation:HOL one-step drift 4}
\end{align}
Since $\alpha>0$, then $-\frac{\epsilon}{N} \bs{1}(t)^T\bs{W}(t)$ is the dominating term in (\ref{equation:HOL one-step drift 4}). In other words, there must exist some constant $\Phi_2>0$ such that
\begin{align}
\Delta L_2(t,t+1)& \leq \Phi_2 - \frac{\epsilon}{N} \bkt[\Big]{\bs{1}(t)^T\bs{W}(t)}.
\end{align}
Hence, for any $t\in (t_k+\hat{T}_{k}, t_{k+1})$, we know
\begin{equation}
\E[\Big]{L(t+1)-L(t)}\leq \Phi_2- \frac{\epsilon}{N} \E[\Big]{{\bs{1}(t)^T\bs{W}(t)}}.\label{equation:HOL one-step unconditional drift bound}
\end{equation}
Now, we consider any large $\T$ and let $K_{\T}$ be the number of intervals in $[0,\T)$. Since each interval lasts for at least one slot, then $K_{\T}\leq \T$. The unconditional drift in $[0,\T)$
\begin{align}
\E[\big]{L_{2}(\T)-L_{2}(0)} &= \sum_{k=0}^{K_{\T}-1} \E[\big]{L_2(t_{k+1})-L_2(t_k)}\\
& = \sum_{k=0}^{K_{\T}-1} \bkt[\Bigg]{\E[\big]{L_2(t_k+\hat{T}_k)-L_{2}(t_k)}+\sum_{\tau=t_k+\hat{T}_k}^{t_{k+1}-1}\E[\big]{L_2(\tau+1)-L_2(\tau)}  }\\
& \leq K_{\T}\Phi_1 - \frac{\epsilon}{N_2}\sum_{k=1}^{K_{\T}-1}\bkt[\Bigg]{\E[\Bigg]{\sum_{t=t_k}^{t_k+\hat{T}_k-1}{\bs{1}^T\bs{W}(t)}}} \\
&\hspace{60pt}+ \Phi_2\T -  \frac{\epsilon}{N} \sum_{k=0}^{K_{\T}-1} \bkt[\Bigg]{\E[\Bigg]{\sum_{\tau=t_k+\hat{T}_k}^{t_{k+1}-1} {\bs{1}^T\bs{W}(t)}}}\\
&\leq  K_{\T}\Phi_1 + \Phi_2\T  - \frac{\epsilon}{N_2}\bkt[\Bigg]{\E[\Bigg]{\sum_{t=0}^{\T-1}{\bs{1}^T\bs{W}(t)}}}.
\end{align}
Since  $L_2(0)=0$ and $L_2(t)$ is nonnegative regardless of $t$, by letting $\T\rightarrow \infty$, we have
\begin{equation}
\lim_{\T\rightarrow \infty}\frac{\E[\Big]{\sum_{t=0}^{\T-1}{\bs{1}^T\bs{W}(t)}}}{\T}\leq \frac{N_2(\Phi_1+\Phi_2)}{\epsilon}
\end{equation}
Moreover, for any queue $i$ at time $t$, given the information of $W_i(t)$, we also know
\begin{align}
\E[\big]{Q_i(t)\given W_i(t)} = \begin{cases}
\lambda_i W_i(t) + 1&, \hspace{6pt} W_i(t) > 0\\
                  0&, \hspace{6pt} W_i(t) = 0
\end{cases}
\label{equation:Qi and Wi}
\end{align}
By taking the unconditional expectation of (\ref{equation:Qi and Wi}), for any $t$ we have
\begin{equation}
\E[\big]{{Q_i}(t)}\leq \lambda_i {\E[\big]{{W_i}(t)}+1}, \hspace{12pt} \forall i\in \mathcal{N}.
\end{equation} 
Hence, we can conclude that 
\begin{equation}
\lim_{\T\rightarrow \infty}\frac{\E[\Big]{\sum_{t=0}^{\T-1}{\bs{1}^T\bs{Q}(t)}}}{\T}\leq  \lambda_{\max} \bkt[\bigg]{\frac{N_2(\Phi_1+\Phi_2)}{\epsilon}}+N< \infty. 
\end{equation}
Hence, the system is strongly stable under the W-BMW policy. $\qed$

\bibliographystyle{abbrv}
\bibliography{reference}

\begin{thebibliography}{10}

\bibitem{Hanbali2008}
A.~Al~Hanbali, R.~de~Haan, R.~J. Boucherie, and J.-K. van Ommeren.
\newblock A tandem queueing model for delay analysis in disconnected ad hoc
  networks.
\newblock In {\em International Conference on Analytical and Stochastic
  Modeling Techniques and Applications}, pages 189--205, 2008.

\bibitem{Allsop1972}
R.~E. Allsop.
\newblock Estimating the traffic capacity of a signalized road junction.
\newblock {\em Transportation Research}, 6(3):245--255, 1972.

\bibitem{Armony2003}
M.~Armony and N.~Bambos.
\newblock Queueing dynamics and maximal throughput scheduling in switched
  processing systems.
\newblock {\em Queueing systems}, 44(3):209--252, 2003.

\bibitem{Celik2016}
G.~Celik, S.~C. Borst, P.~A. Whiting, and E.~Modiano.
\newblock {Dynamic Scheduling with Reconfiguration Delays}.
\newblock {\em Queueing Syst. Theory Appl.}, 83(1-2):87--129, Jun 2016.

\bibitem{Celik2012}
G.~D. Celik and E.~Modiano.
\newblock Scheduling in networks with time-varying channels and reconfiguration
  delay.
\newblock In {\em Proc. IEEE INFOCOM}, pages 990--998, March 2012.

\bibitem{Chan2016}
C.~W. Chan, M.~Armony, and N.~Bambos.
\newblock {Maximum Weight Matching with Hysteresis in Overloaded Queues with
  Setups}.
\newblock {\em Queueing Syst. Theory Appl.}, 82(3-4):315--351, Apr. 2016.

\bibitem{Chen2001}
H.~Chen and D.~D. Yao.
\newblock {\em {Fundamentals of queueing networks: Performance, asymptotics,
  and optimization}}, volume~46.
\newblock Springer, 2001.

\bibitem{David2007}
F.~M. David, J.~C. Carlyle, and R.~H. Campbell.
\newblock {Context Switch Overheads for Linux on ARM Platforms}.
\newblock In {\em Proceedings of the 2007 Workshop on Experimental Computer
  Science}, ExpCS '07, 2007.

\bibitem{Eryilmaz2012}
A.~Eryilmaz and R.~Srikant.
\newblock Asymptotically tight steady-state queue length bounds implied by
  drift conditions.
\newblock {\em Queueing Systems}, 72(3-4):311--359, 2012.

\bibitem{Eryilmaz2005}
A.~Eryilmaz, R.~Srikant, and J.~R. Perkins.
\newblock Stable scheduling policies for fading wireless channels.
\newblock {\em IEEE/ACM Transactions on Networking}, 13(2):411--424, April
  2005.

\bibitem{Fan2008}
Z.~Fan.
\newblock {Wireless networking with directional antennas for 60 GHz systems}.
\newblock In {\em 14th European Wireless Conference}, pages 1--7, June 2008.

\bibitem{Ghavami2012}
A.~Ghavami, K.~Kar, and S.~Ukkusuri.
\newblock Delay analysis of signal control policies for an isolated
  intersection.
\newblock In {\em 15th International IEEE Conference on Intelligent
  Transportation Systems}, pages 397--402, 2012.

\bibitem{Gupta2010}
G.~R. Gupta and N.~B. Shroff.
\newblock Delay analysis for wireless networks with single hop traffic and
  general interference constraints.
\newblock {\em IEEE/ACM Transactions on Networking}, 18(2):393--405, 2010.

\bibitem{Hung2008}
Y.-C. Hung and C.-C. Chang.
\newblock Dynamic scheduling for switched processing systems with substantial
  service-mode switching times.
\newblock {\em Queueing systems}, 60(1-2):87--109, 2008.

\bibitem{Kar2012}
K.~Kar, S.~Sarkar, A.~Ghavami, and X.~Luo.
\newblock Delay guarantees for throughput-optimal wireless link scheduling.
\newblock {\em IEEE Transactions on Automatic Control}, 57(11):2906--2911,
  2012.

\bibitem{Le2009}
L.~B. Le, K.~Jagannathan, and E.~Modiano.
\newblock Delay analysis of maximum weight scheduling in wireless ad hoc
  networks.
\newblock In {\em 43rd Annual Conference on Information Sciences and Systems
  (CISS)}, pages 389--394, 2009.

\bibitem{LeVine2015}
S.~Le~Vine, A.~Zolfaghari, and J.~Polak.
\newblock Autonomous cars: the tension between occupant experience and
  intersection capacity.
\newblock {\em Transportation Research Part C: Emerging Technologies},
  52:1--14, 2015.

\bibitem{Levy1990}
H.~Levy and M.~Sidi.
\newblock Polling systems: applications, modeling, and optimization.
\newblock {\em IEEE Transactions on Communications}, 38(10):1750--1760, Oct
  1990.

\bibitem{Liu2015}
X.~Liu, K.~Ma, and P.~R. Kumar.
\newblock Towards provably safe mixed transportation systems with human-driven
  and automated vehicles.
\newblock In {\em 2015 54th IEEE Conference on Decision and Control (CDC)},
  pages 4688--4694, Dec 2015.

\bibitem{Mcgarry2008}
M.~P. Mcgarry, M.~Reisslein, and M.~Maier.
\newblock Ethernet passive optical network architectures and dynamic bandwidth
  allocation algorithms.
\newblock {\em IEEE Communications Surveys Tutorials}, 10(3):46--60, 2008.

\bibitem{Navda2007}
V.~Navda, A.~P. Subramanian, K.~Dhanasekaran, A.~Timm-Giel, and S.~Das.
\newblock Mobisteer: Using steerable beam directional antenna for vehicular
  network access.
\newblock In {\em Proceedings of the 5th International Conference on Mobile
  Systems, Applications and Services}, MobiSys '07, pages 192--205, 2007.

\bibitem{Neely2009}
M.~J. Neely.
\newblock Delay analysis for maximal scheduling with flow control in wireless
  networks with bursty traffic.
\newblock {\em IEEE/ACM Transactions on Networking}, 17(4):1146--1159, 2009.

\bibitem{Neely2010}
M.~J. Neely.
\newblock Stability and capacity regions or discrete time queueing networks.
\newblock {\em arXiv preprint arXiv:1003.3396}, 2010.

\bibitem{Nitsche2014}
T.~Nitsche, C.~Cordeiro, A.~B. Flores, E.~W. Knightly, E.~Perahia, and J.~C.
  Widmer.
\newblock {IEEE 802.11ad: directional 60 GHz communication for
  multi-Gigabit-per-second Wi-Fi [Invited Paper]}.
\newblock {\em IEEE Communications Magazine}, 52(12):132--141, December 2014.

\bibitem{Niu2015}
Y.~Niu, Y.~Li, D.~Jin, L.~Su, and A.~V. Vasilakos.
\newblock {A survey of millimeter wave communications (mmWave) for 5G:
  opportunities and challenges}.
\newblock {\em Wireless Networks}, 21(8):2657--2676, 2015.

\bibitem{Perkins1989}
J.~R. Perkins and P.~R. Kumark.
\newblock Stable, distributed, real-time scheduling of flexible
  manufacturing/assembly/diassembly systems.
\newblock {\em IEEE Transactions on Automatic Control}, 34(2):139--148, Feb
  1989.

\bibitem{Sharifnia1991}
A.~Sharifnia, M.~Caramanis, and S.~B. Gershwin.
\newblock Dynamic setup scheduling and flow control in manufacturing systems.
\newblock {\em Discrete Event Dynamic Systems}, 1(2):149--175, 1991.

\bibitem{Singh2011}
S.~Singh, R.~Mudumbai, and U.~Madhow.
\newblock {Interference Analysis for Highly Directional 60-GHz Mesh Networks:
  The Case for Rethinking Medium Access Control}.
\newblock {\em IEEE/ACM Transactions on Networking}, 19(5):1513--1527, Oct
  2011.

\bibitem{Takagi1988}
H.~Takagi.
\newblock {Queuing Analysis of Polling Models}.
\newblock {\em ACM Comput. Surv.}, 20(1):5--28, Mar 1988.

\bibitem{Takagi1997}
H.~Takagi.
\newblock Queueing analysis of polling models: progress in 1990-1994.
\newblock {\em Frontiers In Queueing: Models and applications in science and
  engineering}, 7:119, 1997.

\bibitem{Tassiulas1992}
L.~Tassiulas and A.~Ephremides.
\newblock Stability properties of constrained queueing systems and scheduling
  policies for maximum throughput in multihop radio networks.
\newblock {\em IEEE Transactions on Automatic Control}, 37(12):1936--1948, Dec
  1992.

\bibitem{Tassiulas1993}
L.~Tassiulas and A.~Ephremides.
\newblock Dynamic server allocation to parallel queues with randomly varying
  connectivity.
\newblock {\em IEEE Transactions on Information Theory}, 39(2):466--478, Mar
  1993.

\bibitem{Varaiya2013}
P.~Varaiya.
\newblock Max pressure control of a network of signalized intersections.
\newblock {\em {Transportation Research Part C: Emerging Technologies}},
  36:177--195, 2013.

\bibitem{Vishnevskii2006}
V.~Vishnevskii and O.~Semenova.
\newblock Mathematical methods to study the polling systems.
\newblock {\em Automation and Remote Control}, 67(2):173--220, 2006.

\bibitem{Wongpiromsarn2012}
T.~Wongpiromsarn, T.~Uthaicharoenpong, Y.~Wang, E.~Frazzoli, and D.~Wang.
\newblock {Distributed traffic signal control for maximum network throughput}.
\newblock In {\em Proc. IEEE Conference on Intelligent Transportation Systems},
  pages 588--595, Sept 2012.

\end{thebibliography}
\end{document}